\newtheorem{definition}{Definition}
\newtheorem{theorem}{Theorem}
\newtheorem{lemma}{Lemma}
\newtheorem{corollary}{Corollary}
\newtheorem{proposition}{Proposition}
\newtheorem{example}{Example}
\title{Looking at Mean-Payoff through Foggy Windows}
\author{
	Paul Hunter\thanks{Authors supported by the ERC inVEST (279499)
	project.} \and
	Guillermo A. P\'{e}rez\thanks{Author supported by F.R.S.-FNRS
	fellowship.} \and
	Jean-Fran\c{c}ois Raskin$^\ast$\\
	\texttt{\{phunter,gperezme,jraskin\}@ulb.ac.be}\\
Universit\'{e} Libre de Bruxelles -- Brussels, Belgium}
\begin{document}

\maketitle

\newcommand{\mpinf}{\underline{\textsf{MP}}}
\newcommand{\mpsup}{\overline{\textsf{MP}}}
\newcommand{\lmax}{\ensuremath{ {l_{\text{max}}} }}
\newcommand{\mpobjinf}{\textsf{MPInf}}
\newcommand{\mpobjsup}{\textsf{MPSup}}
\newcommand{\gw}{\textsf{GW}}
\newcommand{\dirfix}{\textsf{DirFix}}
\newcommand{\udirbnd}{\textsf{UDirBnd}}
\newcommand{\dirbnd}{\textsf{DirBnd}}
\newcommand{\fix}{\textsf{Fix}}
\newcommand{\ufix}{\textsf{UFix}}
\newcommand{\ubnd}{\textsf{UBnd}}
\newcommand{\bnd}{\textsf{Bnd}}

\begin{abstract}
	Mean-payoff games (MPGs) are infinite duration two-player zero-sum games
	played on weighted graphs. Under the hypothesis of perfect information,
	they admit memoryless optimal strategies for both players and can be
	solved in $\NP \cap \coNP$. MPGs are suitable quantitative models for
	open reactive systems. However, in this context the assumption of
	perfect information is not always realistic. For the partial-observation
	case, the problem that asks if the first player has an observation-based
	winning strategy that enforces a given threshold on the mean-payoff, is
	undecidable.
	In this paper, we study the window mean-payoff objectives that were
	introduced recently as an alternative to the classical mean-payoff
	objectives.  We show that, in sharp contrast to the classical
	mean-payoff objectives, some of the window mean-payoff objectives are
	decidable in games with partial-observation.
\end{abstract}

\section{Introduction}

\emph{Mean-payoff games} (MPGs)~\cite{em79} are infinite duration, two-player,
zero-sum games played on weighted graphs, useful for modelling reactive systems
with quantitative objectives and designing algorithms to synthesize controllers
for such systems~\cite{ChakrabartiAHS03}. 
Like other verification games played
on graphs, two players move a token around the graph for an infinite number of
steps. One of the players selects a label, after which the second chooses an
edge with this label. The token is then moved along the selected edge.
This infinite
interaction between the two players results in an infinite path in the graph.
The objective of Player 1 is to maximize the limiting average payoff of the
edges (defined by the weights that annotate them) traversed in this infinite
path, while Player 2 tries to minimize this average. It has been shown
in~\cite{em79,bsv04} that both players in an MPG can play optimally using
memoryless strategies, and as a consequence, those games are known to be
solvable in $\NP \cap \coNP$, and although pseudo-polynomial-time algorithms to
solve these games are known~\cite{zp96,bcdgr11}, the question of whether they
can be solved in polynomial-time is an important open question.

In the version of MPG described above, the game is of \emph{perfect information}:
both players have complete knowledge of the history of the play up to the
current position of the token. For many applications such as controller
synthesis, it is often more natural to assume that players have only partial
knowledge of the current state of the game. In practice,
players may model processes with private variables that other players
(processes) may not see, or controllers that acquire information about their
environment using sensors with bounded precision, etc. Unfortunately, it has
been shown in~\cite{ddgrt10} that MPGs games with \emph{partial-observation} are
undecidable.

\emph{Window mean-payoff objectives} (WMP objectives) were recently introduced
in~\cite{cdrr13} as an alternative to the classical MP objectives. In a
WMP objective instead of considering the long-run average along
the whole play, payoffs are considered over a local bounded window \emph{sliding}
along the play. The objective is then to make sure that the average payoff is at
least zero over every window. The WMP objectives enjoy several nice
properties. First, in contrast to classical MP games, we have a
polynomial-time algorithm for determining WMP games. Second, they
can be considered as approximation of the classical MP objectives in
the following sense: $(i)$ they are a \emph{strengthening} of the MP
objective, i.e. winning for the WMP objective implies winning for
the MP objective, $(ii)$ if a (finite memory) strategy is forcing a
MP with value $\varepsilon > 0$ then that strategy is also enforcing
the WMP objective for a window size that is bounded by a function
of the game and strategy memory sizes. 

In this paper we consider the extension of WMP objectives to
games with partial-observation.  We show that, in sharp contrast with classical
MP objectives, some of the WMP objectives are decidable for such
games. As in~\cite{cdrr13}, we consider several variants of the window
MP objectives.  For all objectives, we provide complete complexity results and optimal algorithms. More precisely, our main contributions are as follows:
\begin{itemize}
\item First, we consider a definition in which the window size is fixed and the
	sliding window is started at the initial move of the game, this is called the
	\emph{direct window} objective. For this definition we give an optimal
	\EXP-time algorithm (Theorem~\ref{thm:dirfix-exp-complete}) in the form
	of a reduction to a \emph{safety game}.  Additionally, we show that this
	safety game has a nice structure that induces a natural partial order on
	game positions. In turn this partial order can be used to obtain a
	{\em symbolic algorithm} based on the antichain approach~\cite{DoyenR10}.
	This shows that WMP objectives allow us not only to
	recover decidability but they also lead to games that have the potential
	to be solved efficiently in practice. The antichain approach has already
	been applied and implemented with success for LTL
	synthesis~\cite{bbfjr12}, omega-regular games with partial
	observation~\cite{BerwangerCWDH09}, and language inclusion between
	non-determinisitic B\"uchi automata~\cite{DR09}. 
\item Second, we consider two natural prefix-independent definitions for the
	window objectives, the \emph{(uniform) fixed window} objectives.  We
	also give optimal \EXP-time algorithms for these two definitions
	(Theorem~\ref{thm:fix-exp-complete} and
	Theorem~\ref{thm:ufix-exp-complete}), when weights are polynomially
	bounded in the size of the game arena. For these objectives, we show
	that the sets of good abstract plays (sequences of observations) of the
	game with partial observation, for the two definitions, form regular
	languages whose complements can be recognized by non-deterministic
	B\"uchi automata of pseudo-polynomial size
	(Proposition~\ref{pro:fix-lang} and Proposition~\ref{pro:ufix-lang}).
	These automata can then be turned into deterministic parity automata
	that can be used as observers to transform the game of
	partial-observation into a game of perfect information with a parity
	objective. 
\item Finally, we show that, when the size of the window is not fixed but rather
	left as a parameter, then for all the objectives that we consider the
	decision problems are undecidable
	(Theorem~\ref{thm:bwmp-undec}).
\end{itemize}


\section{Preliminaries}

\paragraph*{Weighted game arenas.}
A \emph{weighted game arena with partial-observation} (WGA, for brevity) is a
tuple $G = \langle Q, q_I, \Sigma, \Delta, w, Obs \rangle$, where $Q$ is a
finite set of states, $q_I \in Q$ is the initial state, $\Sigma$ is a finite set
of actions, $\Delta \subseteq Q \times \Sigma \times Q$ is the transition
relation, $w:\Delta \to \mathbb{Z}$ is the weight function, and $Obs \in
\partition(Q)$ is a set of observations containing $\{q_I\}$.
Let $W = \max\{ |w(t)| : t \in \Delta\}$. We assume $\Delta$ is total, i.e. for
every $(q,\sigma) \in Q \times \Sigma$ there exists $q' \in Q$ such that
$(q,\sigma,q') \in \Delta$. If every element of $Obs$ is a singleton, then we
say $G$ is a \emph{WGA with perfect information} and if $|Obs| = 1$ we say $G$
is \emph{blind}. For simplicity, we denote by $\post_\sigma(s) = \{q' \in Q \st
\exists q \in s : (q, \sigma, q') \in \Delta \}$ the set of $\sigma$-successors
of a set of states $s \subseteq Q$. 

In this work, unless explicitly stated otherwise, we depict states from a WGA
as circles and transitions as arrows labelled by an action-weight pair:
$\sigma,x \in \Sigma \times \{-W, \dots, W\}$. Observations are represented by
dashed boxes and colors, where states with the same color correspond to the same
observation.

\paragraph*{Abstract and concrete paths.} A \emph{concrete path} in an WGA is a
sequence $q_0 \sigma_0 q_1 \sigma_1 \dots$ where for all $i \ge 0$ we have $q_i
\in Q$, $\sigma_i \in \Sigma$ and $(q_i,\sigma_i,q_{i+1}) \in \Delta$. An
\emph{abstract path} is a sequence $o_0 \sigma_0 o_1 \sigma_1 \dots$ where $o_i
\in Obs$, $\sigma_i \in \Sigma$ and such that there is a concrete path $q_0
\sigma_0 q_1 \sigma_1 \dots$ for which $q_i \in o_i$, for all $i$.  Given an
abstract path $\psi$, let $\gamma(\psi)$ be the set of concrete paths that agree
with the observation and action sequence. Formally $\gamma(\psi) = \{ q_0
\sigma_0 q_1 \sigma_1 \dots \st \forall i \ge 0 : q_i \in o_i \text{
and } (q_i, \sigma, q_{i+1}) \in \Delta\}$.  Also, given abstract (respectively
concrete) path $\rho = o_0 \sigma_0 \dots$ and integers $k,l$ we define
$\pi[k..l] = o_k\dots o_l$, $\pi[..k] = \pi[0..k]$, and $\pi[l..] = o_l
\sigma_l o_{l+1}\dots$.

Given a concrete path $\pi = q_0\sigma_0 q_1 \sigma_1 \dots$, the \emph{payoff} 
up to the $(n+1)$-th element is given by
\[
	w(\pi[..n]) = \sum\limits_{i=0}^{n-1} w(q_i, \sigma_i, q_{i+1}).
\]
If $\pi$ is infinite, we define two \emph{mean-payoff} values $\mpinf$
and $\mpsup$ as:

\begin{minipage}{0.45\linewidth}
\begin{equation*}
\label{eqn:payoff}
	\mpinf(\pi) = \liminf\limits_{n \rightarrow \infty} \frac{1}{n}
	w(\pi[..n])
\end{equation*}
\end{minipage}
\hspace{0.2cm}
\begin{minipage}{0.45\linewidth}
\begin{equation*}
\label{eqn:mean-payoff}
	\mpsup(\pi) = \limsup\limits_{n \rightarrow \infty} \frac{1}{n}
	w(\pi[..n])
\end{equation*}
\end{minipage}

\paragraph*{Plays and strategies.}
A play in a WGA $G$ is an infinite abstract path
starting at $o_I \in Obs$ where $q_I \in o_I$.  Denote by $\plays(G)$ the set of
all plays and by $\prefs(G)$ the set of all finite prefixes of such plays ending
in an observation. Let $\gamma(\plays(G))$ be the set of concrete paths of all
plays in the game, and $\gamma(\prefs(G))$ be the set of all finite prefixes of
all concrete paths. 

An \emph{observation-based strategy for \eve} is a function from finite prefixes
of plays to actions, i.e. $\lambda_\exists : \prefs(G) \to \Sigma$. A play $\psi
= o_0 \sigma_0 o_1 \sigma_1 \dots$ is \emph{consistent} with $\lambda_\exists$
if $\sigma_i = \lambda_\exists(\psi[..i])$ for all $i$. We say an
observation-based strategy for \eve $\lambda_{\exists}$ has \emph{memory $\mu$} if
there is a set $M$ with $|M|=\mu$, an element $m_0 \in M$, and functions
$\alpha_u:M \times Obs \to M$ and $\alpha_o:M \times Obs \to \Sigma$ such that
for any play prefix $\rho = o_0 \sigma_0 \dots o_n$ we have
$\lambda_{\exists}(\rho) = \alpha_o(m_n,o_n)$, where $m_n$ is defined
inductively by $m_{i+1} = \alpha_u(m_i,o_i)$ for $i\geq 0$.

\paragraph*{Objectives.}

An \emph{objective} for a WGA $G$ is a set $V_G$ of plays, i.e. $V_G \subseteq
\plays(G)$. We say plays in $V_G$ are \emph{winning for \eve}. Conversely, all
plays not in $V_G$ are \emph{winning for \adam}. We refer to a WGA with a fixed
objective as a game. Having fixed a game, we say a strategy $\lambda$ is
\emph{winning} for a player if all plays consistent with $\lambda$ are winning
for that player. Finally, we say that a player \emph{wins} a game if (s)he has a
winning strategy. We write $V$ instead of $V_G$ if $G$ is clear from the
context.

Given WGA $G$ and a threshold $\nu \in \mathbb{Q}$, the \emph{mean-payoff} (MP)
objectives $\mpobjsup_G(\nu) = \{ \psi \in \plays(G) \st \forall \pi \in
\gamma(\psi) : \mpsup(\pi) \ge \nu \}$ and $\mpobjinf_G(\nu) = \{ \psi \in
\plays(G) \st \forall \pi \in \gamma(\psi) : \mpinf(\pi) \ge \nu \}$
require the mean-payoff value be at least $\nu$. We omit the subscript
in the objective names when the WGA is clear from the context. Let $\nu
= \frac{a}{b}$, $w'$ be a weight function mapping $t \in \Delta$ to $b
\cdot w(t) - a$, for all such $t$, and $G'$ be the WGA resulting from
replacing $w'$ in $G$ for $w$. We note that \eve wins the
$\mpobjsup_{G'}(0)$ (respectively, $\mpobjinf_{G'}(0)$) objective if and
only if she wins $\mpobjsup_G(\nu)$ (resp., $\mpobjinf_G(\nu)$).

\section{Window Mean-payoff Objectives}
In what follows we recall the definitions of the \emph{window mean-payoff} (WMP)
objectives introduced in~\cite{cdrr13} and adapt them to the partial-observation
setting. For the classical MP objectives \eve is required to ensure the long-run
average of all concretizations of the play is at least $\nu$. WMP objectives
correspond to conditions which are sufficient for this to be the case. All of
them use as a main ingredient the concept of concrete paths being ``good''. We
formalize this notion below.

Given $i \ge 0$ and \emph{window size bound} $\lmax \in \mathbb{N}_0 =
\mathbb{N} \setminus \{0\}$, let the set of concrete paths $\chi$ with 
the \emph{good window} property be
\begin{equation*}
	\gw(\nu, i, \lmax) = \{ \chi \st \exists j \le \lmax :
		\frac{w(\chi[i..(i + j)])}{j} \ge \nu \}.
\end{equation*}
As in~\cite{cdrr13}, we assume that the value of $\lmax$ is polynomially
bounded by the size of the arena.

For the first of the WMP objectives \eve is required to ensure that all
suffixes of all concretizations of the play can be split into concrete paths of
length at most $\lmax$ and average weight at least $\nu$. The MP objectives are
known to be prefix-independent, therefore a prefix-independent version of this first
objective is a natural objective to consider as well. We study two such
candidates. One which asks of \eve that there is some $i$ such that all
suffixes -- after $i$ -- of all concretizations of the play can be split in the
same way as before. This is quite restrictive since the $i$ is \emph{uniform}
for all concretizations of the play. The second prefix-independent version of
the objective we consider allows for non-uniformity.

Formally, the \emph{fixed window mean-payoff} (FWMP)
objectives for a given WGA and threshold $\nu \in \mathbb{Q}$ are defined
below. For convenience we denote by $\psi$ plays from $\plays(G)$ and concrete
plays by $\pi$, i.e. elements of $\gamma(\plays(G))$.
\begin{align*}
	\dirfix(\nu, \lmax) &= \{ \psi \st \forall \pi \in
		\gamma(\psi), \forall i \ge 0: \pi \in \gw(\nu, i, \lmax) \} \\
	\ufix(\nu, \lmax) &= \{ \psi \st \exists i \ge 0, \forall \pi \in 
		\gamma(\psi), \forall j \ge i : \pi \in \gw(\nu, j, \lmax) \} \\
	\fix(\nu, \lmax) &= \{ \psi \st \forall \pi \in \gamma(\psi), \exists i
		\ge 0, \forall j \ge i: \pi \in \gw(\nu, j, \lmax) \}
\end{align*}

For the FWMP objectives, we consider $\lmax$ to be a value that is given as
input. Another natural question that arises is whether we can remove this input
and consider an even weaker objective in which one asks if there exists an
$\lmax$. This is captured in the definition of the 
\emph{bounded window mean-payoff} (BWMP) objectives which are defined for a given
threshold $\nu \in \mathbb{Q}$.
\begin{align*}
	\udirbnd(\nu) &= \{ \psi \st \exists \lmax \in \mathbb{N}_0,\forall \pi
		\in \gamma(\psi), \forall i \ge 0: \pi \in \gw(\nu, i, \lmax) \}
		\\
	\dirbnd(\nu) &= \{ \psi \st \forall \pi \in \gamma(\psi), \exists \lmax
		\in \mathbb{N}_0, \forall i \ge 0: \pi \in \gw(\nu, i, \lmax) \}
		\\
	\ubnd(\nu) &= \{ \psi \st \exists \lmax \in \mathbb{N}_0, \exists i \ge 0,
\forall \pi \in \gamma(\psi), \forall j \ge i: \pi \in \gw(\nu, j, \lmax) \} \\
	\bnd(\nu) &= \{ \psi \st \forall \pi \in \gamma(\psi), \exists \lmax
		\in \mathbb{N}_0, \exists i \ge 0, \forall j \ge i: \pi \in
		\gw(\nu, j, \lmax) \}
\end{align*}

As with the mean-payoff objectives we can assume, without loss of generality,
that $\nu = 0$. Henceforth, we omit $\nu$.

\subsection{Relations among objectives}
Figure~\ref{fig:relations} gives an overview of the relative strengths of each
of the objectives and how they relate to the mean-payoff objective.  
The strictness, in general, of most inclusions was established in~\cite{cdrr13}, 
and Figure~\ref{fig:fix-neq-ufix} provides an example for the remaining case 
between $\fix$ and $\ufix$. 

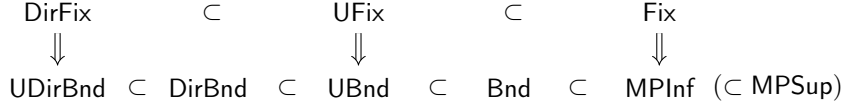
\begin{figure}
\begin{center}
\begin{tikzpicture}[x=2cm,y=1cm, implies/.style={double,double equal sign distance,-implies}]

	\node(A)at(0,1){$\dirfix$};
	\node(B)at(2,1){$\ufix$};
	\node(C)at(4,1){$\fix$};

	\node(D)at(0,0){$\udirbnd$};
	\node(E)at(1,0){$\dirbnd$};
	\node(F)at(2,0){$\ubnd$};
	\node(G)at(3,0){$\bnd$};
	\node(H)at(4,0){$\mpobjinf$};
	\node[anchor=west](I)at(H.east){($\subset \mpobjsup$)};

	\path
	(A) edge[draw=none, auto=false] node{$\subset$} (B)
	(B) edge[draw=none, auto=false] node{$\subset$} (C)
	(D) edge[draw=none, auto=false] node{$\subset$} (E)
	(E) edge[draw=none, auto=false] node{$\subset$} (F)
	(F) edge[draw=none, auto=false] node{$\subset$} (G)
	(G) edge[draw=none, auto=false] node{$\subset$} (H)
	(A) edge[implies] (D)
	(B) edge[implies] (F)
	(C) edge[implies] (H);
\end{tikzpicture}
\caption{Implications among the objectives}
\label{fig:relations}
\end{center}
\end{figure}

\begin{figure}
\begin{center}
\begin{minipage}{0.45\linewidth}
\begin{center}
\begin{tikzpicture}
	\node[state,initial,fill=blue!20](A){$q_0$};
	\node[state,fill=blue!20](B)[right=of A]{$q_1$};

	\node[fit=(A) (B)]{};

	\path
	(A) edge[loop] node[el,swap] {$\Sigma$,0} (A)
	(A) edge node[el] {$\Sigma$,-1} (B)
	(B) edge[loop] node[el,swap] {$\Sigma$,0} (B);
\end{tikzpicture}
\caption{Blind WGA where, for any $\lmax \in \mathbb{N}_{0}$, the only possible
abstract play is in $\fix(\lmax)$ but not in $\ufix(\lmax)$.}
\label{fig:fix-neq-ufix}
\end{center}
\end{minipage}
\hspace{0.2cm}
\begin{minipage}{0.45\linewidth}
\begin{center}
\begin{tikzpicture}
	\node[state,initial,fill=blue!20](A){$q_0$};
	\node[state,fill=green!20](B)[right=of A]{$q_1$};

	\node[fit=(A)]{};
	\node[fit=(B)]{};

	\path
	(A) edge[bend left] node[el] {$\Sigma$,-1} (B)
	(B) edge[bend left] node[el,swap] {$\Sigma$,1} (A)
	(B) edge[loop] node[el,swap] {$\Sigma$,0} (B);
\end{tikzpicture}
\caption{Perfect information WGA where \eve wins both MP objectives but
	none of the FWMP or BWMP objectives.}
\label{fig:mp-not-wmp}
\end{center}
\end{minipage}
\end{center}
\end{figure}

In general the mean-payoff objective is not sufficient for the FWMP or BWMP
objectives, e.g. see Figure~\ref{fig:mp-not-wmp}. Our first result shows that
if, however, \eve has a \emph{finite memory} winning strategy for a
\emph{strictly positive} threshold, then this strategy is also winning for any
of the FWMP or BWMP objectives. A specific subcase of this was first observed in
Lemma $2$ of~\cite{cdrr13}. 

\begin{theorem}
	\label{thm:mpfm-then-bnd}
	Given WGA $G$, if \eve has a finite memory winning strategy for the
	$\mpobjinf(\epsilon)$ (or $\mpobjsup(\epsilon)$)~objective, for
	$\epsilon > 0$, then the same strategy is winning for her in the
	$\dirfix(\mu)_G$~game -- where $\mu$ is bounded by the memory used by
	the strategy.
\end{theorem}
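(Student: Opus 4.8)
The plan is to fix a finite-memory winning strategy $\lambda = \lambda_\exists$ for \eve, compile it together with the arena into a finite weighted graph, note that the hypothesis forces every cycle of that graph to have strictly positive total weight, and then run a cycle-decomposition argument (in the spirit of Lemma~2 of~\cite{cdrr13}) to bound for how long the running sum of any consistent concrete play can stay below zero --- which is exactly what the direct fixed window objective with threshold $0$ demands.

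First I would realize $\lambda$ by a memory structure $(M,m_0,\alpha_u,\alpha_o)$ and build the weighted directed graph $H$ whose vertices are the pairs in $Q\times M$ reachable from $(q_I,m_0)$, with an edge $(q,x)\to(q',x')$ of weight $w(q,\sigma,q')$ precisely when $\sigma=\alpha_o(x,o)$, $x'=\alpha_u(x,o)$ and $(q,\sigma,q')\in\Delta$, where $o$ is the observation containing $q$. A routine induction on length then shows that, once the $M$-component is erased, the infinite (resp.\ finite) paths of $H$ issued from $(q_I,m_0)$ are exactly the concretizations of the plays of $G$ consistent with $\lambda$ (resp.\ their prefixes): since $Obs$ is a partition, the observation visited along a concrete path is forced, and since the memory is updated from observations only, the $M$-component carried along a path of $H$ equals the memory state that $\lambda$ maintains along the corresponding abstract play; so ``being a path of $H$'' and ``being a concretization of a $\lambda$-consistent play'' are the same condition.

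Next I would establish the key claim: every cycle $C$ of $H$ has average weight at least $\epsilon$, and hence --- weights being integers --- total weight at least $1$. Indeed, the infinite path that first reaches $C$ and then follows $C$ forever is a path of $H$, hence by the previous paragraph a concretization $\pi$ of a $\lambda$-consistent play; since $\lambda$ is winning, $\mpinf(\pi)\ge\epsilon>0$ in the $\mpobjinf$ case (resp.\ $\mpsup(\pi)\ge\epsilon>0$ in the $\mpobjsup$ case), while $\pi$ is ultimately periodic with period $C$, so $\mpinf(\pi)=\mpsup(\pi)$ equals the average weight of $C$. Then, writing $N\le|Q|\cdot|M|$ for the number of vertices of $H$ and recalling that $W=\max_t|w(t)|$, I would decompose any finite walk $\pi[i..i+n]$ of $H$, as a multiset of edges, into a simple path from its first to its last vertex (at most $N-1$ edges, total weight $\ge-(N-1)W$) together with simple cycles (each of length $\le N$ and, by the claim, of total weight $\ge1$); these cycles account for more than $n-N$ edges, so there are more than $(n-N)/N$ of them and $w(\pi[i..i+n])\ge-(N-1)W+(n-N)/N$. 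Consequently, once $n$ exceeds a threshold $\mu$ of order $N^2W$ we have $w(\pi[i..i+n])\ge0$. Combined with the path correspondence: for every $\lambda$-consistent play $\psi$, every $\pi\in\gamma(\psi)$ and every $i\ge0$ there is some $j\le\mu$ with $w(\pi[i..i+j])\ge0$, i.e.\ $\pi\in\gw(0,i,\mu)$; thus $\psi\in\dirfix(\mu)$, so $\lambda$ is winning in $\dirfix(\mu)_G$, with $\mu$ polynomial in $|Q|$, $W$ and $|M|$ --- i.e.\ bounded in terms of the arena and the memory of $\lambda$.

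The combinatorial estimate in the last step is routine and already appears (for fixed window objectives) in~\cite{cdrr13}; I expect the part that really needs care to be the correspondence of the second paragraph --- checking that the product of $G$ with the memory structure captures \emph{all} concretizations of \emph{all} $\lambda$-consistent plays, given that $\lambda$ is defined on abstract (observation) prefixes whereas $\gw$ and the mean-payoff are conditions on concrete paths. It is also worth stressing where the hypothesis is used: strict positivity $\epsilon>0$ (rather than just $\epsilon\ge0$) is exactly what turns ``average weight $>0$'' into ``total weight $\ge1$'' for cycles, and Figure~\ref{fig:mp-not-wmp} shows the statement genuinely fails without it.
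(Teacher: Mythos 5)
Your proposal is correct and follows essentially the same route as the paper's proof: form the product of the arena with the memory structure, argue that every (reachable) cycle must have strictly positive average weight since otherwise \adam could pump it to defeat the mean-payoff objective, and then bound the length of any open window by a cycle-decomposition/pigeonhole count. The only (harmless) differences are that you dispose of the $\mpobjsup$ case directly via ultimate periodicity of the pumped path instead of invoking the finite-memory equivalence of $\mpobjinf$ and $\mpobjsup$ from the literature, and you use integrality of the weights to get cycle weight $\ge 1$, which makes your bound $\mu$ independent of $\epsilon$ where the paper's is not.
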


\subsection{Lower bounds}\label{sec:lowerBound}

In~\cite{cdrr13} it was shown that in multiple dimensions, with arbitrary window
size, solving games with the (direct) fixed window objective was complete for
\EXP-time.  We now show that in our more general setting this hardness result
holds, even when the window size is a fixed constant and the weight function is given in unary.

\begin{lemma}
	\label{lem:dirfix-low}
	Let $\lmax\in \mathbb{N}_0$ be a fixed constant.  Given WGA $G$, determining if \eve has a winning strategy for the
	$\dirfix(\lmax)$, $\ufix(\lmax)$ or the $\fix(\lmax)$~objectives
	is \EXP-hard, even for unary weights.
\end{lemma}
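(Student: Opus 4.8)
The plan is to reduce from a known \EXP-complete problem for games with partial observation. The natural candidate is the reachability (or safety) problem for games with partial observation, which is \EXP-complete even for blind games / games where the second player has partial observation — this is the classical result (Reif; Chatterjee–Doyen–Henzinger–Raskin). More convenient still is to reduce from the membership/halting problem of an alternating polynomial-space Turing machine, or equivalently the acceptance problem for a \emph{polynomial-space Turing machine} seen as a one-player (or two-player) game, encoded so that partial observation forces \eve to "track" a configuration she cannot see. I would pick whichever of these gives the cleanest gadget; below I sketch the reduction from a partial-observation safety game, and then explain how to wire in the window mean-payoff objective with a \emph{constant} window size and unary weights.

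First I would take an instance of the \EXP-hard problem: a game with partial observation where \eve must avoid a set of bad observations forever (equivalently, satisfy a safety objective), which is \EXP-hard even with constantly many actions. The key point is that in such games \eve's winning strategies may require exponential memory, and the hardness is robust. I would build a WGA $G'$ whose underlying transition structure and observation partition are essentially those of the safety game, and whose weights are all $0$ on "safe" transitions. Into every state that lies in a bad observation I would attach a small gadget that, once entered, produces a concrete path witnessing a window of negative average that cannot be closed within $\lmax$ steps: for instance, a transition of weight $-1$ immediately into a self-loop of weight $0$, mirroring Figure~\ref{fig:fix-neq-ufix}, but placed so that \emph{any} concrete path that ever reaches a bad state falls out of $\gw(i,\lmax)$ at the step it took the $-1$ edge (since after that only $0$'s follow, no window of length $\le \lmax$ averages $\ge 0$). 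Conversely, as long as play stays within safe observations the weights are all $0$, so every window of length $1$ already has average $0$ and the $\dirfix(\lmax)$ condition holds. Thus \eve wins $\dirfix(\lmax)$ in $G'$ iff she has an observation-based strategy avoiding the bad observations forever in the original safety game, giving \EXP-hardness for $\dirfix(\lmax)$.

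For $\ufix(\lmax)$ and $\fix(\lmax)$ I would argue that the same reduction works, using prefix-independence and the structure of the gadget. Since in $G'$ the only way a concrete path leaves $\gw(\cdot,\lmax)$ infinitely often is by taking a $-1$ edge into the sink gadget (after which it is stuck with $0$'s), a play satisfies $\fix(\lmax)$ — "eventually all windows are good" — iff no concretization ever enters a bad state, which is again exactly the safety condition; and because the $i$ in the $\ufix$ definition can just be taken to be $0$ whenever the safety condition holds, the three objectives coincide on the plays of $G'$. (One has to be a little careful that \adam, who also has partial observation, cannot force \eve into a bad state along \emph{some} concretization while the observed play looks safe — but this is precisely the semantics of the partial-observation safety game, where \eve must be safe against all concretizations, so the equivalence is faithful.) Finally I would check the two quantitative desiderata: $\lmax$ is a \emph{fixed constant} — indeed $\lmax = 1$ suffices, since all good windows in $G'$ have length $1$ and all bad ones have length $> 1$; and all weights lie in $\{-1,0\}$, so the weight function is trivially given in unary. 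The reduction is clearly polynomial.

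\medskip

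\noindent\textbf{Main obstacle.} The delicate point is ensuring that a \emph{constant} window size (rather than one growing with the arena) already captures the full difficulty. The gadget must guarantee that a bad window, once opened, can \emph{never} be closed within $\lmax$ steps regardless of \adam's and \eve's future choices — this is why routing into a $0$-weighted sink immediately after a single $-1$ edge is the right design, and why $\lmax=1$ works. A secondary subtlety, worth a sentence in the proof, is that the starting observation of $G'$ must be $\{q_I\}$ as the definition of a WGA requires, so the initial state of the safety game may need to be split off as its own observation; this is harmless. The rest of the argument — membership of the simulated plays in each objective, and the polynomial bound on the size of $G'$ — is routine.
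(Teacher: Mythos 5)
Your overall plan---reduce from partial-observation safety games (\EXP-complete by~\cite{cd10}), keep the arena and observations, put weight $0$ on safe transitions and $-1$ around the unsafe states---is exactly the paper's reduction, and your treatment of $\dirfix(\lmax)$ is correct. The gap is in your gadget for $\ufix$ and $\fix$. You route a concretization that reaches an unsafe state through a \emph{single} $-1$ edge into a $0$-weighted self-loop. Such a concretization has exactly one bad window, namely the one opened at the $-1$ edge; every window opened strictly later sees only $0$'s and is good already at length $1$. Hence that concretization still satisfies the $\fix(\lmax)$ requirement (take $i$ to be the position just after the $-1$ edge), so your claimed equivalence ``a play satisfies $\fix(\lmax)$ iff no concretization ever enters a bad state'' is false for your construction, and the sentence ``the only way a concrete path leaves $\gw(\cdot,\lmax)$ infinitely often is by taking a $-1$ edge into the sink'' is wrong: one $-1$ edge into a $0$-sink produces exactly one violation, not infinitely many. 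The same problem breaks $\ufix$ whenever unsafe states are reachable only within a bounded horizon (then a uniform $i$ beyond that horizon works and \eve wins $\ufix$ while losing safety). Note that Figure~\ref{fig:fix-neq-ufix}, which you cite as the model for your gadget, is precisely the paper's witness that this gadget shape \emph{separates} $\fix$ from $\ufix$ rather than collapsing them onto safety.

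The repair is small and is what the paper does: make the unsafe states trapping and assign weight $-1$ to \emph{every transition whose source is unsafe} (i.e., to the self-loops of the trap), leaving all other transitions at $0$. A concretization that ever reaches the trap then accumulates $-1,-2,-3,\dots$ from every position inside the trap onward, so \emph{all} windows opened after entry are bad and $\dirfix$, $\ufix$ and $\fix$ fail simultaneously; if no concretization ever reaches an unsafe state, all observed weights are $0$ and $\dirfix(\lmax)$ holds with windows of length $1$. With this change the three objectives coincide with the safety condition, the weights stay in $\{-1,0\}$ (so unary encoding is trivial), and a constant $\lmax$, indeed $\lmax=1$, suffices. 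Your remaining points (splitting off the initial observation, polynomial size of the reduction) are fine.
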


\begin{proof}
	We give a reduction from the problem of determining the winner of
	a safety game with imperfect information, shown in~\cite{cd10} to
	be \EXP-complete.

	A safety game with imperfect information is played on a non-weighted
	game arena with partial-observation $G = \langle Q, q_I, \Sigma, \Delta,
	Obs \rangle$. A play of $G$ is winning for \eve if and only if it never
	visits the \emph{unsafe state} set $\calU \subseteq Q$. Without
	loss of generality, we assume unsafe states are \emph{trapping}, that is
	$(u, \sigma, q) \in \Delta$ and $u \in \calU$ imply that $u = q$.

	Let $w$ be the transition weight function mapping $(u,\sigma,q) \in
	\Delta$ to $-1$ if $u \in \calU$ and all other $t \in \Delta$ to $0$.
	Denote by $G_w$ the resulting WGA from adding $w$ to $G$.  It should be
	clear that \eve wins the safety game $G$ if and only if she wins
	$\mpobjinf_{G_w}(0)$, $\dirfix_{G_w}(\lmax)$, $\ufix_{G_w}( \lmax)$,
	and $\fix_{G_w}( \lmax)$ -- for any $\lmax$. That is, all objectives
	are equivalent for $G_w$.
\qed\end{proof}

In~\cite{cdrr13} the authors show that determining if \eve has a winning
strategy in the $k$-dimensional version of the \udirbnd~and \ubnd~objectives
with perfect information is non-primitive recursive hard. We show that, in our
setting, these decision problems are undecidable.

\begin{theorem}
	\label{thm:bwmp-undec}
	Given WGA $G$, determining if \eve has a winning strategy for any of the
	BWMP objectives is undecidable, even if $G$ is blind.
\end{theorem}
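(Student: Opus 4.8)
The plan is to reduce from a known undecidable problem about blind systems — the natural candidate is the emptiness (or universality) problem for probabilistic automata, or equivalently the halting/boundedness problem for some counter-machine-like model, since the paper already points out that the perfect-information $k$-dimensional versions are non-primitive-recursive hard via counter machines. Since we are now allowed blindness, we should be able to push this all the way to undecidability. Concretely, I would reduce from the \emph{non-halting problem for two-counter (Minsky) machines}, or — probably cleaner here — from the problem of deciding whether a blind one-player system (i.e.\ just a set of words, $\Sigma^\omega$ restricted by \eve's action choices with no feedback) can keep some quantity bounded. In a blind WGA, \eve's observation-based strategy sees nothing, so it degenerates to a fixed infinite sequence of actions $\sigma_0 \sigma_1 \sigma_2 \cdots \in \Sigma^\omega$; the set $\gamma(\psi)$ of concrete paths is then exactly the set of all runs of the (nondeterministic, weighted) automaton on that fixed word. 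So the question ``does \eve win the BWMP objective'' becomes: ``is there an infinite word $u$ such that \emph{every} run of the weighted automaton on $u$ keeps the partial sums controlled in the window sense?''

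The key steps, in order, are: (1) Fix the target BWMP objective, say $\bnd$ (the weakest, hence giving the strongest undecidability statement — and by Figure~\ref{fig:relations} and an argument that the constructed arena collapses the hierarchy, the same reduction works for all of $\udirbnd,\dirbnd,\ubnd,\bnd$). (2) Encode the configurations of a two-counter machine $M$ into the state space $Q$, with the action alphabet $\Sigma$ encoding the instruction to be simulated at each step; \adam resolves nondeterminism by choosing \emph{which} counter-value branch to check, exactly as in the classical counter-machine gadgets, but now this nondeterminism is invisible to \eve because the arena is blind. (3) Choose weights so that a faithful simulation of $M$ corresponds to bounded windows: e.g.\ incrementing/decrementing counter $c$ contributes $+1/-1$, and an illegal move (zero-test failure, or reaching the halt state) is punished by an edge of weight $-1$ into a trapping negative-weight sink, so that \eve wins the BWMP objective iff the action word she commits to describes an infinite non-halting computation of $M$ — because a non-terminating faithful run has all its relevant partial sums bounded by a constant depending only on $M$, hence lies in $\gw(0,i,\lmax)$ for a single uniform $\lmax$, giving membership in even $\udirbnd$; while any word that deviates, or that reaches \textbf{halt}, is caught by \adam's branch and produces an unbounded downward drift, excluded from $\bnd$. (4) Conclude that \eve wins the BWMP game iff $M$ does not halt from its initial configuration, which is undecidable; and observe the reduction is effective.

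The delicate point — the main obstacle — is the \emph{faithfulness} enforcement: in a blind arena \eve could try to "cheat" by committing to an action sequence that does not correspond to a legitimate run of $M$ at all (e.g.\ claiming a zero-test succeeds when the counter is positive), and we must make sure \adam, using only invisible nondeterminism, can refute every such cheat with a branch whose weight sequence has genuinely unbounded negative partial averages (not merely one bad window, which BWMP tolerates — we need the window-boundedness to fail for \emph{every} $\lmax$). This is where the "bounded" flavour of the objective helps rather than hurts: a single cheating step followed by trapping in a weight-$(-1)$ sink makes $w(\pi[i..i+j]) \le -j + O(1)$, so $\frac{1}{j}w(\pi[i..i+j]) < 0$ for all $j$, defeating $\gw(0,i,\lmax)$ for every $\lmax$ and every large enough $i$ — hence $\pi \notin \bnd$. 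Conversely I must verify that an \emph{honest} infinite run keeps \emph{all} suffix windows good with one fixed $\lmax$: since counters along a non-halting run may grow unboundedly, I cannot encode counter values directly as accumulated weight without breaking boundedness, so the encoding must keep each counter's contribution locally balanced (increments and the matching decrements paired within a bounded horizon is false in general), so instead the right design is to have the weight track only a \emph{bounded} certificate — e.g.\ "no illegal move has occurred yet" as weight $0$ forever on the honest branch — so that honesty yields the all-zero weight sequence, trivially in $\udirbnd$ with $\lmax = 1$. Making that gadget precise, and checking that \adam's verification branches for zero-tests can be realized without observations, is the technical heart of the argument.
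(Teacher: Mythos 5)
Your reformulation of the blind case is the right starting point---\eve's observation-based strategy degenerates to a single infinite word, and the game becomes a question about all runs of a nondeterministic weighted automaton on that fixed word---but the reduction you then commit to, a direct simulation of a two-counter machine, has a gap that you half-identify and do not close: the verification of zero tests. For \adam to refute a false claim that a counter is zero, his (invisible) nondeterministic branch must compare the numbers of increments and decrements seen so far; with a finite state space the only place to store that comparison is the accumulated weight, so the ``watching'' branch must carry weights $\pm 1$ tracking the counter value. But then on an \emph{honest} non-halting run whose counters are unbounded, that branch contains windows that stay open for as long as the counter takes to return below its earlier level, i.e.\ for unboundedly many steps; hence for every $\lmax$ there are infinitely many positions $i$ at which the path is outside $\gw(i,\lmax)$, that concretization fails even $\bnd$, and honest \eve loses---the reduction breaks in the positive direction. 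Your proposed repair (``weight $0$ forever on the honest branch, punish only illegal moves'') is circular, because \adam cannot recognize a zero-test cheat as an illegal move without having counted. This tension is not incidental: it is essentially why~\cite{cdrr13} obtain only non-primitive-recursive hardness, not undecidability, in the perfect-information multi-dimensional case.

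The paper avoids the issue by reducing from the universality problem for weighted finite automata, undecidable by~\cite{abk11}, so that all unbounded counting stays hidden inside the source problem. Two auxiliary blind gadgets force \eve's word to contain a separator $\#$ infinitely often with gaps bounded by some $b$; a third gadget lets \adam simulate accepting runs of the automaton on each finite segment between consecutive $\#$'s, each $\#$ contributing $+\tfrac12$. Because every segment has length at most $b$, all relevant weight accumulation within a window is bounded by $W\cdot b$: if the automaton is not universal, \eve plays the finite-memory word $(\#w_1)^\omega$ with $\mathcal{N}(w_1)\ge 0$, obtains mean-payoff at least $\tfrac{1}{2b}>0$, and wins $\udirbnd$ via Theorem~\ref{thm:mpfm-then-bnd}; if it is universal, every segment costs at most $-1$, some concretization has mean-payoff at most $-\tfrac{1}{2b}$, and \adam defeats $\bnd$. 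If you want to salvage your write-up, develop the alternative source problem you mention only in passing (weighted-automaton universality, i.e.\ your ``keep some quantity bounded on a blind one-player system'') rather than the Minsky-machine simulation.
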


\begin{figure}
\begin{center}
\begin{minipage}{0.45\linewidth}
\begin{center}
\begin{tikzpicture}
	\node[state,initial,fill=blue!20](A){$q_1$};
	\node[state,fill=blue!20](B)[right=of A]{$q_2$};
	\node[state,fill=blue!20](C)[right=of B]{$q_3$};

	\node[fit=(A) (B) (C)]{};

	\path
	(A) edge[loop] node[el,swap] {$\Sigma\cup\{\#\}$,0} (A)
	(A) edge node[el] {$\Sigma$,-1} (B)
	(B) edge[loop] node[el,swap] {$\Sigma$,-1} (B)
	(B) edge node[el] {$\#$,1} (C)
	(C) edge[loop] node[el,swap] {$\Sigma$,1} (C)
	;
\end{tikzpicture}
\caption{Gadget which forces \eve to play infinitely many $\#$.}
\label{fig:infty-hash}
\end{center}
\end{minipage}
\hspace{0.2cm}
\begin{minipage}{0.45\linewidth}
\begin{center}
\begin{tikzpicture}
	\node[state,initial,fill=blue!20](A){$q_4$};
	\node[state,fill=blue!20](B)[right=of A]{$q_5$};

	\node[fit=(A) (B)]{};

	\path
	(A) edge[loop] node[el,swap] {$\Sigma\cup\{\#\}$,0} (A)
	(A) edge[bend left] node[el] {$\#$,-1} (B)
	(B) edge[loop] node[el,swap] {$\Sigma$,0} (B)
	(B) edge[bend left] node[el,swap] {$\#$,1} (A)
	;
\end{tikzpicture}
\caption{Gadget which, given that \eve will play $\#$ infinitely often, forces
her to play $\#$ in intervals of bounded length.}
\label{fig:no-asc-chains}
\end{center}
\end{minipage}
\end{center}
\end{figure}

\begin{figure}
\begin{center}
\begin{tikzpicture}
	\node[state,initial,fill=blue!20](A){$q_5$};
	\node[state,fill=blue!20](B)[right=of A]{$q_I$};
	\node[state,fill=blue!20](C)[right=of B]{$p \not\in F$};
	\node[state,fill=blue!20](D)[below=of B]{$q \in F$};
	\node[state,fill=blue!20](E)[right=of C]{$\bot$};

	\node[draw,fit=(B) (C) (D),label=right:$\mathcal{N}$,dashed](F){};
	\node[draw,dotted,fit=(A) (E) (F)]{};

	\path
	(A) edge node[el] {$\#$,$\frac{1}{2}$} (B)
	(C) edge node[el] {$\#$,0} (E)
	(E) edge[loop] node[el,swap] {$\Sigma\cup\{\#\}$,1} (E)
	(D) edge[out=150,in=200] node[el] {$\#$,$\frac{1}{2}$} (B)
	;

\end{tikzpicture}
\caption{Blind gadget to simulate the weighted automaton $\mathcal{N}$.}
\label{fig:sim-automaton}
\end{center}
\end{figure}
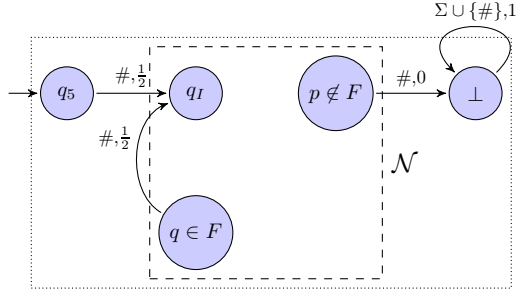

\begin{proof}[Sketch]
	Our proof is by a reduction from the universality problem of weighted
	finite automata which is undecidable~\cite{abk11}.  A \emph{weighted
	finite automaton} is a tuple $\mathcal{N}=\langle Q, \Sigma, q_I,
	\Delta, w, F \rangle$ where $F \subseteq Q$ is a set of final accepting
	states. A {\em (accepting) run} of the automaton on a word $x = \sigma_0
	\sigma_1 \dots \sigma_n \in \Sigma^*$ is a sequence $r = q_0 q_1 \dots
	q_n \in Q^+$ such that $q_n \in F$ and $(q_i, \sigma_i, q_{i+1}) \in
	\Delta$ for all $0 \le i < n$. The cost of the run $r$ is $w(r) =
	\sum^{n-1}_{i=0} w(q_i,\sigma_i,q_{i+1})$. If the automaton is
	non-deterministic, it may have several runs on $x$. In that case, the
	\emph{cost of $x$ in $\mathcal{N}$} (denoted by $\mathcal{N}(x)$) is
	defined as the minimum of the costs of all its accepting runs on $x$.

	The \emph{universality} problem for weighted automata is to decide
	whether, for a given automaton $\mathcal{N}$, the following holds:
	$\forall x \in \Sigma^* : \mathcal{N}(x) < 0$.	If so, $\mathcal{N}$ is
	{\em universal}.  Our reduction constructs a blind WGA, $G_\mathcal{N}$,
	so that:
	\begin{itemize}
		\item if $\mathcal{N}$ is universal, then \eve has a winning
			strategy for the objective \udirbnd,
		\item if $\mathcal{N}$ is not universal, then \adam has a
			winning strategy for the complement of the objective
			\bnd. 
	\end{itemize}
	As shown in Fig.~\ref{fig:relations}, $\udirbnd \subseteq \bnd$ and all
	the other BWMP objectives lie in between those two. So, our reduction
	establishes the undecidability of all BWMP objectives at once.
	
	Our reduction follows the gadgets given in
	Figs.~\ref{fig:infty-hash}--\ref{fig:sim-automaton}.
	When the game starts, \adam chooses to play from one of the three
	gadgets. As the game is blind for \eve, she does not know what is the
	choice of \adam and so she must be prepared for all possibilities. Note
	also that as \eve is blind, her strategy can be formalized by an
	infinite word $w \in \Sigma \cup \{ \# \}^{\omega}$.  The first two
	gadgets force \eve to play a word $w$ such that: $(C_1)$ there are
	infinitely many $\#$ in $w$, and $(C_2)$ there exists a bound $b \in
	\mathbb{N}$ such that the distance between two consecutive $\#$ in $w$
	is bounded by $b$.  Note that although $(C_1)$ holds with under a 
	mean-payoff objective, the \bnd\ objective is necessary to ensure $(C_2)$.
	
	Now, we assume that \eve plays a word $w=\#w_1 \# w_2\# w_3\# \dots \#
	w_n \# \dots$ that respects conditions $C_1$ and $C_2$, and we consider
	what happens when \adam plays in the third gadget
	(Fig.~\ref{fig:sim-automaton}). The best strategy for \adam is to
	simulate the accepting runs of $\mathcal{N}$. If $\mathcal{N}$ is not
	universal then \eve chooses $w_1 \in \Sigma^*$ such that
	$\mathcal{N}(w_1) \geq 0$, and then she plays the strategy
	$(\#w_1)^{\omega}$ and she wins the objective \udirbnd. Indeed \eve
	plays a finite memory strategy that forces a mean-payoff which is at
	least $\frac{0.5}{b} > 0$ as each new $\#$ brings $+\frac{1}{2}$ and we
	know that $\mathcal{N}(w_1)\geq0$. The result then follows from
	Theorem~\ref{thm:mpfm-then-bnd}.
	If $\mathcal{N}$ is universal, no matter which word $w$ is played by
	\eve, the mean-payoff of the outcome of the game will be at most
	$\frac{-0.5}{b}$ as any word $w_1 \in \Sigma^*$ played between
	two consecutive $\#$ is such that $\mathcal{N}(w_1)\leq-1$. Thus, \adam
	wins for the complement of the mean-payoff objective. From the
	relation between the mean-payoff objective and \bnd\/ (see Fig.
	~\ref{fig:relations}) it follows that \adam wins for the complement of
	the objective \bnd. 
\qed\end{proof}

\section{Solving \dirfix~games}
\label{sec:dirfix-games}
In this section we establish an upper bound to match our lower bound of
Section~\ref{sec:lowerBound} for determining the winner of \dirfix~games.  We
first observe that for WGAs with perfect information the $\dirfix(\lmax)$
objective has the flavor of a safety objective. Intuitively, a play $\pi$ is
winning for \eve if every suffix of $\pi$ has a prefix of length at most $\lmax$
with average weight of at least $0$. As soon as the play reaches a point for
which this does not hold, \eve loses the play. In WGAs with partial-observation
we need to make sure the former holds for all concretizations of an abstract
play.

We construct a non-weighted game arena with perfect information $G'$ from $G$.
\eve's objective in $G'$ will consist in ensuring the play never reaches
locations in which there is an open window of length $\lmax$, for some state.
This corresponds to a safety objective. Whether \eve wins the new
game can be determined in time linear w.r.t. the size of the new game (see,
e.g.~\cite{thomas95}). 
The game will be played on a set of functions $\mathcal{F}$ which is described
in detail below. We then show how to transfer winning strategies of \eve from
$G'$ to $G$ and vice versa in Lemmas~\ref{lem:dirfix-cor}
and~\ref{lem:dirfix-comp}. Hence, this yields an algorithm to determine if \eve
wins the $\dirfix(\lmax)$ objective which runs in exponential time.

\begin{theorem}
	\label{thm:dirfix-exp-complete}
	Given WGA $G$, determining if \eve has a winning strategy for the
	$\dirfix(\lmax)$~objective is \EXP-complete.
\end{theorem}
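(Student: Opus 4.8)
The plan is to prove Theorem~\ref{thm:dirfix-exp-complete} by combining the lower bound already established in Lemma~\ref{lem:dirfix-low} with a matching upper bound obtained via the construction sketched in the preceding paragraphs. Hardness is immediate: Lemma~\ref{lem:dirfix-low} shows \EXP-hardness of the $\dirfix(\lmax)$ objective even for a fixed constant window and unary weights, so it remains only to show membership in \EXP.

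For the upper bound, I would make the ``foggy-window'' construction precise. Starting from the WGA $G = \langle Q, q_I, \Sigma, \Delta, w, Obs\rangle$ with fixed bound $\lmax$, I build a non-weighted perfect-information game arena $G'$ whose states record enough information to detect when \emph{some} concretization of the abstract play has an open window that has survived $\lmax$ steps. Concretely, a state of $G'$ is a function $f \in \mathcal{F}$ that maps each state $q$ of $G$ consistent with the current observation to a (bounded) summary of the still-open windows started at or before the current position from which $q$ is reachable -- it suffices to track, for each such $q$ and each age $0 \le a < \lmax$, the minimum partial sum accumulated over the last $a$ steps along a concrete path ending in $q$. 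Since weights lie in $\{-W,\dots,W\}$ and ages are bounded by $\lmax$, the relevant partial sums lie in $\{-\lmax W,\dots,\lmax W\}$, so $|\mathcal{F}|$ is at most $(2\lmax W + 2)^{|Q|\lmax}$, which is exponential in the size of $G$ (here we use that $\lmax$ is polynomially bounded and, in the unary-weight case, that $W$ is polynomial). The transition function of $G'$ updates $f$ under \eve's chosen action $\sigma$ and the resulting observation $o'$ by the obvious deterministic subset-construction-with-counters rule: ages increase by one, new windows of age $0$ and sum $0$ are opened at every surviving state, a window closes as soon as its accumulated average reaches $0$, and a state is flagged unsafe as soon as it would contain a window of age $\lmax$ whose minimum sum is still negative. \eve's objective in $G'$ is the safety objective of never visiting a flagged state; since $G'$ is a perfect-information safety game, its winner can be computed in time linear in $|G'|$ (see~\cite{thomas95}), hence in time exponential in $|G|$.

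The correctness of the reduction is what Lemmas~\ref{lem:dirfix-cor} and~\ref{lem:dirfix-comp} are meant to supply, so here I would only indicate the two directions. If \eve wins the safety game on $G'$, then playing the corresponding observation-based strategy in $G$ guarantees that along every abstract play no concretization ever has a window that stays open for $\lmax$ steps, i.e.\ for every $\pi \in \gamma(\psi)$ and every $i \ge 0$ we have $\pi \in \gw(0,i,\lmax)$, so $\psi \in \dirfix(\lmax)$. Conversely, if \adam wins the safety game, a winning strategy for him in $G'$ traces back to a concrete path and an index $i$ witnessing an open window of length $\lmax$ with negative sum, refuting membership in $\dirfix(\lmax)$; the subtlety here is that \adam's perfect-information spoiling strategy in $G'$ has to be realized against \eve's \emph{observation-based} strategies in $G$, which works precisely because the states of $G'$ are defined to be functions over all observation-consistent states, so \adam's moves depend only on information \eve also has.

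The main obstacle is pinning down the state space $\mathcal{F}$ so that it is simultaneously (i) large enough to faithfully certify ``every concretization is good'' -- in particular one must be careful that closing windows greedily for one concrete path is sound even though different concrete paths close their windows at different times, which is why the summary must be indexed by the endpoint state $q$ -- and (ii) small enough to stay single-exponential, which forces the observation that only the \emph{minimum} running sum per $(q, \text{age})$ pair matters rather than the full set of reachable sums, and that windows need not be tracked once their average is nonnegative. Once $\mathcal{F}$ is correctly defined, the transition rule and the safety argument are routine, and the \EXP\ upper bound follows by coupling the exponential bound on $|\mathcal{F}|$ with the linear-time solvability of perfect-information safety games; together with Lemma~\ref{lem:dirfix-low} this gives \EXP-completeness.
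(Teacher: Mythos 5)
Your proposal matches the paper's proof essentially step for step: the lower bound is delegated to Lemma~\ref{lem:dirfix-low}, and the upper bound is obtained by the same reduction to a perfect-information safety game over belief-functions in $\mathcal{F}$ that record, per reachable state and per window age, only the minimal still-negative partial sum, with the same exponential bound on $|\mathcal{F}|$ and linear-time safety-game solving. One minor remark: your parenthetical restriction to unary weights is unnecessary, since $\log(W\lmax)$ is already polynomial in the input size under binary encoding, so the bound $2^{|Q|\cdot\lmax\cdot\log(W\lmax)}$ is single-exponential in general.
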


Let us define the functions which will be used as the state space of the game.
Intuitively, we keep track of the \emph{belief} of \eve as well as the windows
with the minimal weight open at every state of the belief.\footnote{The terms
belief and knowledge are used to denote a state from any variation of
the classic ``Reif construction''~\cite{reif84} to turn a game with
partial-observation into a game with perfect information.  Other names
for similar constructions include ``knowledge-based subset
construction'' (see e.g.~\cite{ddgrt10}).}

For the rest of this section let us fix a WGA with partial-observation $G$ and a
window size bound $\lmax \in \mathbb{N}_0$. We begin by defining the set of
functions $\mathcal{F}$ as the set of all functions $f : Q \to ( \{1, \dots,
\lmax\} \to \{-W \cdot \lmax, \dots, 0\} ) \cup \{\bot\}$. Denote by $\supp(f)$
the \emph{support} of $f$, that is the set of states $q \in Q$ such that $f(q)
\neq \bot$. For $q \in \supp(f)$, we denote by $f(q)_i$ the value $f(q)(i)$.
The function $f_I \in \mathcal{F}$ is such that $f_I(q_I)_l = 0$, for all $1 \le
l \le \lmax$, and $f_I(q) = \bot$ for all $q \in Q \setminus \{q_I\}$.  Given
$f_1 \in \mathcal{F}$ and $\sigma \in \Sigma$, we say $f_2 \in \mathcal{F}$ is
a $\sigma$-successor of $f_1$ if
\begin{itemize}
	\item $\supp(f_2) = \post_\sigma(\supp(f_1)) \cap o$ for some $o \in
		Obs$;
	\item for all $q \in \supp(f_2)$ and all $1 \le j \le \lmax$ we have
		that $f_2(q)_j$ maps to $\max\{-W \cdot \lmax, \min\{0,
		\zeta(q)\}\}$, where $\zeta(q)$ is defined as follows
		\[
			\zeta(q) = 
			\begin{cases}
				\min\limits_{\substack{p \in \supp(f_1),\\
		 	 		 	       (p,\sigma,q) \in \Delta,\\
						       f_1(p)_{j-1} < 0}}
				f_1(p)_{j-1} + w(p,\sigma,q) & \mbox{if } j \ge 2 \\
				\min\limits_{\substack{p \in \supp(f_1),\\
		 	 		 	       (p,\sigma,q) \in \Delta}}
				w(p,\sigma,q) & \mbox{otherwise.}
			\end{cases}
		\]
\end{itemize}

\begin{restatable}{lemma}{sizeofF}
\label{lem:sizeof-F}
	The number of elements in $\mathcal{F}$ is not greater than $2^{|Q|
	\cdot \lmax \cdot \log(W \cdot \lmax)}$.
\end{restatable}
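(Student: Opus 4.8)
The plan is to bound the number of functions in $\mathcal{F}$ by counting independently the number of choices for each state $q \in Q$. Recall that $f$ maps each $q$ either to $\bot$ or to a function from $\{1,\dots,\lmax\}$ into $\{-W\cdot\lmax,\dots,0\}$. First I would count the size of the codomain for a single state: the number of functions from an $\lmax$-element set to a $(W\cdot\lmax+1)$-element set is $(W\cdot\lmax+1)^{\lmax}$, and adding the extra value $\bot$ gives at most $(W\cdot\lmax+1)^{\lmax}+1$ possibilities. For a crude but sufficient bound we can absorb the $+1$ by observing $(W\cdot\lmax+1)^{\lmax}+1 \le (W\cdot\lmax+1)^{\lmax+\epsilon}$ or, more simply, just bound everything by $(W\cdot\lmax)^{\lmax}$-style expressions; the cleanest route is to note each of the $\lmax$ coordinates (plus the $\bot$ alternative) contributes a factor of at most $W\cdot\lmax$, so the per-state count is at most $(W\cdot\lmax)^{\lmax}$, actually we should be slightly careful here since the interval $\{-W\lmax,\dots,0\}$ has $W\lmax+1$ elements — I would simply use the bound that the whole set of options for one state has size at most $2^{\lmax\log(W\cdot\lmax)}$ after checking the off-by-one is swallowed by the logarithm (e.g. since $W\ge 1$ and $\lmax\ge 1$, we have $W\cdot\lmax+1 \le (W\cdot\lmax)^2$ when $W\cdot\lmax\ge 2$, and the degenerate case $W\cdot\lmax=1$ is handled directly).

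Next I would multiply over all $|Q|$ states, since $f$ is determined by its independent values on each state: $|\mathcal{F}| \le \bigl(2^{\lmax\log(W\cdot\lmax)}\bigr)^{|Q|} = 2^{|Q|\cdot\lmax\cdot\log(W\cdot\lmax)}$, which is exactly the claimed bound. The only genuinely fiddly point is justifying that the additive $\bot$ and the additive $+1$ in $W\cdot\lmax+1$ do not break the clean exponential form; this is the step I expect to spend the most words on, and it is resolved entirely by elementary inequalities — there is no combinatorial subtlety, just bookkeeping to make the logarithm-based bound come out with no stray constants. Everything else is a one-line product estimate.

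I should also double-check the edge cases $W=0$ (then all weights are zero, the codomain $\{-W\lmax,\dots,0\}=\{0\}$ is a singleton, $\log(W\cdot\lmax)=\log 0$ is problematic) — but in that degenerate case $\mathcal{F}$ has at most $2^{|Q|}$ elements trivially, and anyway one typically assumes $W\ge 1$ since otherwise the game is trivial; I would add a remark to this effect or assume $W\cdot\lmax\ge 2$ without loss of generality. With that convention in place, the proof is a short explicit computation: count choices per state, take the product, simplify the exponent using $\log$, and invoke the elementary inequality $W\cdot\lmax+2 \le (W\cdot\lmax)^2$ to absorb all off-by-one terms into the stated bound.
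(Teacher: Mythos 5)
Your proof takes essentially the same route as the paper's: count the admissible values per state, raise to the power $|Q|$, and rewrite the base as a power of two to obtain $2^{|Q|\cdot\lmax\cdot\log(W\cdot\lmax)}$. You are in fact more careful than the paper, which simply asserts $|\mathcal{F}| \le (W\cdot\lmax)^{|Q|\cdot\lmax}$ and silently ignores both the $+1$ in the size of $\{-W\cdot\lmax,\dots,0\}$ and the extra $\bot$ alternative; the only caveat is that your absorption via $W\cdot\lmax+1\le(W\cdot\lmax)^2$ inflates the exponent by a constant factor rather than recovering the stated constant exactly, which is harmless since the lemma is only used to show $|\mathcal{F}|$ is singly exponential.
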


We extend the $\supp$ operator to finite sequences of functions and actions. In
other words, given $\rho' = f_0 \sigma_0 f_1 \sigma_1 \in (\mathcal{F} \cdot
\Sigma)^*$, $\supp(\rho') \mapsto s_0 \sigma_0 s_1 \sigma_1 \dots$ where $s_i =
\supp(f_i)$ for all $i \ge 0$. In an abuse of notation, we define the function
$\supp^{-1} : (Obs \cdot \Sigma)^* \times \mathcal{F} \to (\mathcal{F} \cdot
\Sigma)^*$ which maps abstract paths to function-action sequences.  Formally,
given $\rho = o_0 \sigma_0 o_1 \sigma_1 \dots \in \prefs(G)$ and $\phi \in
\mathcal{F}$ with $\supp(\phi) \subseteq o_0$, $\supp^{-1}(\rho, \phi) \mapsto
f_0 \sigma_0 f_1 \sigma_1 \dots$ where $f_0 = \phi$ and for all $i \ge 0$ we
have that $f_{i+1}$ is the $\sigma_i$-successor of $f_i$ such that
$\supp(f_{i+1}) \subseteq o_{i+1}$. Both $\supp$ and $\supp^{-1}$ are extended
to infinite sequences in the obvious manner.

The following two results enunciate the key properties of sequences of the form
$(\mathcal{F} \cdot \Sigma)^*$. Intuitively, the set of all those sequences
corresponds to the windowed, weighted unfolding of $G$ with information about
reachable states as well as open windows.

\begin{restatable}{lemma}{suppisreachable}
\label{lem:supp-is-reachable}
	Let $\rho = o_0 \sigma_0 \dots o_n$ be an abstract path, $\phi \in
	\mathcal{F}$ such that $\supp(\phi) \subseteq o_0$ and $\supp^{-1}(\rho,
	\phi) =	f_0 \sigma_0 \dots f_n \in (\mathcal{F} \cdot \Sigma)^*$. A state $q
	\in Q$ is reachable from some state $q_0 \in \supp(\phi)$ through a concrete
	path $q_0 \sigma_0 \dots q_n \in \gamma(\rho)$ if and only if $q \in
	\supp(f_n)$.
\end{restatable}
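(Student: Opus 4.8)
The plan is to prove Lemma~\ref{lem:supp-is-reachable} by induction on the length $n$ of the abstract path $\rho = o_0 \sigma_0 \dots o_n$, showing the slightly stronger statement that tracks the full correspondence between concrete paths and the support sequence. Concretely, I would prove: for every $k \le n$, the set $\supp(f_k)$ equals exactly the set of states $q$ such that there is a concrete path $q_0 \sigma_0 q_1 \cdots \sigma_{k-1} q_k \in \gamma(\rho[..k])$ with $q_0 \in \supp(\phi)$ and $q_k = q$. The lemma is then the special case $k = n$.

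For the base case $k = 0$, we have $f_0 = \phi$ by definition of $\supp^{-1}$, and the only concrete path of length zero starting in $\supp(\phi)$ and consistent with the single observation $o_0$ ends in a state of $\supp(\phi)$ (recalling $\supp(\phi) \subseteq o_0$), so both sides coincide with $\supp(\phi)$. For the inductive step, assume the claim for $k$; we must show $\supp(f_{k+1})$ equals the set of $q$ reachable via concrete paths of length $k+1$ in $\gamma(\rho[..k+1])$. By the definition of a $\sigma_k$-successor, $\supp(f_{k+1}) = \post_{\sigma_k}(\supp(f_k)) \cap o$ where $o$ is the observation fixed by $\supp^{-1}$ to satisfy $\supp(f_{k+1}) \subseteq o_{k+1}$, i.e. $o = o_{k+1}$. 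Now $q \in \post_{\sigma_k}(\supp(f_k))$ iff there is $p \in \supp(f_k)$ with $(p, \sigma_k, q) \in \Delta$; by the induction hypothesis $p \in \supp(f_k)$ iff $p$ is the endpoint of some concrete path $q_0 \sigma_0 \cdots q_k = p$ in $\gamma(\rho[..k])$, and prepending-then-extending this path by the transition $(p,\sigma_k,q)$ gives exactly the concrete paths of length $k+1$ ending at $q$. Intersecting with $o_{k+1}$ matches the requirement that $q_{k+1} \in o_{k+1}$ in the definition of $\gamma(\rho[..k+1])$. This establishes both inclusions simultaneously.

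The one point requiring care, which I expect to be the main (minor) obstacle, is bookkeeping the two-sided nature of $\gamma$: membership of a concrete path in $\gamma(\rho[..k+1])$ requires \emph{every} prefix state $q_i$ to lie in $o_i$, not just the final one, so I must make sure the induction hypothesis already enforces the constraints on $q_0, \dots, q_k$ and that the successor step only adds the constraint $q_{k+1} \in o_{k+1}$. This is exactly what the $\cap\, o_{k+1}$ in the support-successor definition provides, and it is why $\supp^{-1}$ is defined to always pick the successor function whose support lies inside the next observation of $\rho$ — so the bookkeeping goes through cleanly. Note that the weight components $f_k(q)_j$ of the functions play no role whatsoever in this lemma; only the supports matter, so I can ignore the $\zeta$ definition entirely and the argument reduces to the classical subset-construction reachability fact, adapted to track observations along a \emph{fixed} abstract path rather than over a tree of all observation sequences.
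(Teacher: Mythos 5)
Your proposal is correct and follows essentially the same route as the paper: an induction on the prefix length using the subset-construction identity $\supp(f_{k+1}) = \post_{\sigma_k}(\supp(f_k)) \cap o_{k+1}$, with the observation constraints on intermediate states carried by the induction hypothesis. The only cosmetic difference is that you prove the set equality in one induction, whereas the paper runs the two inclusions as separate inductions.
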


Consider an abstract path $\psi$ and a positive integer $n$. We say a window of
length $l$ is open at $q \in \gamma(\psi[n])$ if there is some concretization
$\chi$ of $\psi[..n]$ with $q = \chi[n]$ such that $\chi \not\in \gw(n-l,l)$.

\begin{restatable}{lemma}{thekey}
\label{lem:the-key}
	Let $\rho = o_0 \sigma_0 \dots o_n$ be an abstract path, $\phi \in
	\mathcal{F}$ such that $\supp(\phi) \subseteq o_0$ and $\supp^{-1}(\rho,
	\phi) = f_0 \sigma_0 \dots f_n \in (\mathcal{F} \cdot \Sigma)^*$.
	Given state $p \in \supp(f_n)$ and $1 \le l \le \lmax$ such that $l \le
	n$, then there is a window of length $l$ open at $p$ if and only if
	$f_n(p)_l < 0$.
\end{restatable}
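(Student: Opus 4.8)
The plan is to prove the equivalence by induction on $n$, carefully peeling one step off both the abstract path and the function-action sequence. Fix the abstract path $\rho = o_0 \sigma_0 \dots o_n$, the function $\phi$, and write $f_0 \sigma_0 \dots f_n = \supp^{-1}(\rho,\phi)$. The key reformulation I would use throughout: by definition of $\gw$, a window of length $l$ is open at $p \in \supp(f_n)$ iff there is a concretization $q_0 \sigma_0 \dots q_n \in \gamma(\rho[(n-l)..n])$ with $q_n = p$ whose $k$-prefixes are \emph{all} of negative total weight, i.e. $w(q_{n-l} \dots q_{n-l+k}) < 0$ for every $1 \le k \le l$ — equivalently, $p$ is reachable along a concrete path on which \emph{no} intermediate window of length $\le l$ has already closed. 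I expect the cleanest way to track this is to prove, by induction on $n$, the slightly more uniform statement that for every $p \in \supp(f_n)$ and every $1 \le l \le \min\{n,\lmax\}$,
\[
	f_n(p)_l = \max\Bigl\{ -W\lmax,\ \min\Bigl\{0,\ \min_{\chi} w\bigl(\chi[(n-l)..n]\bigr) \Bigr\} \Bigr\},
\]
where $\chi$ ranges over concretizations of $\rho[(n-l)..n]$ ending at $p$ such that all proper prefixes $\chi[(n-l)..(n-l+k)]$, $1 \le k < l$, have strictly negative weight. The lemma then follows since $f_n(p)_l < 0$ iff this value is negative iff such a $\chi$ with $w(\chi[(n-l)..n]) < 0$ exists iff a length-$l$ window is open at $p$.

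For the base case ($n = 0$, or more precisely the first step where $l = 1$), the $j=1$ clause of the definition of $\zeta$ gives $f_n(p)_1 = \max\{-W\lmax, \min\{0, \min_{(p',\sigma_{n-1},p)\in\Delta,\ p'\in\supp(f_{n-1})} w(p',\sigma_{n-1},p)\}\}$, which matches the claimed formula with $l = 1$ (there are no proper prefixes to constrain). The reachability of the relevant $p'$ in $\supp(f_{n-1})$ is exactly Lemma~\ref{lem:supp-is-reachable}, which I would invoke here and again in the inductive step. For the inductive step on $l \ge 2$: the $j \ge 2$ clause of $\zeta$ takes, over predecessors $p' \in \supp(f_{n-1})$ with $(p',\sigma_{n-1},p) \in \Delta$ and $f_{n-1}(p')_{l-1} < 0$, the minimum of $f_{n-1}(p')_{l-1} + w(p',\sigma_{n-1},p)$, then clamps below $0$ and above $-W\lmax$. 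By the induction hypothesis, $f_{n-1}(p')_{l-1} < 0$ means precisely that a length-$(l-1)$ window is open at $p'$ — i.e. there is a concretization $\chi'$ of $\rho[(n-l)..(n-1)]$ ending at $p'$ all of whose proper prefixes, and $\chi'$ itself, are of negative weight — and moreover $f_{n-1}(p')_{l-1}$ records the minimal such weight (up to the clamp). Extending such a $\chi'$ by the transition $(p',\sigma_{n-1},p)$ produces exactly the concretizations $\chi$ of $\rho[(n-l)..n]$ counted on the right-hand side: the ``all proper prefixes negative'' condition for $\chi$ decomposes as ``all proper prefixes of $\chi'$ negative'' (induction) \emph{and} ``$w(\chi') < 0$'', which is the side condition $f_{n-1}(p')_{l-1} < 0$; the clamps on both sides align because clamping at each step and clamping once at the end agree once a value has already hit $-W\lmax$ or $0$, and the clamp only ever makes the recorded minimum less negative, which cannot turn a negative witness positive. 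Taking the minimum over all predecessors $p'$ then realizes the minimum over all $\chi$.

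The main obstacle, and the place I would spend the most care, is the interaction between the clamping $\max\{-W\lmax, \min\{0, \cdot\}\}$ and the existential witness in the window condition — specifically, ensuring that when $f_{n-1}(p')_{l-1}$ has been clamped to $0$ (because some prefix was \emph{not} negative) it is correctly \emph{excluded} by the side condition $f_{n-1}(p')_{l-1} < 0$ rather than spuriously contributing, and conversely that clamping to $-W\lmax$ never loses a valid negative witness (it cannot, since $-W\lmax \le 0$ and the true minimal prefix-sum over a window of length $\le \lmax$ is itself $\ge -W\lmax$, so the clamp is never active on an honest window value anyway — though one still has to say this). The secondary subtlety is the hypothesis $l \le n$ in the statement: for $l > n$ there simply is no full window of length $l$ yet, so the claim is vacuous, and I would note at the outset that the induction only needs to be carried out for indices $l \le n$, with $f_{n-1}(p')_{l-1}$ well-defined since $l - 1 \le n - 1$. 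Everything else is bookkeeping with $\post_\sigma$ and $\gamma(\cdot)$ that I would state but not belabor.
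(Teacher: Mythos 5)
Your proposal is correct, and the key subtleties you flag (the clamp at $0$ versus the side condition $f_{n-1}(p')_{l-1}<0$, the inactivity of the $-W\lmax$ clamp on honest window values, and the appeal to Lemma~\ref{lem:supp-is-reachable} to tie the window's start state back to a full concretization of $\rho$) are exactly the points where the argument could otherwise break. Your route differs from the paper's in its packaging. The paper proves a more general three-way equivalence (its Lemma~\ref{lem:magic-lemma}): C1 is the window-open condition, C2 is $f_n(p)_l<0$, and an auxiliary condition C3 asserts the existence of a concrete path along which the $f$-values stay negative and satisfy the successor recurrence exactly; it then closes the cycle C3 $\Rightarrow$ C1 $\Rightarrow$ C2 $\Rightarrow$ C3, where C1 $\Rightarrow$ C2 needs only the \emph{inequality} $f_{m+1}(q_{m+1})_{m-\lambda+1}\le\sum_j w(q_j,\sigma_j,q_{j+1})$ and C2 $\Rightarrow$ C3 is a backward witness construction from $p$ choosing at each step a predecessor achieving the minimum. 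You instead prove a single exact identity, that $f_n(p)_l$ equals the clamped minimum of window weights over concretizations all of whose proper prefixes are negative, by one induction peeling a step off the window. Your invariant is strictly stronger and yields both directions of the biconditional simultaneously, and it makes the role of the guard $f(p)_{j-1}<0$ in the definition of $\zeta$ completely transparent; the price is more delicate bookkeeping around the clamps and the empty-minimum convention. The paper's decomposition avoids ever computing the exact value of $f_n(p)_l$, trading that for the introduction of the intermediate condition C3 and two separate, individually simpler arguments.
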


Formally, the arena $G' = \langle \mathcal{F}, f_I, \Sigma, \Delta'
\rangle$. The transition relation $\Delta'$ contains the transition $(f_1,
\sigma, f_2)$ if $f_2$ is the $\sigma$-successor of $f_1$.  \eve, in $G'$, is
required to avoid states $\calU = \{f \in \mathcal{F} \st \exists q \in \supp(f) :
f(q)_\lmax < 0 \}$. 

\begin{lemma}
	\label{lem:dirfix-cor}
	If \eve wins the safety objective in $G'$, then she also wins the
	$\dirfix(\lmax)$~objective in $G$.
\end{lemma}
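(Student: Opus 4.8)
The plan is to take a winning strategy $\lambda'$ for \eve in the safety game $G'$ and transfer it to an observation-based strategy $\lambda$ in $G$ by projecting through the $\supp$ map. Concretely, since the states of $G'$ are functions $f \in \mathcal{F}$ whose successors are determined jointly by the action chosen and the observation reached, a play of $G'$ consistent with $\lambda'$ is essentially an abstract play of $G$ decorated with the window-bookkeeping functions. So I would first note that $\lambda'$ induces a well-defined function on play prefixes of $G$: given $\rho = o_0 \sigma_0 \dots o_n \in \prefs(G)$, apply $\supp^{-1}(\rho, f_I)$ to obtain the unique function-action sequence $f_0 \sigma_0 \dots f_n$ (using $\supp(f_I) = \{q_I\} \subseteq o_0$), and set $\lambda(\rho) = \lambda'(f_0 \sigma_0 \dots f_n)$. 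This is observation-based by construction, since $\supp^{-1}$ depends only on the observations and actions appearing in $\rho$. One should check that the resulting function-action sequences are genuinely plays of $G'$, i.e.\ that each $f_{i+1}$ is a legal $\sigma_i$-successor of $f_i$ — this is immediate from the definition of $\supp^{-1}$ and $\Delta'$.

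Next I would fix an arbitrary abstract play $\psi = o_0 \sigma_0 o_1 \sigma_1 \dots$ of $G$ consistent with $\lambda$ and argue it lies in $\dirfix(\lmax)$. Let $f_0 \sigma_0 f_1 \sigma_1 \dots = \supp^{-1}(\psi, f_I)$ be the corresponding play of $G'$; by the way $\lambda$ was defined, this play is consistent with $\lambda'$, hence — since $\lambda'$ is winning for the safety objective — it never visits $\calU$, i.e.\ for every $n$ and every $q \in \supp(f_n)$ we have $f_n(q)_{\lmax} \ge 0$. I then need to show $\psi \in \dirfix(\lmax)$, i.e.\ for every concretization $\pi \in \gamma(\psi)$ and every $i \ge 0$, $\pi \in \gw(i, \lmax)$. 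Suppose not: there is $\pi = q_0 \sigma_0 q_1 \dots \in \gamma(\psi)$ and an index $i$ with $\pi \notin \gw(i, \lmax)$, meaning every window of length $j \le \lmax$ starting at $i$ has negative average, in particular $w(\pi[i..(i+\lmax)]) < 0$. Taking $n = i + \lmax$ and $p = q_n \in \gamma(\psi[n])$, the prefix $\pi[..n]$ witnesses an open window of length $\lmax$ at $p$ (in the sense defined before Lemma~\ref{lem:the-key}). By Lemma~\ref{lem:supp-is-reachable}, $p \in \supp(f_n)$, and then by Lemma~\ref{lem:the-key} (applicable since $\lmax \le n$) we get $f_n(p)_{\lmax} < 0$ — contradicting $f_n \notin \calU$.

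The only slightly delicate point is the edge case $i + \lmax > $ the available length, but since plays are infinite, $\pi[i..(i+\lmax)]$ always exists, and $n = i+\lmax \ge \lmax$ so Lemma~\ref{lem:the-key}'s hypothesis $l \le n$ holds with $l = \lmax$; thus there is no real obstacle there. I expect the main thing to get right is simply the careful matching between plays of $G'$ consistent with $\lambda'$ and abstract plays of $G$ consistent with $\lambda$ — in particular that $\lambda$ is genuinely observation-based and that every abstract play consistent with it arises, via $\supp^{-1}$, as a $\lambda'$-consistent play of $G'$ — after which the three preceding lemmas do all the combinatorial work. I would close by remarking that the converse direction (Lemma~\ref{lem:dirfix-comp}) will run the same correspondence backwards.
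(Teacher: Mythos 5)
Your proposal is correct and follows essentially the same route as the paper: define $\lambda(\rho) = \lambda'(\supp^{-1}(\rho, f_I))$, then argue by contradiction that a $\lambda$-consistent play violating $\dirfix(\lmax)$ would force an open window of length $\lmax$ at some reachable state, which via Lemma~\ref{lem:the-key} (and Lemma~\ref{lem:supp-is-reachable}) puts the corresponding $f_n$ in $\calU$, contradicting that $\lambda'$ is winning. Your treatment is in fact slightly more explicit than the paper's about the translation of ``$\pi \notin \gw(i,\lmax)$'' into an open window at position $n = i + \lmax$ and about the observation-based nature of $\lambda$, but the argument is the same.
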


\begin{proof}
	Assume $\lambda'$ is a winning strategy for \eve in $G'$. We define a
	strategy $\lambda$ for her in $G$ as follows:
	\[
		\lambda( \rho ) \mapsto \lambda'( \supp^{-1}( \rho, f_I))
	\]
	for all $\rho \in \prefs(G)$. We claim that $\lambda$ is winning for her
	in $G$. Towards a contradiction, assume $\psi \in \plays(G)$ is
	consistent with $\lambda$ and that $\psi \not\in \dirfix(\lmax)$.
	Recall that this implies there are $n \in \mathbb{N}, q \in Q$ such that
	there is a window of length $\lmax$ open at $q \in \gamma(\psi[n])$.
	By Lemma~\ref{lem:the-key} we then get that $f_n$ from $\supp^{-1}(\psi,
	f_I) = f_0 \sigma_0 f_1 \sigma_1 \dots$ is in $\calU$.  As
	$\supp^{-1}(\psi, f_I)$ is consistent with $\lambda'$, this contradicts
	the assumption that $\lambda'$ was winning.
\qed\end{proof}

\begin{lemma}
	\label{lem:dirfix-comp}
	If \eve wins the $\dirfix(\lmax)$~objective in $G$, then she also wins the
	safety objective in $G'$.
\end{lemma}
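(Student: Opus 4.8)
The plan is to prove the converse of Lemma~\ref{lem:dirfix-cor}: given a winning strategy $\lambda$ for \eve in the $\dirfix(\lmax)$ game on $G$, we must produce a winning strategy $\lambda'$ for her in the safety game $G'$. The key difficulty is that $\lambda'$ must be a strategy over the state space $\mathcal{F}$ (i.e.\ a function of sequences of belief-plus-window functions), whereas $\lambda$ is a function of abstract paths in $\prefs(G)$; so I first need a way to translate plays of $G'$ back into abstract paths of $G$. The natural device is the $\supp$ operator: a play prefix $\rho' = f_0\sigma_0 f_1 \sigma_1 \dots f_n$ in $G'$ projects to the abstract path $\supp(\rho') = \supp(f_0)\,\sigma_0\,\dots\,\supp(f_n)$. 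One should first check (using the definition of $\sigma$-successor, which intersects $\post_\sigma$ with a single observation) that $\supp(\rho')$ is indeed an abstract path in $\prefs(G)$ and that, conversely, $\supp^{-1}(\supp(\rho'), f_I) = \rho'$ whenever $\rho'$ starts at $f_I$ — this is essentially the statement that the $G'$-play is uniquely determined by its observation-action projection together with the initial function. Then define $\lambda'(\rho') = \lambda(\supp(\rho'))$.

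Next I would argue $\lambda'$ is winning for the safety objective, i.e.\ no play consistent with $\lambda'$ ever visits $\calU$. Suppose, towards a contradiction, that some consistent play $\rho' = f_0\sigma_0 f_1\sigma_1\dots$ reaches $f_n \in \calU$, so there is $q \in \supp(f_n)$ with $f_n(q)_\lmax < 0$. Let $\psi = \supp(\rho')$, which is an abstract play of $G$ consistent with $\lambda$ (by the definition of $\lambda'$, since at each step the action played is $\lambda$ applied to the corresponding abstract prefix). I would first handle the small technicality that Lemma~\ref{lem:the-key} requires $l \le n$, i.e.\ that $n \ge \lmax$: if $n < \lmax$ then $f_n(q)_\lmax$ is still $0$ because the window-width-$\lmax$ slot cannot yet have been "closed negatively" — one can see this from the recurrence for $\zeta$, since for $j > n$ the chain of predecessors with $f(\cdot)_{j-1} < 0$ bottoms out at the initial value $0$ at step $0$. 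Granting that, apply Lemma~\ref{lem:the-key} to conclude there is a window of length $\lmax$ open at $q \in \gamma(\psi[n])$; by the definition of an open window this means there is a concretization $\chi$ of $\psi[..n]$ with $\chi[n] = q$ and $\chi \notin \gw(n-\lmax, \lmax)$, i.e.\ $\frac{w(\chi[(n-\lmax)..(n-\lmax+j)])}{j} < 0$ for all $j \le \lmax$, so in particular $\chi \notin \gw(0 + (n-\lmax), \lmax)$ witnesses $\chi \notin \dirfix(\lmax)$ along the suffix starting at index $n-\lmax$. Extend $\chi$ to a full concrete play $\hat\chi \in \gamma(\hat\psi)$ for some completion $\hat\psi$ of $\psi$ consistent with $\lambda$ (possible since the concretization-prefix extends and $\Delta$ is total); then $\hat\psi$ is a play consistent with $\lambda$ that is not in $\dirfix(\lmax)$, contradicting that $\lambda$ is winning.

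The main obstacle I expect is the bookkeeping around which observation is selected at each step: a play of $G'$ commits, via the "$\supp(f_2) = \post_\sigma(\supp(f_1)) \cap o$ for some $o \in Obs$" clause, to a particular sequence of observations, and I must make sure that when I project to $\psi = \supp(\rho')$ and feed prefixes of $\psi$ into $\lambda$, the action $\lambda'$ prescribes in $G'$ agrees with what $\lambda$ prescribes in $G$ — this is immediate from the definition $\lambda'(\rho') = \lambda(\supp(\rho'))$, but one must also confirm that the $f_{n+1}$ actually produced along the consistent $G'$-play is exactly $\supp^{-1}(\psi, f_I)$'s $(n{+}1)$-st function, which again follows from the determinacy observation above together with the fact that $\supp(f_{n+1})$ lies in the observation dictated by $\psi$. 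Modulo that, the argument is a clean contrapositive mirroring the proof of Lemma~\ref{lem:dirfix-cor}, using Lemma~\ref{lem:the-key} in the "only if" direction and Lemma~\ref{lem:supp-is-reachable} to guarantee the offending concrete path genuinely exists.
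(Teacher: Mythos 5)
Your proposal is correct and follows essentially the same route as the paper's proof: define $\lambda'$ by projecting $G'$-plays back to abstract paths via $\supp$, then argue by contradiction that a consistent visit to $\calU$ yields, via Lemma~\ref{lem:the-key}, an open window of length $\lmax$, which extends (using totality of $\Delta$) to a play consistent with $\lambda$ violating $\dirfix(\lmax)$. You are in fact somewhat more careful than the paper on two technical points (the case $n<\lmax$ and the bijection between $G'$-plays and their observation-action projections), both of which check out.
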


\begin{proof}
	Assume $\lambda$ is a winning strategy for \eve in $G$. We define a
	strategy $\lambda'$ for her in $G'$ as follows:
	\[
		\lambda'( \rho' ) \mapsto \lambda \circ \obs \circ \supp( \rho')
	\]
	for all $\rho' \in \prefs(G')$. We claim that $\lambda'$ is winning for
	her in $G'$. Again, towards a contradiction, assume $\psi' \in
	\plays(G')$ is consistent with $\lambda'$ and that $\psi'$ visits some
	$f \in \calU$. This implies, by Lemma~\ref{lem:the-key}, that there is a
	window of length $\lmax$ open at some $q \in \supp(f)$ in $\psi =
	\obs(\supp(\psi'))$. As $\Delta$ is total, for any $\sigma \in \Sigma$
	\eve plays then there is valid $\sigma$-successor of $q$ that \adam can
	choose as the next state. Hence there is some $\chi \in \plays(G)$
	consistent with $\lambda$ such that $\chi$ and $\psi$ have the same
	prefix up $i_q$, where $q \in \gamma(\chi[i_q])$, and there is a
	concretization $\pi$ of $\chi$ such that $\pi[i_q] = q$. As $\chi$ is
	consistent with $\lambda$ and $\chi \not\in \dirfix(\lmax)$, this
	contradicts the fact that it was a winning strategy.
\qed\end{proof}

\subsection{A symbolic algorithm for \dirfix~games}
We note that state space of the construction $G'$ presented in
Section~\ref{sec:dirfix-games} admits an order such that if a state is smaller
than another state, according to said order, and \eve has a strategy to win from
the latter, then she has a strategy to win from the former. In this section we
formalize this notion by defining the order and, in line
with~\cite{cdhr06,bbfjr12}, propose an \emph{antichain}-based algorithm to solve
the safety game on $G'$.

We define the \emph{uncontrollable predecessors} operator $\upre :
\pow(\mathcal{F}) \to \pow(\mathcal{F})$ as follows:
\[
	\upre(S) = \{ p' \in \mathcal{F} \st \forall \sigma \in \Sigma, \exists
		q' \in S : (p', \sigma, q') \in \Delta' \}.
\]
For $S \in \pow(\mathcal{F})$, we denote by $\mu X. (S \cup \upre(X))$, the
\emph{least fixpoint} of the function $F : X \to S \cup \upre(X)$ in the
$\mu$-calculus notation (see~\cite{ej91}). Note that $F$ is defined on the
powerset lattice, which is finite. The following is a well-known result about
the relationship between safety games and the $\upre$ operator (see
e.g.~\cite{gradel04}).
\begin{proposition}
\label{pro:win-lose-safe}
	\eve wins a safety game with unsafe state set $\calU$ if and only if
	the initial state of the game is not contained in $\mu X. (\calU \cup
	\upre(X))$.
\end{proposition}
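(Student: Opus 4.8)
The plan is to identify the least fixpoint $X_* := \mu X.(\calU \cup \upre(X))$ with \adam's \emph{attractor} to $\calU$, i.e. the set of states from which \adam can force the play into $\calU$ in finitely many steps; its complement is then exactly \eve's winning region for the safety objective. Since $\upre$ is monotone on the finite powerset lattice $\pow(\mathcal{F})$, the function $X \mapsto \calU \cup \upre(X)$ is monotone as well, so by Kleene iteration $X_* = \bigcup_{i \ge 0} X_i$ where $X_0 = \emptyset$ and $X_{i+1} = \calU \cup \upre(X_i)$, and this union stabilizes after finitely many steps. For $p \in X_*$ let its \emph{rank} be the least $i$ with $p \in X_i$; note $X_1 = \calU$ since $\upre(\emptyset) = \emptyset$.

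First I would prove the contrapositive of the ``only if'' direction: if $f_I \in X_*$ then \adam wins, hence \eve cannot. I describe a memoryless strategy for \adam on $X_*$. From a state $p$ of rank $i \ge 2$ we have $p \in \upre(X_{i-1})$, so whatever action $\sigma$ \eve commits to there is a $\sigma$-successor $q \in X_{i-1}$ of $p$, i.e.\ of strictly smaller rank; \adam picks such a $q$. A straightforward induction on $\mathrm{rank}(p)$ shows that against this strategy every play starting at $p \in X_*$ reaches $\calU$ within $\mathrm{rank}(p)-1$ steps. Applied to $p = f_I$, this shows no strategy of \eve avoids $\calU$, so \eve does not win the safety game.

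Next I would prove the ``if'' direction: if $f_I \notin X_*$ then \eve wins. The key observation is that $X_*$ is a fixpoint, so $X_* = \calU \cup \upre(X_*)$. Taking complements, $C := \mathcal{F} \setminus X_*$ is disjoint from $\calU$, and since no $p \in C$ lies in $\upre(X_*)$, for every $p \in C$ there is an action $\sigma_p$ such that \emph{all} $\sigma_p$-successors of $p$ lie outside $X_*$, hence in $C$. Because $\Delta'$ is total (inherited from $\Delta$ and the definition of $\sigma$-successor), $\sigma_p$ is always a legal move, so the memoryless strategy $p \mapsto \sigma_p$ is well defined on $C$; any play consistent with it starting at $f_I \in C$ remains in $C$ forever and therefore never visits $\calU$. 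Combining the two directions yields the claimed equivalence.

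I do not expect a genuine obstacle here, as this is the textbook attractor characterization for safety (dually, reachability) games. The only points requiring care are bookkeeping of the move order — \eve commits to an action and \adam then resolves the successor, which is precisely what the asymmetric definition of $\upre$ encodes — and the appeal to totality of $\Delta'$ both to guarantee \eve's prescribed action has a successor and to ensure all plays are infinite; both are immediate from the construction of $G'$.
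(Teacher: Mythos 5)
Your proof is correct: it is the standard attractor/least-fixpoint argument (\adam{} forces a strict rank decrease inside $\mu X.(\calU \cup \upre(X))$, while \eve{} traps the play in the complement, using totality of $\Delta'$), and all the steps go through for the $\upre$ operator as defined here. The paper does not prove this proposition at all --- it cites it as a well-known fact about safety games --- and your argument is precisely the textbook proof behind that citation, so there is nothing to reconcile.
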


\begin{definition}[The partial order]
Given $f',g' \in \mathcal{F}$ we say $f' \preceq g'$ if and only if $\supp(f')
\subseteq \supp(g')$ and
\[
	\forall q \in \supp(f'), \forall i \in \{1,\dots,\lmax\}, \exists j \in
	\{i,\dots,\lmax\} : f'(q)_i \ge g'(q)_j.
\]
\end{definition}

An \emph{antichain} is a non-empty set $S \in \pow(\mathcal{F})$ such that for
all $x,y \in S$ we have $x \not\preceq y$. We denote by $\mathfrak{A}$ the set
of all antichains. Given $a,b \in \mathfrak{A}$, denote by $a \sqsubseteq b$ the
fact that $\forall x \in b, \exists y \in a : y \preceq x$. For $S \in
\pow(\mathcal{F})$ we denote by $\lfloor S \rfloor$ the set of minimal elements
of $S$, that is $\lfloor S \rfloor = \{ x \in S \st \forall y \in S : y \preceq
x$ implies $y = x\}$. Clearly $\lfloor S \rfloor$ is an antichain.

Given $S \in \pow(\mathcal{F})$ we denote by $S \upclose$ the
\emph{upward-closure} of $S$, that is $S \upclose = \{ t \in \mathcal{F} \st S
\preceq t \}$. We say a set $s \in \pow(\mathcal{F})$ is \emph{upward-closed} if
$S = S \upclose$. Note that $\lfloor S \rfloor \upclose = S \upclose$ and
therefore, if $S$ is upward-closed, the antichain $\lfloor S \rfloor$ is a
succinct representation of $S$.

\begin{lemma}
	\label{lem:uc-sets}
	The following assertions hold.
	\begin{enumerate}
		\item $\calU$ is upward-closed.
		\item If $S,T \in \pow(\mathcal{F})$ are two upward-closed sets,
			then $S \cup T$ is also upward-closed.
	\end{enumerate}
\end{lemma}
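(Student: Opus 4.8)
The lemma has two parts. Part (2) is essentially immediate: if $S$ and $T$ are upward-closed, pick any $t \in \mathcal{F}$ with $(S \cup T) \preceq t$ — meaning there is some $s \in S \cup T$ with $s \preceq t$ — and observe that $s$ lies in $S$ or in $T$; whichever set it is in is upward-closed, so it contains $t$, hence $t \in S \cup T$. (Here I am reading $S \preceq t$ as ``$\exists s \in S : s \preceq t$'', consistent with the notation $S \upclose = \{t \st S \preceq t\}$ introduced just above.) So the conclusion $S \cup T = (S \cup T)\upclose$ follows, and the only subtlety is making the quantifier convention explicit.

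The substantive part is (1): $\calU = \{ f \in \mathcal{F} \st \exists q \in \supp(f) : f(q)_\lmax < 0 \}$ is upward-closed. I would show this directly from the definition of $\preceq$. Suppose $f' \in \calU$, witnessed by $q \in \supp(f')$ with $f'(q)_\lmax < 0$, and suppose $f' \preceq g'$. First, $\supp(f') \subseteq \supp(g')$ gives $q \in \supp(g')$. Second, instantiate the defining property of $f' \preceq g'$ at this $q$ and at index $i = \lmax$: there exists $j \in \{\lmax, \dots, \lmax\}$, i.e. $j = \lmax$, with $f'(q)_\lmax \ge g'(q)_\lmax$. Hence $g'(q)_\lmax \le f'(q)_\lmax < 0$, so $g' \in \calU$. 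This shows $\calU$ is upward-closed.

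The main thing to get right — the only place an error could creep in — is matching the direction of the order $\preceq$ to the semantics of $\mathcal{F}$: smaller-in-$\preceq$ should mean ``more winning for \eve'', i.e. windows closer to being closed, i.e. \emph{larger} (less negative) window values. The definition of $\preceq$ bakes this in by requiring $f'(q)_i \ge g'(q)_j$ for some $j \ge i$, so that $f'$ dominates $g'$ by having less-negative windows that are no older. The reason the index choice $j = \lmax$ works cleanly for $\calU$ is that $\lmax$ is the top of the range $\{i, \dots, \lmax\}$ when $i = \lmax$, so the existential over $j$ is forced and there is no slack to exploit; this is exactly why $\calU$, the ``oldest window is bad'' set, is upward-closed even though the order allows comparing windows of different ages. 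I would state this one-line observation to make the role of the index $\lmax$ transparent, and that completes the proof.
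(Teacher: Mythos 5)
Your proof is correct; the paper in fact gives no proof of this lemma at all, treating both assertions as immediate. Your argument---handling part (2) by the disjunction over $S$ and $T$, and part (1) by instantiating the definition of $\preceq$ at the witness $q$ with $i=\lmax$ so that the existential over $j$ collapses to $j=\lmax$ and forces $g'(q)_\lmax \le f'(q)_\lmax < 0$---supplies exactly the routine details the authors left implicit.
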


The usual way of showing an antichain algorithm works dictates that we now prove
the $\upre$ operator, when applied to upward-closed sets, outputs an
upward-closed set as well. Unfortunately, this is not true in our case. The
following example illustrates this difficulty.

\begin{example}
Consider the WGA from Figure~\ref{fig:mp-not-wmp} and let $\lmax =
2$. We note that the function $f$ such that $f(q_0) = \bot$ and $f(q_1)_1 = 1,
f(q_1)_2 = 0$ is in $\upre(\calU)$.  We also have that for the function $g$ such
that $g(q_0) = \bot$ and $g(q_1)_1 = 0, g(q_1)_2 = 1$ we get that $f \preceq g$.
It is easy to verify that $g \not \in \upre(\calU)$. Hence, $\upre(\calU)$ is
not upward-closed.
\end{example}

However, we claim 
that one can circumvent this issue by ignoring elements from $\calU$.  Thus we are able to prove that,
under some conditions, $\upre$ does preserve ``upward-closedness''.
\begin{restatable}{lemma}{upreisuc}
	\label{lem:upre-is-uc} Given upward-closed set $S \in \pow(\mathcal{F})$
	and $f,g \in \mathcal{F} \setminus \calU$, if $f \in \upre(S)$ and $f
	\preceq g$, then $g \in \upre(S)$.
\end{restatable}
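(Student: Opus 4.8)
The plan is to show that the one‑step attractor operator $\upre$ respects $\preceq$ away from $\calU$ by arguing action by action and \emph{matching observations}. Fix $f,g \in \mathcal{F}\setminus\calU$ with $f \in \upre(S)$ and $f \preceq g$, and fix an arbitrary $\sigma \in \Sigma$. Since $f \in \upre(S)$, there is some $h_f \in S$ that is a $\sigma$‑successor of $f$, say $\supp(h_f) = \post_\sigma(\supp(f)) \cap o$ for a fixed $o \in Obs$. Let $h_g$ be the $\sigma$‑successor of $g$ obtained by choosing this \emph{same} observation, i.e.\ $\supp(h_g) = \post_\sigma(\supp(g)) \cap o$; then $(g,\sigma,h_g) \in \Delta'$ directly from the definition of $\sigma$‑successor. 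If we can prove $h_f \preceq h_g$, then upward‑closedness of $S$ gives $h_g \in S$, so $g$ has a $\sigma$‑successor in $S$; as $\sigma$ was arbitrary, $g \in \upre(S)$. So the lemma reduces to the single claim $h_f \preceq h_g$.

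For the support part of $\preceq$, monotonicity of $\post_\sigma$ in its argument gives $\post_\sigma(\supp(f)) \subseteq \post_\sigma(\supp(g))$, and intersecting with $o$ yields $\supp(h_f) \subseteq \supp(h_g)$. For the window‑value part, fix $q \in \supp(h_f)$ (hence also $q \in \supp(h_g)$) and $i \in \{1,\dots,\lmax\}$, and produce a $j \in \{i,\dots,\lmax\}$ with $h_f(q)_i \ge h_g(q)_j$ by a three‑way case split. If $h_f(q)_i = 0$, take $j = i$: the outer clamp $\min\{0,\cdot\}$ forces $h_g(q)_i \le 0 = h_f(q)_i$. If $i = 1$ and $h_f(q)_1 < 0$, take $j = 1$: since $q \in \post_\sigma(\supp(f))$ the set $\{p \in \supp(f) : (p,\sigma,q) \in \Delta\}$ is non‑empty and contained in $\{p \in \supp(g) : (p,\sigma,q) \in \Delta\}$, so the length‑$1$ value $\zeta(q)$ computed for $g$ is at most that computed for $f$; monotonicity of $x \mapsto \max\{-W\cdot\lmax, \min\{0,x\}\}$ then gives $h_g(q)_1 \le h_f(q)_1$. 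I will also use here the convention $\min\emptyset = +\infty$, which makes the first case absorb every situation where there is no relevant open predecessor window.

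The remaining case, $i \ge 2$ with $h_f(q)_i < 0$, is the heart of the proof and the only place the hypothesis $g \notin \calU$ is needed. Then $\zeta(q) < 0$ for $f$, so its minimizing set is non‑empty; pick a witness $p^\ast \in \supp(f)$ with $(p^\ast,\sigma,q)\in\Delta$, $f(p^\ast)_{i-1} < 0$, and $\zeta(q) = f(p^\ast)_{i-1} + w(p^\ast,\sigma,q)$. Since $p^\ast \in \supp(f) \subseteq \supp(g)$, apply $f \preceq g$ at state $p^\ast$ and index $i-1$ to get $k \in \{i-1,\dots,\lmax\}$ with $g(p^\ast)_k \le f(p^\ast)_{i-1} < 0$. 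As $g \notin \calU$, we have $g(p^\ast)_\lmax \ge 0$, so $k \ne \lmax$, i.e.\ $k < \lmax$; put $j = k+1$, which satisfies $i \le j \le \lmax$ and $j \ge 2$. Now $p^\ast$ lies in the minimizing set defining the length‑$(k+1)$ value of $h_g$ at $q$ (because $p^\ast \in \supp(g)$, $(p^\ast,\sigma,q) \in \Delta$, and $g(p^\ast)_k < 0$), so $\zeta(q)$ for $g$ at length $j$ is $\le g(p^\ast)_k + w(p^\ast,\sigma,q) \le f(p^\ast)_{i-1} + w(p^\ast,\sigma,q)$, the latter being $\zeta(q)$ for $f$ at length $i$; the monotone clamp then yields $h_g(q)_j \le h_f(q)_i$. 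This closes all three cases, establishes $h_f \preceq h_g$, and hence the lemma.

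The main obstacle is exactly this last case: one has to choose the correct target index in $g$ — one more than the index $k$ handed over by $f \preceq g$ — and $j = k+1$ stays legal precisely because $g \notin \calU$ prevents a negative length‑$\lmax$ entry. This is why the statement must exclude $\calU$, and it matches the fact, illustrated by the example preceding the lemma, that without this restriction $\upre$ need not preserve upward‑closedness. The bookkeeping around the two clamps ($\min\{0,\cdot\}$ and $\max\{-W\cdot\lmax,\cdot\}$) and the empty‑minimum convention is routine but must be tracked carefully so that the first case genuinely covers the degenerate situations.
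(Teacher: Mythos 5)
Your proof is correct and follows essentially the same route as the paper's: fix $\sigma$, match the observation to obtain $\sigma$-successors $h_f$ of $f$ and $h_g$ of $g$, show $h_f \preceq h_g$ by shifting the index witness supplied by $f \preceq g$ up by one (legal precisely because $g \notin \calU$ keeps that witness strictly below $\lmax$), and conclude via upward-closedness of $S$. Your pointwise case split on $(q,i)$ is, if anything, slightly more careful than the paper's uniform index-shift function $\alpha$ and its claimed equality of the length-$1$ entries.
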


We define a version of the uncontrollable predecessors' operator which
manipulates antichains instead of subsets of $\mathcal{F}$.  
\[
	\lfloor \upre \rfloor (a) = \lfloor \{ p' \in \mathcal{F} \setminus \calU
	\st \forall \sigma \in \Sigma, \exists q' \in a, \exists r' \in \mathcal{F}
	: (p', \sigma, r') \in \Delta' \land q' \preceq r' \} \rfloor
\]

Given $a,b \in \mathfrak{A}$ we denote by $a \sqcup b$ the \emph{least upper
bound} of $a$ and $b$, i.e. $a \sqcup b = \lfloor \{q' \in \mathcal{F} \st q' \in a
\text{ or } q' \in b \} \rfloor$. It is easy to check that $(a \sqcup b)
\upclose = a \upclose \cup b \upclose$ for any $a, b \in \mathfrak{A}$.

\begin{restatable}{theorem}{acalgodirfix}
\label{thm:ac-dirfix}
	Given WGA $G$, \eve wins the $\dirfix(\lmax)$~objective if and only
	if $\{q_I'\} \not\sqsupseteq \mu X. ( \lfloor \calU \rfloor \sqcup
	\lfloor \upre \rfloor(X))$.
\end{restatable}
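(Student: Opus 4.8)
The plan is to reduce the antichain fixpoint $\mu X.(\lfloor\calU\rfloor\sqcup\lfloor\upre\rfloor(X))$ to the explicit fixpoint $\mu X.(\calU\cup\upre(X))$ of Proposition~\ref{pro:win-lose-safe} via the correspondence $a\mapsto a\upclose$, so that the statement follows from Lemmas~\ref{lem:dirfix-cor} and~\ref{lem:dirfix-comp}. Concretely, I would first prove by induction on the fixpoint iterates that for every $n$,
\[
	\left(\mu^n X.(\lfloor\calU\rfloor\sqcup\lfloor\upre\rfloor(X))\right)\upclose
	\;=\;
	\mu^n X.(\calU\cup\upre(X))\,\setminus\,\text{(possibly some elements of }\calU\text{)}.
\]
The delicate point — and the reason for the ``setminus $\calU$'' caveat — is that $\lfloor\upre\rfloor$ is defined to discard members of $\calU$ outright, so the antichain fixpoint tracks the unsafe region only through the seed $\lfloor\calU\rfloor$, never re-deriving an element of $\calU$ as a predecessor. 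The right invariant to carry through the induction is therefore: $\left(\mu^n X.(\lfloor\calU\rfloor\sqcup\lfloor\upre\rfloor(X))\right)\upclose \cup \calU = \mu^n X.(\calU\cup\upre(X))$. The base case is immediate since $\lfloor\calU\rfloor\upclose=\calU$ by Lemma~\ref{lem:uc-sets}(1). For the inductive step I would use that $(a\sqcup b)\upclose=a\upclose\cup b\upclose$, that $\calU$-union is distributive and idempotent, and — the key ingredient — that $\lfloor\upre\rfloor(a)\upclose \cup \calU = \upre(a\upclose)\cup\calU$ whenever $a\upclose\cup\calU$ equals an explicit iterate; this last identity is exactly where Lemma~\ref{lem:upre-is-uc} does the work, guaranteeing that off $\calU$ the set $\upre(S)$ is upward-closed and hence faithfully represented by the minimal elements computed by $\lfloor\upre\rfloor$.

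Once the invariant is established at the limit, I would take $n$ large enough that both fixpoints stabilize (the lattice is finite), obtaining
\[
	\left(\mu X.(\lfloor\calU\rfloor\sqcup\lfloor\upre\rfloor(X))\right)\upclose \cup \calU
	\;=\;
	\mu X.(\calU\cup\upre(X)).
\]
Now I unwind the statement. By definition $\{q_I'\}\sqsupseteq a$ means $\exists y\in a : y\preceq q_I'$, i.e. $q_I'\in a\upclose$; so $\{q_I'\}\not\sqsupseteq\mu X.(\lfloor\calU\rfloor\sqcup\lfloor\upre\rfloor(X))$ says precisely $f_I\notin\left(\mu X.(\lfloor\calU\rfloor\sqcup\lfloor\upre\rfloor(X))\right)\upclose$. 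Since $f_I\notin\calU$ (its single support state $q_I$ has all windows at value $0$), the displayed equality gives $f_I\notin\left(\mu X.(\lfloor\calU\rfloor\sqcup\lfloor\upre\rfloor(X))\right)\upclose$ iff $f_I\notin\mu X.(\calU\cup\upre(X))$. By Proposition~\ref{pro:win-lose-safe} the latter holds iff \eve wins the safety game on $G'$ with unsafe set $\calU$, and by Lemmas~\ref{lem:dirfix-cor} and~\ref{lem:dirfix-comp} that is equivalent to \eve winning the $\dirfix(\lmax)$ objective in $G$. Chaining these equivalences yields the claim.

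The main obstacle I expect is verifying the predecessor identity $\lfloor\upre\rfloor(a)\upclose\cup\calU=\upre(a\upclose)\cup\calU$ in the inductive step: one inclusion requires showing that any $p'\notin\calU$ with a $\sigma$-successor in $a\upclose$ (for every $\sigma$) is dominated by some element actually placed in $\lfloor\upre\rfloor(a)$, which combines the monotonicity built into the definition of $\lfloor\upre\rfloor$ (the ``$\exists r'\in\mathcal F : (p',\sigma,r')\in\Delta'\land q'\preceq r'$'' clause) with Lemma~\ref{lem:upre-is-uc} to close upward; the reverse inclusion needs that every element of $\lfloor\upre\rfloor(a)$ is genuinely in $\upre(a\upclose)$, which is essentially immediate from $q'\preceq r'$ implying $r'\in a\upclose$. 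A secondary subtlety is that $\sqcup$ and $\lfloor\cdot\rfloor$ are only well-behaved with respect to $\upclose$ up to the $\calU$-slack, so every step of the induction must be stated modulo $\calU$ rather than as a clean equality of upward-closed sets — keeping the bookkeeping consistent is the part most likely to hide an error.
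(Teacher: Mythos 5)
Your proposal is correct and follows essentially the same route as the paper: the key identity $\lfloor\upre\rfloor(a)\upclose\cup\calU=\upre(a\upclose)\cup\calU$ that you isolate is exactly the content of the paper's auxiliary Lemma~\ref{lem:upre-ac} (stated there as $\lfloor\upre\rfloor(\lfloor S\rfloor)=\lfloor\upre(S)\setminus\calU\rfloor$, with Lemma~\ref{lem:upre-is-uc} supplying the upward-closure off $\calU$), after which both arguments lift the correspondence to the fixpoints by induction and conclude via Proposition~\ref{pro:win-lose-safe} and Lemmas~\ref{lem:dirfix-cor} and~\ref{lem:dirfix-comp}. The only cosmetic difference is that the paper drops your explicit ``$\cup\,\calU$'' slack because the seed $\lfloor\calU\rfloor$ keeps every iterate's upward closure a superset of $\calU$ anyway.
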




\section{Solving \fix~games}
Since \fix~games are a prefix-independent version of \dirfix~games,
it seems logical to consider an analogue of the perfect information game from
the previous section with a prefix-independent condition.  Indeed, the reader
might be tempted to extend the approach used to solve \dirfix~games by replacing
the safety objective with a \emph{co-B\"{u}chi} objective in order to solve
\ufix~or \fix~games. However, we observe that although \eve winning in the resulting game
is sufficient for her to win the original \fix~game, it is not necessary. Indeed,
an abstract play visits states from $\calU$ infinitely often if and only if for
infinitely many $i$ there is a concretization of the play prefix up to $i$ which
violates $\gw(i,\lmax)$. Nevertheless, this does not imply there exists one
(infinite) concretization of the play which violates $\gw(i,\lmax)$ for
infinitely many $i$. Figure~\ref{fig:fix-is-dif} illustrates this
phenomenon.

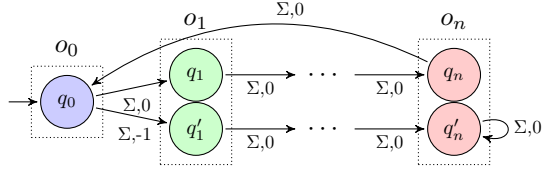
\begin{figure}
\begin{center}
\begin{tikzpicture}
	\node[state,initial,fill=blue!20](A){$q_0$};
	\node[state,fill=green!20](B)[right=of A, yshift=0.45cm]{$q_1$};
	\node[state,fill=green!20](C)[right=of A, yshift=-0.45cm]{$q_1'$};
	\node (D)[right=of B]{$\cdots$};
	\node (E)[right=of C]{$\cdots$};
	\node[state,fill=red!20](F)[right=of D]{$q_n$};
	\node[state,fill=red!20](G)[right=of E]{$q_n'$};

	\node[fit=(A),label=above:$o_0$]{};
	\node[fit=(B) (C),label=above:$o_1$]{};
	\node[fit=(F) (G),label=above:$o_n$]{};

	\path
	(A) edge node[el,swap,pos=0.3] {$\Sigma$,0} (B)
	(A) edge node[el,swap,pos=0.85] {$\Sigma$,-1} (C)
	(B) edge node[el,swap] {$\Sigma$,0} (D)
	(C) edge node[el,swap] {$\Sigma$,0} (E)
	(D) edge node[el,swap] {$\Sigma$,0} (F)
	(E) edge node[el,swap] {$\Sigma$,0} (G)
	(F) edge[bend right] node[el,swap,pos=0.4] {$\Sigma$,0} (A)
	(G) edge[loop right] node[el,swap] {$\Sigma$,0} (G);
\end{tikzpicture}
\caption{For $n>\lmax+1$ the abstract path $(o_0 \ldots o_n)^\omega$ is 
winning for the \fix~condition but infinitely often visits an unsafe 
state in the construction from Section~\ref{sec:dirfix-games}.}
\label{fig:fix-is-dif}
\end{center}
\end{figure}

For the reasons stated above, we propose to solve \fix~games in a different way.
We first introduce the notion of \emph{observer}. Let $\mathcal{A}$
be a deterministic parity automaton.\footnote{We refer the reader who is not
familiar with parity automata or games to~\cite{thomas95}.} We say
$\mathcal{A}$ is an \emph{observer for the objective} $V$ if the
language of $\mathcal{A}$ is $V$, i.e.  $\mathcal{L}(\mathcal{A}) = V$.
In~\cite{cd10}, the authors show that the synchronized product of $G$
and an observer for $V$ is a parity game of perfect information which is won by
\eve if and only if she wins $G$. Thus, it suffices to find an algorithm
to construct an observer for $\fix(\lmax)$ to be able to solve \fix~games.

For convenience, we start by describing a non-deterministic machine that accepts
as its language the complement of $\fix(\lmax)$. Note that all elements of
$\fix(\lmax)$ start with the observation $\{q_I\}$ so it suffices to describe
the machine that accepts any word $w \in (\Sigma \cdot Obs)^\omega$ such
that $\{q_I\} \cdot w \in \plays(G) \setminus \fix(\lmax)$. The machine we
construct is similar to the one used in~\cite{cd10} to make objectives of
imperfect information games visible.  Intuitively, at each step of the game and
after \adam has revealed the next observation we will guess his actual choice of
state using non-determinism.  Additionally, we shall guess whether or not a
violating window starts at the next step. The state space of the automaton will
therefore consist of a single state from $Q$, a negative integer to record the
weight of the tracked window, and the length of the current open window.

Formally, let $\mathcal{N}$ be the automaton consisting of the state space $F =
Q \times \{1,\dots,\lmax\} \times \{-W \cdot \lmax, \dots, -1 \} \cup
\{\bot\}$; initial state $(q_I,1,\bot)$; input alphabet $\Sigma' = \Sigma \times
Obs$; and $\Delta'' \subseteq F \times \Sigma' \times F$. The transition
relation $\Delta''$ has a transition $((p,i,n), (\sigma,o), (q,j,m))$ if $(p,
\sigma, q) \in \Delta$, $q \in o$,
\[
	m =
	\begin{cases}
		w(p,\sigma,q) & \mbox{if } w(p,\sigma,q) < 0 \\
		n + w(p,\sigma,q) & \mbox{if } n \neq \bot \land n +
			w(p,\sigma,q) < 0 \land i < \lmax \\
		\bot & \mbox{otherwise,}
	\end{cases}
\]
and
\[
	j = 
	\begin{cases}
		i + 1 & \mbox{if } m = n + w(p,\sigma,q) \\
		1 & \mbox{otherwise.}
	\end{cases}
\]
We say a state $(q,i,n) \in F$ is accepting if $i = \lmax$ and $n \neq \bot$.
The automaton accepts a word $x$ if and only it has a run $(q_0,i_0,n_0)$
$(\sigma_0,o_1)$ $(q_1,i_1,n_1)$ $(\sigma_1,o_2) \dots$ on $x$ such that for
infinitely many $j$ we have that $(q_j,i_j,n_j)$ is accepting. 

\begin{proposition}
	\label{pro:fix-lang}
	The non-deterministic B\"{u}chi automaton $\mathcal{N}$ accepts a word
	$\psi \in \plays(G)$ if and only if $\psi \not\in \fix(\lmax)$.
\end{proposition}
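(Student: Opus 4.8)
The plan is to prove both directions of the equivalence by relating accepting runs of $\mathcal{N}$ to concretizations of $\psi$ that eventually always fail the good-window property. Fix $\psi = o_0 \sigma_0 o_1 \sigma_1 \dots \in \plays(G)$ with $o_0 = \{q_I\}$, and write $x = (\sigma_0, o_1)(\sigma_1, o_2)\dots$ for the corresponding word over $\Sigma'$.

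For the ``only if'' direction, suppose $\mathcal{N}$ has an accepting run $(q_0,i_0,n_0)(q_1,i_1,n_1)\dots$ on $x$. First I would observe that the $Q$-component $q_0 q_1 \dots$ is a concretization $\pi \in \gamma(\psi)$: indeed the transition relation requires $(q_j,\sigma_j,q_{j+1}) \in \Delta$ and $q_{j+1} \in o_{j+1}$, and $q_0 = q_I \in o_0$. Next I would extract from the run the structure of a ``tracked window'': whenever the third component is a non-$\bot$ integer $n_k$, the automaton is tracking a window that started at some index $s \le k$, and an easy induction on $k - s$ shows $n_k = w(\pi[s..k])$ and $i_k = k - s + 1$, using exactly the update rules for $m$ and $j$ (a fresh window starts precisely when $m$ resets to a negative single weight with $j = 1$). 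Crucially, while the window is tracked we have $n_k < 0$, i.e. $w(\pi[s..k]) < 0$. An accepting state has $i_k = \lmax$ and $n_k \neq \bot$, so for such $k$ the window started at $s = k - \lmax + 1$ and every proper prefix of it had strictly negative sum as well; hence $w(\pi[s..(s+j)]) < 0$ for all $1 \le j \le \lmax$, which means $\pi \notin \gw(s, \lmax)$. Since accepting states occur infinitely often and each yields a (larger) such starting index $s$, there are infinitely many $s$ with $\pi \notin \gw(s, \lmax)$; in particular $\pi$ fails ``$\exists i\, \forall j \ge i: \pi \in \gw(j,\lmax)$'', so $\psi \notin \fix(\lmax)$.

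For the ``if'' direction, suppose $\psi \notin \fix(\lmax)$: there is $\pi = q_0 q_1 \dots \in \gamma(\psi)$ and an infinite set of indices $j$ with $\pi \notin \gw(j, \lmax)$, i.e. $w(\pi[j..(j+l)]) < 0$ for all $1 \le l \le \lmax$. I would then build an accepting run on $x$ by guessing exactly this $\pi$ in the $Q$-component and, each time the automaton is at a ``fresh'' point, deciding whether to start tracking a window there based on whether a bad window begins at the next index. The subtlety is that the automaton drops a tracked window (sets the third component to $\bot$) as soon as its running sum becomes $\ge 0$ or its length would exceed $\lmax$; but a bad window of length exactly $\lmax$ starting at $j$ keeps all partial sums negative through length $\lmax$, so the run reaches $i = \lmax$ with $n \neq \bot$ — an accepting state — precisely at step $j + \lmax$. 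So from each of the infinitely many bad indices $j$ we get one accepting-state visit, giving infinitely many, and the guessed run is valid because the transition constraints on $(p,i,n) \to (q,j,m)$ are satisfied by construction of $\pi$ and the deterministic update of $(i,n)$ once the ``start tracking here'' choices are made. Hence $\mathcal{N}$ accepts $\psi$.

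The main obstacle I anticipate is bookkeeping the exact semantics of the $(i,n)$ update, in particular the case where a new negative weight both closes the old window and opens a new one in the same step (the first clause $m = w(p,\sigma,q)$ with $j = 1$), and making sure that on the ``if'' side we schedule the window-starts so that we do not lose the window prematurely — this forces us to start tracking exactly at index $j$ for each bad $j$ rather than carrying over an older window. A secondary point worth stating carefully is that the accepting condition ``$i = \lmax \wedge n \neq \bot$'' captures ``a length-$\lmax$ window with all proper-prefix sums negative ended here'', which is the contrapositive of $\gw(\cdot,\lmax)$; once that correspondence is nailed down, the equivalence between ``infinitely many accepting visits'' and ``$\pi$ fails $\gw(j,\lmax)$ for infinitely many $j$'' is immediate, and the outer ``$\forall \pi \in \gamma(\psi)$'' versus ``$\exists \pi$'' flips exactly as expected when passing to the complement of $\fix(\lmax)$.
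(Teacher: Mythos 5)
Your proof follows essentially the same route as the paper's: both directions identify accepting runs of $\mathcal{N}$ with a concretization $\pi \in \gamma(\psi)$ that violates $\gw(j,\lmax)$ at infinitely many positions $j$, with the tracked component $(i,n)$ certifying an open window of length $\lmax$ exactly as in the paper, and with the overlap between the first two clauses of the $m$-update resolved nondeterministically just as the paper's own run construction requires. The only blemishes are cosmetic: a window reaching $i_k=\lmax$ at step $k$ opens at index $k-\lmax$ rather than $k-\lmax+1$, and in the converse direction one should pass to a subsequence of violating indices spaced at least $\lmax$ apart (as the paper does) so that the tracked windows do not overlap -- neither affects the argument.
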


At this point we determinize $\mathcal{N}$ and complement it to get a
deterministic automaton with state space of size exponential in the size of
$\mathcal{N}$ yet has parity index that is polynomial w.r.t.\ the size of $Q$
(see~\cite{safra88,safra92,piterman07}). The synchronized product of $G$ and the
observer yields a parity game with the same size bounds. The desired result
follows from the parity games' algorithm and results of~\cite{jurdzinski00}.

\begin{theorem}
	\label{thm:fix-exp-complete}
	Given WGA $G$, determining if \eve has a winning strategy for the
	$\fix(\lmax)$~objective can be decided in time exponential in $W$ and the size of $G$.
\end{theorem}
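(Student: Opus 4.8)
The plan is to follow the strategy already laid out in the paragraph preceding the statement: build the nondeterministic Büchi automaton $\mathcal{N}$ that recognizes $\plays(G) \setminus \fix(\lmax)$ (as asserted in Proposition~\ref{pro:fix-lang}), determinize and complement it to obtain a deterministic parity observer $\mathcal{A}$ for $\fix(\lmax)$, take the synchronized product of $G$ with $\mathcal{A}$, and invoke the result of~\cite{cd10} that \eve wins $G$ with objective $\fix(\lmax)$ if and only if she wins the resulting parity game of perfect information; finally, bound the running time of solving that parity game. So the theorem is essentially a bookkeeping exercise in chaining these ingredients together, and the real content is already in Proposition~\ref{pro:fix-lang}, which I take as given.

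First I would fix the size parameters of $\mathcal{N}$: its state space $F = Q \times \{1,\dots,\lmax\} \times (\{-W\cdot\lmax,\dots,-1\}\cup\{\bot\})$ has size $O(|Q| \cdot \lmax \cdot W \cdot \lmax)$, which is polynomial in $|Q|$, $\lmax$ and $W$ (recall $\lmax$ is assumed polynomially bounded in the size of the arena). Its input alphabet $\Sigma' = \Sigma \times Obs$ has size $|\Sigma|\cdot|Obs|$, again polynomial. By Proposition~\ref{pro:fix-lang}, $\mathcal{L}(\mathcal{N}) = \plays(G)\setminus\fix(\lmax)$ (more precisely, the set of suffixes $w$ with $\{q_I\}\cdot w \in \plays(G)\setminus\fix(\lmax)$). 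Next I would apply Safra's construction and its refinements~\cite{safra88,safra92,piterman07}: determinizing a nondeterministic Büchi automaton with $n$ states yields a deterministic parity automaton with $2^{O(n\log n)}$ states and $O(n)$ priorities. Complementing a deterministic parity automaton is free — one shifts all priorities by one — so from $\mathcal{N}$ we obtain a deterministic parity automaton $\mathcal{A}$ with $\mathcal{L}(\mathcal{A}) = \fix(\lmax)$, with $2^{O(|F|\log|F|)}$ states and $O(|F|)$ priorities; since $|F|$ is polynomial in $W$ and the size of $G$, the state count is exponential in $W$ and the size of $G$, while the number of priorities is polynomial.

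Then I would form the synchronized product game $G \times \mathcal{A}$: this is a parity game of perfect information whose state space is $\plays$-positions of $G$ paired with states of $\mathcal{A}$ — of size $|Q| \cdot |Obs| \cdot |\mathcal{A}|$, still exponential in $W$ and the size of $G$ — with the priority of a position inherited from the $\mathcal{A}$-component, hence polynomially many priorities. By the observer result of~\cite{cd10}, \eve has an observation-based winning strategy for $\fix(\lmax)$ in $G$ if and only if she wins this parity game. Finally I would invoke the parity-games algorithm of~\cite{jurdzinski00}, whose running time is (roughly) $m \cdot n^{O(d)}$ for a game with $n$ positions, $m$ edges and $d$ priorities; since $n$ is exponential in $W$ and the size of $G$, $m$ is at most $n^2$, and $d$ is polynomial, the total running time is exponential in $W$ and the size of $G$. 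This establishes the claimed bound.

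The only mildly delicate point — and the place where I would be most careful — is matching the determinization/complementation bounds to the desired statement: I must make sure that the exponential blowup lands only on the \emph{state} count and not on the \emph{number of priorities}, since the parity-game complexity $n^{O(d)}$ would otherwise be doubly exponential. This is exactly why the excerpt emphasizes that the determinized automaton "has parity index that is polynomial w.r.t.\ the size of $Q$": Safra-style constructions give $O(n)$ priorities for an $n$-state Büchi input, and here $n = |F|$ is polynomial, so $d$ is polynomial and $n^{O(d)}$ with $n$ exponential is still a single exponential. Beyond that, everything is routine composition of cited results; there is no new combinatorial obstacle to clear.
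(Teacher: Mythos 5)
Your proposal is correct and follows essentially the same route as the paper: construct $\mathcal{N}$ for the complement of $\fix(\lmax)$ (Proposition~\ref{pro:fix-lang}), determinize and complement via Safra/Piterman to get a deterministic parity observer with exponentially many states but polynomially many priorities, take the product with $G$ as in~\cite{cd10}, and solve the resulting parity game with~\cite{jurdzinski00}. You also correctly identify the one delicate point the paper itself emphasizes, namely that the exponential blow-up must land on the state count and not on the parity index, so that $n^{O(d)}$ remains singly exponential.
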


\begin{corollary}
	Given WGA $G$ with unary encoded weights, deciding if \eve has a
	winning strategy for the $\fix(\lmax)$~objective is \EXP-complete.
\end{corollary}

\section{Solving \ufix~games}
In order to determine the winner of \ufix~games, we proceed as in the previous
section by finding a non-deterministic B\"uchi automaton that recognizes the set
of bad abstract plays.  However, in this case the situation is more complicated
because a bad abstract play might arise from a violation in the
\emph{uniformity}, rather than because of a concrete path with infinitely many
window violations.  Figure~\ref{fig:fix-neq-ufix} illustrates this issue.  To
overcome this, we first provide an alternative characterization of the bad
abstract plays for \eve.  Consider some $\psi \in \plays(G)$. We say $\pi \in
\gamma(\psi)$ \emph{merges with infinitely many violating paths} if for all $i
\ge 0$, there are $j \ge i$, $k \ge j + \lmax$ and some $\chi \in
\gamma(\psi[..k])$ such that $\pi[k] = \chi[k]$ and $\chi \not\in \gw(j,
\lmax)$.  We refer to $j$ as the position of the violation and to $k$ as the
position of the merge.  Our next result formally states the relationship between
concrete plays merging for multiple violations and \ufix~games.

\begin{lemma}
\label{lem:ufix-eq-infopenwin}
	Given WGA $G$ and $\psi \in \plays(G)$, there is $\pi \in \gamma(\psi)$
	merging with infinitely many violating paths if and only if $\psi \not\in
	\ufix(\lmax)$.
\end{lemma}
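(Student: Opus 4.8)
The plan is to prove Lemma~\ref{lem:ufix-eq-infopenwin} by establishing both directions of the equivalence, unpacking the definitions of $\ufix(\lmax)$ and of ``merging with infinitely many violating paths'' and bridging them with a König's lemma / pigeonhole argument on the finite set of states $Q$. First I would recall that $\psi \notin \ufix(\lmax)$ means: for every $i \ge 0$ there is some concretization $\pi_i \in \gamma(\psi)$ and some $j \ge i$ with $\pi_i \notin \gw(j,\lmax)$, i.e.\ a window violation occurs at position $j$ along $\pi_i$. By the definition of $\gw$, a violation at position $j$ on $\pi_i$ is witnessed already by the finite prefix $\pi_i[..(j+\lmax)]$ (since no window of length $\le \lmax$ starting at $j$ has nonnegative average, and whether this holds depends only on the first $j+\lmax$ edges). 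So equivalently: there are infinitely many positions $j$ for which some $\chi \in \gamma(\psi[..(j+\lmax)])$ has $\chi \notin \gw(j,\lmax)$.

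For the ``only if'' direction (from $\psi \notin \ufix(\lmax)$ to the existence of a merging $\pi$), the key step is to build a single infinite concretization $\pi$ that ``catches'' infinitely many of these finitely-witnessed violations. I would do this with a König-style argument: consider the (finitely-branching, since $\Delta$ is total and $Q$ finite) tree of concrete prefixes of concretizations of $\psi$. For each of the infinitely many violating positions $j_1 < j_2 < \dots$, pick a witnessing finite concrete path $\chi^{(m)}$ of length $j_m + \lmax$ with a violation at $j_m$. Each $\chi^{(m)}$ ends in some state of $Q$; since $Q$ is finite, infinitely many of them end in the same state $q$, and moreover (pigeonholing again on which state is occupied at position $j_m$, or simply taking a subsequence) we can extract an infinite subsequence whose members are pairwise ``extendable-through-$q$''. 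Then, using that these are prefixes of concretizations of the \emph{same} abstract play $\psi$ and that every state of $\psi$ has at least one successor in $G$, I glue them together along $\psi$ into a single infinite $\pi \in \gamma(\psi)$ that agrees with the $m$-th chosen witness at its merge position $k_m = j_m + \lmax$; this $\pi$ then merges with infinitely many violating paths by construction, with violation positions $j_m$ and merge positions $k_m$.

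For the ``if'' direction (from a merging $\pi$ to $\psi \notin \ufix(\lmax)$), the argument is more direct: given $i \ge 0$, the merging property hands us $j \ge i$, $k \ge j+\lmax$ and $\chi \in \gamma(\psi[..k])$ with $\pi[k] = \chi[k]$ and $\chi \notin \gw(j,\lmax)$. I would extend $\chi$ to an infinite concretization $\hat\chi \in \gamma(\psi)$ by following $\pi$ from position $k$ onward (legitimate because $\pi[k] = \chi[k]$ and $\pi$ is itself a concretization of $\psi$, so the tail $\pi[k..]$ is a valid continuation consistent with $\psi$). Since the window property $\gw(j,\lmax)$ depends only on the edges between positions $j$ and $j+\lmax \le k$, which $\hat\chi$ inherits from $\chi$, we get $\hat\chi \notin \gw(j,\lmax)$ with $j \ge i$. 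As $i$ was arbitrary, no uniform $i$ works, i.e.\ $\psi \notin \ufix(\lmax)$.

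The main obstacle is the ``only if'' direction, and specifically making the gluing rigorous: we have infinitely many \emph{short, finite} violation witnesses scattered at growing positions along $\psi$, and we must stitch a \emph{single} infinite concrete path through all of infinitely many of them. The subtlety is that an arbitrary witness $\chi^{(m)}$ need not be a prefix of $\chi^{(m+1)}$, and arbitrary concretizations of $\psi$ passing through the right state at position $j_m$ need not be extendable to pass through the right state at position $j_{m+1}$. The fix is a two-level extraction: first pass to a subsequence of violation positions whose witnesses all end at a common state and, by a further diagonal/König argument over the finite state set, whose ``connection states'' are mutually compatible; then observe that because all of these are concretizations of the one abstract play $\psi$ and every state is non-blocking, the branching structure is a finitely-branching infinite tree, so König's lemma yields the desired infinite branch $\pi$. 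I would flag that the choice of which witnesses to keep may need care so that the merge positions $k_m = j_m + \lmax$ are strictly increasing and the tail-following never conflicts with an earlier commitment — handled by choosing $j_{m+1} > k_m$ at each stage, which is possible since there are infinitely many violation positions.
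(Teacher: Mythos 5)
Your first direction (a merging $\pi$ implies $\psi \notin \ufix(\lmax)$) is correct and is exactly the paper's argument: concatenate each finite violating witness $\chi$ with the tail $\pi[k..]$ to get an infinite concretization of $\psi$ violating $\gw(j,\lmax)$ for arbitrarily large $j$. The other direction, however, has a genuine gap. You correctly identify the difficulty --- stitching one infinite concretization through infinitely many scattered finite witnesses --- but the devices you invoke do not resolve it. Passing to a subsequence of witnesses ``all ending at a common state $q$'' buys nothing: they end at different positions $k_m$, and no single concretization of $\psi$ need visit $q$ at infinitely many of those positions (nor must it --- the merge may occur at any later position and any state, so the endpoints of the length-$(j_m+\lmax)$ witnesses are the wrong object to pigeonhole on). And ``K\"onig's lemma yields the desired infinite branch'' is not a valid conclusion: K\"onig's lemma produces \emph{some} infinite branch of the finitely-branching prefix tree, with no control over whether it meets infinitely many violating paths; selecting the right branch is the entire content of the lemma, and ``mutually compatible connection states'' names what must be constructed rather than constructing it. A further, minor, slip: your restatement of $\neg\,\ufix$ via finite witnesses in $\gamma(\psi[..(j+\lmax)])$ is an implication, not an equivalence, and it discards the fact that the witnesses are prefixes of \emph{infinite} concretizations --- exactly the property needed to prolong them to arbitrarily late merge positions.

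The paper closes this gap with a counting argument on the infinite violating concretizations $\pi_1,\pi_2,\dots$ themselves: since at every position they occupy at most $|Q|$ states, among any $|Q|\cdot n$ of them some $\eta_n$ coincides with at least $n$ of them at a common sufficiently late position, hence merges with at least $n$ violating paths; and one cannot discard more than $|Q|-1$ successive candidates that are pairwise eventually disjoint, so some candidate merges with infinitely many. (An infinite Ramsey argument on the ``eventually meet / eventually disjoint'' colouring of pairs of the $\pi_m$ would serve equally well.) Some quantitative use of $|Q|$ of this kind is unavoidable, and your sketch never supplies it.
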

\begin{proof}
	($\Rightarrow$) Assume there is a $\pi \in \gamma(\psi)$ merging with
	infinitely many violating paths. We have that there are two infinite
	sequences of indices $J = \{j_0, j_1, \dots\}$ and $K = \{k_0, k_1,
	\dots\}$ such that $j_l < j_{l+1}$ and $j_l + \lmax \le k_l$, for all
	$l \ge 0$, and for which we know that there is 
	concrete path $\chi_l \in \gamma(\psi[..k_l])$ such that $\chi[j_l..j_l
	+ \lmax]$ realizes an open window of length $\lmax$ and $\chi_l[k_l] =
	\pi[k_l]$, for all $l \ge 0$. 
	Observe that for all $l \ge 0$ we have that 
	$\chi_l \cdot \pi[k_l..] \in \gamma(\psi)$ and that
	$\chi_l \cdot \pi[k_l..] \not\in \gw(j_l,\lmax)$. In other words,
	\[
		\forall l \ge 0, \exists \alpha \in \gamma(\psi), \exists m \ge
		l : \alpha \not\in \gw(m, \lmax)
	\]
	which implies that $\psi \not\in \ufix(\lmax)$.

	($\Leftarrow$) Assume $\psi \not\in \ufix(\lmax)$. We have that there is
	a infinite sequence of indices $J = \{j_0, j_1, \dots\}$ such that $j_k
	< j_{k+1}$, for all $k \ge 0$, and for which we know there is a concrete
	play $\pi_k \in \gamma(\psi)$ such that $\pi_k \not\in \gw(j_k - \lmax,
	\lmax)$, for all $k \ge 0$. Observe that for all $i \ge 0$ the set
	$\gamma(\psi[i])$ is finite and bounded by $|Q|$. Thus, by Pigeonhole
	Principle we have that, for all $n \ge 0$ there is $\eta_n \in \{\pi_m
	\st 1 \le m \le |Q| \cdot n \} \subseteq \gamma(\psi)$ which merges
	with at least $n$ violating paths. Consider an arbitrary $\eta_1$. If
	$\eta_1$ merges with infinitely many violating paths then we are done
	and the claim holds.  Otherwise it only merges with a finite number of
	violating paths, say $a_1$. From the previous argument we know there is
	an $\eta_{a'_1} \in \gamma(\psi)$ that merges with at least $a'_1 =
	a_1 + 1$. Clearly $\eta_1$ and $\eta_{a'_1}$ are disjoint at every point
	after $a'_1$, lest $\eta_1$ would merge with a new violating path. We
	inductively repeat the process, if $\eta_{a'_i}$ merges with infinitely
	many violating paths then we are done. Otherwise it only merges with
	some finite number of violating paths, say $a_i$. In that case we turn
	our attention to $\eta_{a'_{i+1}}$.  Note that since $Q$ is finite this
	process can only be done a finite number of times. Indeed, after having
	discarded at most $|Q| - 1$ concrete plays (which are disjoint after
	some finite point) it must be the case the last remaining possible
	concrete play has the desired property or we would have a contradiction
	with our assumptions.  Thus, there is some concrete play $\pi \in
	\gamma(\psi)$ that merges with infinitely many violating paths.
\qed\end{proof}

We now construct the non-deterministic B\"uchi automaton $\mathcal{N}'$ that
recognizes plays that contain a concrete path that merges with infinitely many
violating paths. The idea is that we non-deterministically keep track of two
paths: one that will eventually witness a violation and then merge with the
other, which ultimately serves as the witness for the path that merges with
infinitely many violating paths. When the two paths merge, the automaton
non-deterministically selects a new path to witness the violation. This is
achieved by non-deterministically selecting any state in the belief set of Eve
as these states represent the end states of any concrete play consistent with
the abstract play so far. To avoid the double exponential associated with
taking the Reif construction before determinizing the automaton, we instead
compute the belief set on-the-fly using a Moore machine that feeds into our
non-deterministic automaton. By transferring the exponential state increase to
an exponential increase in the alphabet size, the overall size of the
determinized automaton (after composition with the Moore machine) will be at most
singly exponential in the size of our game and $W$.

More specifically, denote by $\mathcal{B}$ the machine that, given
$\psi = o_0 \sigma_0 o_1 \sigma_1 \dots \in \plays(G)$ as its input yields the
infinite sequence $o_0 \sigma_0 s_0 o_1 \sigma_1 s_1 \dots \in (Obs \cdot
\Sigma \cdot \pow(Q))^\omega$ such that $s_0 = \{q_I\}$ and for all $i \ge 0$ we
have $s_{i+1} = \post_{\sigma_i}(s_i)$. One can easily give a definition of
$\mathcal{B}$ -- which closely resembles a subset construction -- with a state
space at most exponential w.r.t. $G$. Observe that the machine realizes a
continuous function, in the sense that every prefix of length $i$ of the input
uniquely defines the next $s_{i+1}$ annotation. Thus, the annotation can be done
on-the-fly.

Formally, $\mathcal{N}'$ consists of the state space $F' = Q \times Q \times
\{1,\dots,\lmax\} \times \{-W \cdot \lmax, \dots, -1 \} \cup \{\bot,\top\}$;
initial state $(q_I,q_I,1,\bot)$; input alphabet $\Sigma'' = \Sigma \times Obs
\times \pow(Q)$; and $\Delta''' \subseteq F \times \Sigma'' \times F$. The
transition relation $\Delta'''$ has a transition $((p,p',i,n), (\sigma,o,s),
(q,q',j,m))$ if $(p', \sigma, q') \in \Delta$, $q \in s$, $q' \in o$,
\[
	m =
	\begin{cases}
		w(p,\sigma,q) & \mbox{if } (p,\sigma,q) \in \Delta \land w(p,\sigma,q) < 0 \\
		n + w(p,\sigma,q) & \mbox{if } (p,\sigma,q)\in \Delta \land n \neq \bot \land n +
			w(p,\sigma,q) < 0 \land i < \lmax \\
		\top & \mbox{if } (p,\sigma,q) \in \Delta \land (n \neq \top
			\lor p \neq p') \land n \neq \bot \land i = \lmax \\
		\bot & \mbox{otherwise,}
	\end{cases}
\]
and
\[
	j = 
	\begin{cases}
		\lmax & \mbox{if } m = \top \\
		i + 1 & \mbox{if } m = n + w(p,\sigma,q) \\
		1 & \mbox{otherwise.}
	\end{cases}
\]
We say a state $(q,q',i,n) \in F'$ is accepting if $q = q'$, $n = \top$.  The
automaton accepts a word $x$ if and only it has a run $(q_0,q'_0,i_0,n_0)$
$(\sigma_0,o_1,s_1)$ $(q_1,q'_1,i_1,n_1)$ $(\sigma_1,o_2,s_2) \dots$ on $x$ such
that for infinitely many $j$ we have that $(q_j,q'_j,i_j,n_j)$ is accepting. 

\begin{proposition}
	\label{pro:ufix-lang}
	The non-deterministic B\"{u}chi automaton $\mathcal{N}'$ accepts a word
	$\alpha = \mathcal{B}(\psi)$, where $\psi \in \plays(G)$, if and
	only if $\psi \not\in \ufix(\lmax)$.
\end{proposition}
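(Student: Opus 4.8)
The plan is to prove Proposition~\ref{pro:ufix-lang} by showing that the automaton $\mathcal{N}'$, running on the annotated input $\mathcal{B}(\psi)$, accepts exactly the plays $\psi$ for which some concrete path merges with infinitely many violating paths, and then invoking Lemma~\ref{lem:ufix-eq-infopenwin}. So the real work is the equivalence: $\mathcal{N}'$ accepts $\mathcal{B}(\psi)$ if and only if there is $\pi \in \gamma(\psi)$ merging with infinitely many violating paths.

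For the ``if'' direction I would start from such a $\pi$, together with the two index sequences $J = \{j_0 < j_1 < \dots\}$ and $K = \{k_0 < k_1 < \dots\}$ from the definition of merging, and the witnessing concrete paths $\chi_l \in \gamma(\psi[..k_l])$ with $\chi_l \not\in \gw(j_l, \lmax)$ and $\chi_l[k_l] = \pi[k_l]$. I would build an accepting run of $\mathcal{N}'$ whose first coordinate $q$ tracks $\pi$ and whose second coordinate $q'$ tracks the current violation-witness: between $k_{l-1}$ and $k_l$ the automaton tracks $\chi_l$ in the $p'$ component while computing the weight $n$ and length $i$ of the window opened at step $j_l$; because $\chi_l \not\in \gw(j_l,\lmax)$, the window of length $\lmax$ starting at $j_l$ has strictly negative weight, so the $n + w$ branch of the transition applies for $\lmax$ steps and then the $\top$ branch fires (the guard $n \neq \top \lor p \neq p'$ is satisfied since we are on a fresh window), producing $m = \top$. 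The merge at $k_l$, where $\chi_l[k_l] = \pi[k_l]$, means the $p$ and $p'$ coordinates coincide, so the state $(q,q',i,n)$ with $q=q'$, $n=\top$ is accepting; the run passes through an accepting state once per $l$, hence infinitely often. I must also make sure the $\pow(Q)$ annotation $s_{l}$ in the input is large enough to contain the state where the new violation-witness branches off — this is exactly why $\mathcal{B}$ computes the belief set, since any endpoint of a concrete path consistent with $\psi[..k_{l-1}]$ lies in $s_{k_{l-1}}$.

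For the ``only if'' direction I would take an accepting run of $\mathcal{N}'$ on $\mathcal{B}(\psi)$ and extract the concrete path followed by the first coordinate; the membership conditions $q' \in o$ (resp. $q \in s$) in $\Delta'''$ guarantee this is a genuine element of $\gamma(\psi)$ (resp. that the $p'$-path is consistent with the belief annotation, hence extends to a concrete path in $\gamma(\psi[..k])$ once merged). Each visit to an accepting state $q = q'$, $n = \top$ certifies that a window of length $\lmax$ with negative total weight was just completed on the $p'$-path and that this path has now rejoined the $p$-path; reading off the step where that window started gives a violation position $j$, the current step gives the merge position $k \ge j + \lmax$, and the $p'$-path up to $k$ spliced with the $p$-path tail is the required $\chi$. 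Infinitely many accepting states thus yield infinitely many $(j,k)$ pairs, i.e. $\pi$ merges with infinitely many violating paths; a minor point is to check the $j$'s can be taken unbounded, which follows because the length coordinate resets to $1$ after each $\top$.

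The main obstacle I expect is the bookkeeping around the $\top$/$\bot$ sentinel and the guard $n \neq \top \lor p \neq p'$: one has to verify carefully that $\top$ is produced exactly when a full length-$\lmax$ negative window closes, that the automaton does not ``reuse'' the same violation for two accepting visits (this is what the $p \neq p'$ disjunct prevents once the paths have merged), and that after $\top$ the machinery correctly restarts tracking a brand-new window on a freshly guessed $p'$-path drawn from the belief set. Establishing the bijection between accepting visits and genuine merge events — rather than just an inequality in one direction — is where the transition relation's case analysis has to be matched line-by-line against the definition of ``merges with infinitely many violating paths'', and I would expect to spend most of the proof there; the appeal to Lemma~\ref{lem:ufix-eq-infopenwin} at the end is then immediate.
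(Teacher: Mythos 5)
Your overall strategy is exactly the paper's: reduce the proposition to the claim that $\mathcal{N}'$ accepts $\mathcal{B}(\psi)$ if and only if some $\pi\in\gamma(\psi)$ merges with infinitely many violating paths, build or deconstruct runs accordingly, and finish with Lemma~\ref{lem:ufix-eq-infopenwin}. However, you have the roles of the two $Q$-coordinates reversed, and this is not merely a naming issue --- several of your steps fail against the actual transition relation. In $\Delta'''$ the \emph{second} (primed) coordinate carries the constraints $(p',\sigma,q')\in\Delta$ and $q'\in o$, so it is the coordinate that traces a genuine concretization of $\psi$ and can never be re-seeded; the \emph{first} coordinate is constrained only by $q\in s$, so it is the one that may jump inside the belief set (at the price of resetting to $\bot$), and it is on this coordinate that the window arithmetic $w(p,\sigma,q)$ and $n+w(p,\sigma,q)$ is performed. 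Hence: (i) if, as you propose, the violation witness $\chi_l$ lives in the $p'$-component, its negative window is never registered in $n$, since $n$ only accumulates weights along the first coordinate; (ii) the $p'$-component cannot be ``freshly guessed from the belief set'' when a new witness $\chi_{l+1}$ is selected, because consecutive primed states must be $\Delta$-related, whereas $\chi_{l+1}$ generally branches off from a different state than the one where $\chi_l$ ended; (iii) in the converse direction, ``the concrete path followed by the first coordinate'' need not be a concrete path at all --- the genuine element of $\gamma(\psi)$ must be read off the primed coordinate, which is exactly what the paper does.

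The repair is mechanical: swap $q$ and $q'$ throughout, letting the first coordinate track the current violation witness (re-seeded from the belief annotation after each merge) and the second track the persistent path $\pi$. After that swap your run construction and case analysis coincide with the paper's proof. Your remaining observations --- the role of the guard $n\neq\top\lor p\neq p'$ in preventing the same violation from being counted twice, the need for the belief annotation to contain the branch-off state of each new witness, and the check that the violation positions can be taken unbounded --- are correct and are precisely the points on which the paper's argument also rests.
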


We recall that determinizing $\mathcal{N}'$ and complementing it yields an
exponentially bigger deterministic automaton. Its composition with
$\mathcal{B}$, itself exponentially bigger, accepts the desired set of plays and
is still singly exponential in the size of the original arena and $W$. Once
more, the desired result follows from the algorithm presented
in~\cite{jurdzinski00}.

\begin{theorem}
	\label{thm:ufix-exp-complete}
	Given WGA $G$, determining if \eve has a
	winning strategy for the $\ufix(\lmax)$~objective can be decided in time
	exponential in $W$ and the size of $G$. 
\end{theorem}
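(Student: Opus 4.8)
The plan is to reduce the question to solving a parity game of perfect information, following the \emph{observer} methodology of~\cite{cd10} and reusing the machinery already assembled in this section. I would start from the non-deterministic B\"uchi automaton $\mathcal{N}'$ and the Moore machine $\mathcal{B}$ defined above. By Proposition~\ref{pro:ufix-lang}, for every $\psi \in \plays(G)$ the automaton $\mathcal{N}'$ accepts $\mathcal{B}(\psi)$ precisely when $\psi \notin \ufix(\lmax)$; and since $\mathcal{B}$ computes a continuous (prefix-determined) function, the composition $\mathcal{N}' \circ \mathcal{B}$ is a non-deterministic B\"uchi automaton over the alphabet $\Sigma \times Obs$ whose language is exactly the set of abstract plays of $G$ that are not in $\ufix(\lmax)$. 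Its state space is singly exponential in the size of $G$ and in $W$: the state space of $\mathcal{N}'$ is polynomial in $|Q|$, $W$ and $\lmax$, with $\lmax$ polynomially bounded by assumption, while $\mathcal{B}$ contributes only the single exponential of a subset construction.

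Next I would determinise and complement $\mathcal{N}' \circ \mathcal{B}$. By the Safra/Piterman constructions~\cite{safra88,safra92,piterman07} this yields a deterministic parity automaton $\mathcal{A}$ over $\Sigma \times Obs$ that is still singly exponential in the size of $G$ and $W$, whose parity index is polynomial in $|Q|$, $W$ and $\lmax$, and such that $\mathcal{L}(\mathcal{A})$ is exactly the set of abstract plays of $G$ lying in $\ufix(\lmax)$; that is, $\mathcal{A}$ is an observer for the objective $\ufix(\lmax)$ in the sense introduced when solving \fix~games. Invoking the result of~\cite{cd10}, the synchronised product of $G$ with $\mathcal{A}$ is a parity game of perfect information which \eve wins if and only if she wins the $\ufix(\lmax)$~game on $G$. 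This product arena has size at most the product of $|G|$ and $|\mathcal{A}|$, hence is still singly exponential in $|G|$ and $W$, and it inherits the polynomial parity index of $\mathcal{A}$. Finally, running the parity-game algorithm of~\cite{jurdzinski00}, whose running time is polynomial in the arena and only exponential in the index (which here is polynomial), decides the winner in time exponential in $W$ and the size of $G$, as claimed.

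The delicate point --- and what the preceding constructions were designed to handle --- is keeping the whole pipeline at a \emph{single} exponential rather than a double one. The naive route, namely first applying a knowledge-based subset construction to $G$ to obtain a perfect-information game and only then running the \fix-style B\"uchi automaton and determinising it, would determinise an automaton that is already exponentially large and thus incur a second exponential. The fix, already built into $\mathcal{N}'$ and $\mathcal{B}$, is to carry the belief in the \emph{input letters} rather than in the state space of the automaton that gets determinised: $\mathcal{B}$ emits the current belief on-the-fly as part of the letter, $\mathcal{N}'$ merely reads it and stays polynomial in size, and the exponential re-composition with $\mathcal{B}$ is performed only \emph{after} determinisation. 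Carrying out the proof carefully therefore amounts to checking that this reordering is sound --- determinising $\mathcal{N}'$ and then composing with the deterministic $\mathcal{B}$ yields exactly the deterministic recogniser of $\{\psi \in \plays(G) : \psi \in \ufix(\lmax)\}$ --- and that the parity index output by determinisation depends only on the (polynomial) size of $\mathcal{N}'$ and not on $\mathcal{B}$; both facts are standard, but they are where the single-exponential bound is actually earned.
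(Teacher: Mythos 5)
Your proposal is correct and follows essentially the same route as the paper: it builds on $\mathcal{N}'$, $\mathcal{B}$ and Proposition~\ref{pro:ufix-lang}, determinises and complements via Safra/Piterman, composes with $\mathcal{B}$ only \emph{after} determinisation to keep the pipeline singly exponential, and then solves the resulting perfect-information parity game with the algorithm of~\cite{jurdzinski00} via the observer product of~\cite{cd10}. Your closing paragraph even makes explicit (and correct) the ordering point that the paper only states in passing; just make sure the second paragraph is read with that ordering in mind, since determinising the already-exponential composition $\mathcal{N}' \circ \mathcal{B}$ directly would cost a second exponential.
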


\begin{corollary}
	Given WGA $G$ with unary encoded weights, deciding if \eve has a
	winning strategy for the $\ufix(\lmax)$~objective is \EXP-complete.
\end{corollary}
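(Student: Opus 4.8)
The plan is to mirror the strategy used for $\fix$~games in Theorem~\ref{thm:fix-exp-complete}, with the extra ingredient that the belief set is tracked by the separate Moore machine $\mathcal{B}$ rather than inside the automaton, so as to keep the final construction singly exponential. By Lemma~\ref{lem:ufix-eq-infopenwin}, a play $\psi$ is losing for \eve exactly when some $\pi \in \gamma(\psi)$ merges with infinitely many violating paths, and by Proposition~\ref{pro:ufix-lang} the non-deterministic B\"uchi automaton $\mathcal{N}'$ recognizes precisely these plays once they have been annotated by $\mathcal{B}$. The first thing to record is that $\mathcal{N}'$ is \emph{small}: its state space $F'$ has size polynomial in $|Q|$, $\lmax$ and $W$, hence --- since $\lmax$ is polynomially bounded by the arena --- polynomial in the size of $G$ and $W$.

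First I would determinize $\mathcal{N}'$ via the Safra construction~\cite{safra88,safra92,piterman07}, obtaining a deterministic parity automaton $\mathcal{D}$ over the alphabet $\Sigma'' = \Sigma \times Obs \times \pow(Q)$ whose state space is exponential in $|\mathcal{N}'|$, thus exponential in the size of $G$ and $W$, and with only polynomially many priorities. Since $\mathcal{D}$ is deterministic, complementing it is free (shift the priorities by one), yielding a deterministic parity automaton $\overline{\mathcal{D}}$ accepting exactly the annotated plays $\mathcal{B}(\psi)$ with $\psi \in \ufix(\lmax)$.

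Next I would compose $\overline{\mathcal{D}}$ with $\mathcal{B}$. The point is that $\mathcal{B}$ realizes a continuous, letter-to-letter function --- every length-$i$ prefix of a play uniquely determines the next annotation $s_{i+1}$ --- so the synchronized product $\overline{\mathcal{D}} \circ \mathcal{B}$ is a well-defined deterministic parity automaton reading plays over $\Sigma \times Obs$, and Proposition~\ref{pro:ufix-lang} guarantees that its language, restricted to words $w$ with $\{q_I\}\cdot w \in \plays(G)$, is $\ufix(\lmax)$; that is, it is an observer for $\ufix(\lmax)$ in the sense of~\cite{cd10}. Its state count is the product of the exponential state count of $\overline{\mathcal{D}}$ with that of $\mathcal{B}$, which is itself at most exponential in $|Q|$, so the observer is still singly exponential in the size of $G$ and $W$, with polynomially many priorities. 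Finally, by~\cite{cd10} the synchronized product of $G$ with this observer is a perfect-information parity game won by \eve if and only if she wins $\ufix(\lmax)$ in $G$, inheriting the same size and index bounds; applying the parity-game algorithm of~\cite{jurdzinski00}, whose running time on a game with $n$ vertices and $d$ priorities is roughly $n^{O(d)}$, gives a decision procedure running in time $2^{\mathrm{poly}(|G|,W)}$, as claimed.

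The main obstacle is the order of operations, which is exactly why $\mathcal{B}$ is kept separate. Determinizing the product $\mathcal{N}' \times \mathcal{B}$ directly would be fatal, since that product is already exponential and Safra's construction would push it to doubly exponential; determinizing the polynomial-size $\mathcal{N}'$ first and only then multiplying by the exponential-size $\mathcal{B}$ keeps everything singly exponential. The one place that needs genuine care is checking that this reordering is sound --- that the continuity of $\mathcal{B}$ really does make $\overline{\mathcal{D}}\circ\mathcal{B}$ deterministic and language-preserving --- which is where the argument for Proposition~\ref{pro:ufix-lang} and the observation that $\mathcal{B}$'s annotation is produced on-the-fly do the work.
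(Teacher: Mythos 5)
Your argument establishes only half of the corollary. Everything you write is an upper-bound argument: you re-derive the content of Theorem~\ref{thm:ufix-exp-complete} (build $\mathcal{N}'$, determinize and complement it before composing with the belief-tracking machine $\mathcal{B}$, then solve the resulting perfect-information parity game with~\cite{jurdzinski00}). That part is sound and matches the paper's route, and together with the observation that unary encoding makes $W$ polynomial in the input size it does place the problem in \EXP. But the corollary claims \EXP-\emph{completeness}, and you never address the matching lower bound. Without an \EXP-hardness argument you have proved membership, not completeness.

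The missing piece is exactly Lemma~\ref{lem:dirfix-low}: determining whether \eve wins the $\ufix(\lmax)$ objective is \EXP-hard even for unary weights and fixed $\lmax$, via the reduction from safety games with imperfect information (the weight function there uses only $0$ and $-1$, so unary encoding is no restriction, and all the window objectives coincide on the constructed arena). You should cite that lemma (or reproduce its reduction) and then combine it with your upper bound to conclude \EXP-completeness. As a minor additional point, it is worth making explicit where the unary-encoding hypothesis enters your upper bound: the running time $2^{\mathrm{poly}(|G|,W)}$ is exponential in the \emph{value} of $W$, which is only polynomial in the input length when weights are written in unary; with binary weights the same analysis would not yield membership in \EXP.
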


\bibliographystyle{abbrv}
\bibliography{refs}

\begin{thebibliography}{10}

\bibitem{abk11}
S.~Almagor, U.~Boker, and O.~Kupferman.
\newblock What's decidable about weighted automata?
\newblock In {\em Automated Technology for Verification and Analysis}, pages
  482--491. Springer, 2011.

\bibitem{BerwangerCWDH09}
D.~Berwanger, K.~Chatterjee, M.~D. Wulf, L.~Doyen, and T.~A. Henzinger.
\newblock Alpaga: {A} tool for solving parity games with imperfect information.
\newblock In {\em {TACAS} 2009}, volume 5505 of {\em LNCS}. Springer, 2009.

\bibitem{bsv04}
H.~Bj{\"o}rklund, S.~Sandberg, and S.~Vorobyov.
\newblock Memoryless determinacy of parity and mean payoff games: a simple
  proof.
\newblock {\em TCS}, 310(1):365--378, 2004.

\bibitem{bbfjr12}
A.~Bohy, V.~Bruy{\`e}re, E.~Filiot, N.~Jin, and J.-F. Raskin.
\newblock Acacia+, a tool for {LTL} synthesis.
\newblock In {\em CAV}, pages 652--657. Springer, 2012.

\bibitem{bcdgr11}
L.~Brim, J.~Chaloupka, L.~Doyen, R.~Gentilini, and J.-F. Raskin.
\newblock Faster algorithms for mean-payoff games.
\newblock {\em Formal methods in system design}, 38(2):97--118, 2011.

\bibitem{ChakrabartiAHS03}
A.~Chakrabarti, L.~de~Alfaro, T.~A. Henzinger, and M.~Stoelinga.
\newblock Resource interfaces.
\newblock In {\em {EMSOFT} 2003}, volume 2855 of {\em LNCS}. Springer, 2003.

\bibitem{cd10}
K.~Chatterjee and L.~Doyen.
\newblock The complexity of partial-observation parity games.
\newblock In {\em LPAR'10}, pages 1--14. Springer, 2010.

\bibitem{cdhr06}
K.~Chatterjee, L.~Doyen, T.~A. Henzinger, and J.-F. Raskin.
\newblock Algorithms for omega-regular games with imperfect information.
\newblock In {\em CSL}, pages 287--302, 2006.

\bibitem{cdrr13}
K.~Chatterjee, L.~Doyen, M.~Randour, and J.-F. Raskin.
\newblock Looking at mean-payoff and total-payoff through windows.
\newblock In {\em ATVA}, pages 118--132. Springer, 2013.

\bibitem{ddgrt10}
A.~Degorre, L.~Doyen, R.~Gentilini, J.-F. Raskin, and S.~Toru{\'n}czyk.
\newblock Energy and mean-payoff games with imperfect information.
\newblock In {\em CSL}, pages 260--274, 2010.

\bibitem{DR09}
L.~Doyen and J.-F. Raskin.
\newblock Antichains for the automata-based approach to model-checking.
\newblock {\em Logical Methods in Computer Science}, 5(1), 2009.

\bibitem{DoyenR10}
L.~Doyen and J.-F. Raskin.
\newblock Antichain algorithms for finite automata.
\newblock In {\em {TACAS} 2010}, volume 6015 of {\em LNCS}, pages 2--22.
  Springer, 2010.

\bibitem{em79}
A.~Ehrenfeucht and J.~Mycielski.
\newblock Positional strategies for mean payoff games.
\newblock {\em International Journal of Game Theory}, 8:109--113, 1979.

\bibitem{ej91}
E.~A. Emerson and C.~S. Jutla.
\newblock Tree automata, mu-calculus and determinacy.
\newblock In {\em Foundations of Computer Science, 1991. Proceedings., 32nd
  Annual Symposium on}, pages 368--377. IEEE, 1991.

\bibitem{gradel04}
E.~Gr{\"a}del.
\newblock Positional determinacy of infinite games.
\newblock In {\em STACS}, pages 4--18. Springer, 2004.

\bibitem{jurdzinski00}
M.~Jurdzinski.
\newblock Small progress measures for solving parity games.
\newblock In {\em STACS}, volume 1770 of {\em LNCS}, pages 290--301. Springer,
  2000.

\bibitem{piterman07}
N.~Piterman.
\newblock From nondeterministic {B}{\"u}chi and {S}treett automata to
  deterministic parity automata.
\newblock {\em Logical Methods in Computer Science}, 3(3), 2007.

\bibitem{reif84}
J.~H. Reif.
\newblock The complexity of two-player games of incomplete information.
\newblock {\em Journal of computer and system sciences}, 29(2):274--301, 1984.

\bibitem{safra88}
S.~Safra.
\newblock On the complexity of $\omega$-automata.
\newblock In {\em Symposium on Foundations of Computer Science}, pages
  319--327. IEEE, 1988.

\bibitem{safra92}
S.~Safra.
\newblock Exponential determinization for omega-automata with strong-fairness
  acceptance condition (extended abstract).
\newblock In {\em STOC}, pages 275--282, 1992.

\bibitem{thomas95}
W.~Thomas.
\newblock On the synthesis of strategies in infinite games.
\newblock In {\em STACS}, pages 1--13. Springer, 1995.

\bibitem{zp96}
U.~Zwick and M.~Paterson.
\newblock The complexity of mean payoff games on graphs.
\newblock {\em TCS}, 158(1):343--359, 1996.

\end{thebibliography}

\newpage

\appendix

\section{Proof of Theorem~\ref{thm:mpfm-then-bnd}}
\begin{proof}
	In~\cite{ddgrt10} the authors show that if \eve is only allowed to play
	finite memory strategies then she wins the $\mpobjinf(\nu)$~game if and
	only if she wins the $\mpobjsup(\nu)$~game, for any $\nu \in
	\mathbb{Q}$. We show the claim holds for $\mpobjinf(\epsilon)$. Let
	$\lambda_\exists = \langle M, m_0, \alpha_u, \alpha_o \rangle$ be the
	deterministic Moore machine representation of \eve's finite memory
	winning strategy. Consider the product of the arena with \eve's finite
	memory winning strategy, $G \times M$, constructed in the obvious
	manner, i.e. every path in $G \times M$ corresponds to a concrete path
	consistent with her strategy.  Clearly all cycles in $G \times M$ have
	weight of at least $\epsilon$, otherwise \adam can create a concrete
	path with mean-payoff value less than $\epsilon$ by ``pumping'' the
	cycles with value less than $\epsilon$. As any path in $G \times M$
	corresponds to concrete plays consistent with \eve's strategy, this
	contradicts the fact that the strategy is winning for her. By Pigeonhole
	Principle we have that for any path in $G \times M$: if a window opens
	at step $i$, then after $i$ there is a sequence of length at most
	$|M||Q| - 1$ that is not involved in any cycle. Now, since every cycle
	has weight $\epsilon > 0$, after at most 
	\[
		\mu = \frac{W \cdot |M||Q|}{\epsilon} \cdot |M||Q|
	\]
	steps the window will have closed. It follows that for
	all $\psi \in \plays(G)$ consistent with her strategy:
	\[
		\forall \pi \in \gamma(\psi), \forall i \ge 0 : \pi \in \gw(i, \mu)
	\]
	which concludes our argument.
\qed\end{proof}

\section{Proof of Theorem~\ref{thm:bwmp-undec}}
\begin{proof}
	We provide a reduction from the universality of weighted finite automata
	which is undecidable~\cite{abk11}.  A \emph{weighted finite automaton}
	is a tuple $\mathcal{N}=\langle Q, \Sigma, q_I, \Delta, w, F \rangle$
	where $F \subseteq Q$ is a set of final accepting states. A {\em
	(accepting) run} of the automaton on a word $x = \sigma_0 \sigma_1 \dots
	\sigma_n \in \Sigma^*$ is a sequence $r = q_0 q_1 \dots q_n \in Q^+$
	such that $q_n \in F$ and $(q_i, \sigma_i, q_{i+1}) \in \Delta$ for all
	$0 \le i < n$. The cost of the run $r$ is $w(r) = \sum^{n-1}_{i=0}
	w(q_i,\sigma_i,q_{i+1})$. If the automaton is non-deterministic, it may
	have several runs on $x$. In that case, the \emph{cost of $x$ in
	$\mathcal{N}$} (denoted by $\mathcal{N}(x)$) is defined as the
	minimum of the costs of all its accepting runs on $x$.

	The \emph{universality} problem for weighted automata is to decide
	whether, for a given automaton $\mathcal{N}$, the following holds:
	\[
		\forall x \in \Sigma^* : \mathcal{N}(x) < 0.
	\]

	We construct a blind WGA, $G_\mathcal{N}$, so that:
	\begin{itemize}
		\item if $\mathcal{N}$ is universal, then \eve has a winning
			strategy for the objective \udirbnd,
		\item if $\mathcal{N}$ is not universal, then \adam has a
			winning strategy for the complement of the objective
			\bnd. 
	\end{itemize}
	As shown in Fig.~\ref{fig:relations}, $\udirbnd \subseteq \bnd$ and all
	the other BWMP objectives lie in between those two. So, our reduction
	establishes the undecidability of all BWMP objectives at once.
	
	Our reduction follows the gadgets given in
	Fig.~\ref{fig:infty-hash}-\ref{fig:no-asc-chains}-\ref{fig:sim-automaton}.
	When the game starts, \adam chooses to play from one of the three
	gadgets. As the game is blind for \eve, she does not know what is the
	choice of \adam and so she must be prepared for all possibilities. Note
	also that as \eve is blind, her strategy can be formalized by an
	infinite word $w \in \Sigma \cup \{ \# \}^{\omega}$.  Let us show first
	that the two first gadgets force \eve to play a word $w$ such that:
	\begin{itemize}
		\item[$(C_1)$] there are infinitely many $\#$ in $w$, and
		\item[$(C_2)$] there exists a bound $b \in \mathbb{N}$ such that
			the distance between two consecutive $\#$ in $w$ is
			bounded by $b$.
	\end{itemize}
	
	Assume that \eve plays a word $w=\#w_1 \# w_2\# w_3\# \dots \# w_n \#
	\dots$ that respects conditions $C_1$ and $C_2$, with each $w_i \in
	\Sigma^*$. 
	First, if \adam decides to play in the first gadget
	(Fig.~\ref{fig:infty-hash}), then either \adam stays in state $q_1$
	forever, and he does not open any window, or he decides at some
	point to go from $q_1$ to $q_2$, whereupon he does open a window.
	However, after at most $b$ steps \adam has to leave $q_2$ for $q_3$ at
	the next occurence of the $\#$ symbol, the bound $b$ is guaranteed by
	$C_2$.  After at most $b$ additional steps, the open window will be
	closed as the self loop on $q_3$ is labeled with the weight $+1$. So in
	this case, \eve wins the objective \udirbnd.
	Second, if \adam decides to play in the second gadget
	(Fig.~\ref{fig:no-asc-chains}), then he can go from $q_4$ to $q_5$ on
	the $\#$ symbols. The windows that open on those transitions will all
	close within $b$ steps according to condition $C_2$ and the game moves
	back to $q_4$. So again, \eve wins for the objective \udirbnd.

	Now assume that \eve plays a word $w$ that violates either condition
	$C_1$ or condition $C_2$.
	First, if $w$ violates $C_1$, then \adam chooses the first gadget
	(Fig.~\ref{fig:infty-hash}), and just after \eve has played her last
	$\#$,  \adam moves from $q_1$ to $q_2$. As there will be no $\#$
	anymore, \adam can loop on $q_2$ and the window that he has opened will
	never close. Hence, \adam wins for the complement of the objective \bnd.
	Second, if $w$ violates $C_2$ then there exists an infinite sequence of
	indices $i_1 < i_2 < \dots < i_n < \dots$ such that $|w_{i_1}| <
	|w_{i_2}| < \dots < |w_{i_n}| < \dots$. Then \adam can read this sequence
	of sub-words using runs of the form $q_4 (q_5)^* q_4$. Each such run
	will open a window that closes at the end of the sub-word. But as the
	sequence of lengths of the sub-words is strictly increasing and
	infinite, \adam wins for the complement of the objective \bnd.
	
	Now, we will assume that \eve plays a word $w=\#w_1 \# w_2\# w_3\# \dots
	\# w_n \# \dots$ that respects conditions $C_1$ and $C_2$, and we
	consider what happens when \adam plays in the third gadget
	(Fig.~\ref{fig:sim-automaton}). As we will see, in this gadget, the
	best strategy for \adam is to simulate the accepting runs of
	$\mathcal{N}$.
	
	Assume first that automaton $\mathcal{N}$ is {\em non-universal}. Then
	by definition, there exists a finite word $w_1 \in \Sigma^*$ such that
	all accepting runs of $\mathcal{N}$ on $w_1$ have a non-negative value,
	i.e. $\mathcal{N}(w_1) \geq 0$. In that case, $w=(\#w_1)^{\omega}$ is a
	finite memory winning strategy for \eve for the objective \bnd. Indeed,
	if \adam simulates accepting runs on $w$ then the mean-payoff of the
	outcome is at least $\frac{0.5}{b} > 0$ as each new $\#$ brings
	$+\frac{1}{2}$ and we know that $\mathcal{N}(w_1)\geq0$. So \eve wins
	for the objective \udirbnd\/ by Theorem~\ref{thm:mpfm-then-bnd}, as \eve
	obtains a strictly positive mean-payoff bounded away from zero with a
	finite memory strategy. Now, if \adam chooses not to follow accepting
	runs then the game ends up in state $\bot$ from which the mean-payoff
	is equal to $1$, so we can conclude using similar arguments that \eve
	wins for the objective \udirbnd. Thus in all these cases, \eve wins for
	the objective \udirbnd.
	
	Finally, assume that automaton $\mathcal{N}$ is universal and let us
	show then that \adam has a winning strategy for the complement of the
	\bnd\/ objective. Indeed, if \eve plays a word $w=\#w_1 \# w_2\# w_3\#
	\dots \# w_n \# \dots$ that respects conditions $C_1$ and $C_2$, then we
	know that $\mathcal{N}(w_i)<0$ for each $i \leq 0$. On such word, \adam
	can follow accepting runs in the gadget of Fig.~\ref{fig:sim-automaton}. As
	the length between two consecutive $\#$ is at most $b$, we know that the
	mean-payoff of the run constructed by \adam is less than or equal to
	$\frac{-0.5}{b}$. It follows that \adam wins the complement of the \bnd\/
	objective as claimed, as \bnd\/ objective implies the mean-payoff
	objectives (as shown in Fig.~\ref{fig:relations} ).
\qed\end{proof}

\section{Missing proofs from Section~\ref{sec:dirfix-games}}

\sizeofF*
\begin{proof}
\begin{align*}
	|\mathcal{F}| &\le (W \cdot \lmax)^{|Q| \cdot \lmax} \\
	&= \Big(2^{\log(W \cdot \lmax)}\Big)^{|Q| \cdot \lmax} \\
	&= 2^{|Q| \cdot \lmax \cdot \log(W \cdot \lmax)}.
\end{align*}
\qed\end{proof}

\suppisreachable*
\begin{proof}
	($\Rightarrow$) We proceed by induction. We will show that for all $0
	\le j \le n$, for all $q_j \in \supp(f_j)$ there is a concrete path $q_0
	\sigma_0 \dots q_j$ such that $q_k \in o_k$ for all $1 \le k \le j$ and
	$q_0 \in \supp(\phi)$. Note that for $j = 0$ the claim trivially holds.
	Assume the claim holds for $j$. From the definition of
	$\sigma$-successor and $\supp^{-1}$ we have that $\supp(f_{j+1}) =
	\post_{\sigma_{j}}(\supp(f_j)) \subseteq o_{j+1}$. This means that for
	all $q_{j+1} \in \supp(f_{j+1})$ there must be some $q_j \in \supp(f_j)$
	such that $(q_j, \sigma_j, q_{j+1}) \in \Delta$. Hence any $q_{j+1}$ is
	reachable from some $q_j$ via $\sigma_j$ which, by inductive hypothesis,
	is in turn reachable from some $q_0 \in \supp(\phi)$ via a concrete path
	of the desired form.

	($\Leftarrow$) We now show -- once more by induction on $j$ -- that for
	all $0 \le j \le n$, if there is a concrete path $q_0 \sigma_0 \dots
	q_j$ such that $q_0 \in \supp(\phi)$ and $q_k \in o_k$ for all $1 \le k
	\le j$, then $q_j \in \supp(f_j)$. The claim holds for $j = 0$. Assume
	that it holds for some $j$. From the assumptions we have that
	$(q_j, \sigma_j, q_{j+1}) \in \Delta$ and $q_{j+1} \in o_{k+1}$.
	Further, we know that $q_j \in \supp(f_j)$ by inductive hypothesis.
	Hence, $q_{j+1} \in \post_{\sigma_{j}}(\supp(f_j)) \subseteq o_{j+1}$
	which means that $q_{j+1} \in \supp(f_{j+1})$.
\qed\end{proof}

\thekey*

Instead of directly providing a proof of Lemma~\ref{lem:the-key}, we prove a
more general result below. Consider the three conditions stated in
Lemma~\ref{lem:magic-lemma}. We shall prove that that C1 $\Rightarrow$ C2
$\Rightarrow$ C3 $\Rightarrow$ C1. Since C1 corresponds to having a window of
length $l$ open at $p$ from $\supp(\phi)$, the desired result follows from
transitivity.

\begin{lemma}\label{lem:magic-lemma}
	Let $\rho = o_0 \sigma_0 \dots o_n$ be an abstract path, $\phi \in
	\mathcal{F}$ such that $\supp(\phi) \subseteq o_0$ and $\supp^{-1}(\rho,
	\phi) = f_0 \sigma_0 \dots f_n \in (\mathcal{F} \cdot \Sigma)^*$.
	Given state $p \in \supp(f_n)$ and $1 \le l \le \lmax$ such that $l \le
	n$, let $\lambda = n - l$. The following three statements are
	equivalent.
	\begin{enumerate}[C1.]
		\item There is a concrete path $q_0 \sigma_0 \dots q_n \in
			\gamma(\rho)$ with $q_n = p$ and $q_0 \in \supp(\phi)$
			and
			\[
				\sum_{j = n - l}^{m}
				w(q_j,\sigma_j,q_{j+1}) < 0
			\]
			for all $n - l \le m < n$.
		\item $f_n(p)_l < 0$.
		\item There is a concrete path $q_0 \sigma_0 \dots q_n 
			\in \gamma(\rho)$ with
			$q_n = p$ and $q_0 \in \supp(\phi)$ such that 
		\begin{enumerate}[(a)]
			\item $f_{j}(q_{j})_{j - \lambda}  < 0$
				for all $\lambda < j \le n$, and
			\item $f_{k}(q_{k})_{j - \lambda} +
				w(q_k,\sigma_k,q_{k+1}) = f_{k+1}(q_{k+1})_{k -
				\lambda + 1}$ for all $\lambda < k < n$.
		\end{enumerate}
	\end{enumerate}
\end{lemma}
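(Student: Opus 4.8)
The plan is to prove the cycle of implications C1 $\Rightarrow$ C2 $\Rightarrow$ C3 $\Rightarrow$ C1, exploiting the recursive structure of the $\sigma$-successor definition. The key conceptual point is that $f_n(p)_l$ records, via the $\zeta$ recursion, the \emph{minimum} over all concrete paths ending at $p$ of the running weight of a window of length $l$, \emph{restricted} to those concrete paths along which every intermediate partial sum stays strictly negative (this is exactly the role of the side condition $f_1(p)_{j-1} < 0$ in the first case of $\zeta$, together with the $\min\{0,\cdot\}$ and $\max\{-W\lmax,\cdot\}$ clipping). So C2 says ``this clipped minimum is negative'', C1 says ``there is a witnessing path realizing a negative window with all prefixes negative'', and C3 is the same witness made explicit with the bookkeeping that each $f_j(q_j)_{j-\lambda}$ along the witness path is negative and that the $\zeta$-minimum is actually attained edge-by-edge along this path.

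For C1 $\Rightarrow$ C2, I would take the concrete path $q_0\sigma_0\dots q_n$ from C1 and show by induction on $m$ from $\lambda=n-l$ up to $n$ that $f_m(q_m)_{m-\lambda} \le \sum_{j=\lambda}^{m-1} w(q_j,\sigma_j,q_{j+1})$; the base case $m=\lambda$ is immediate since $f_\lambda(q_\lambda)_0$ — wait, actually the window index there is $0$, so one starts the induction at $m = \lambda+1$ with the ``otherwise'' branch of $\zeta$ giving $f_{\lambda+1}(q_{\lambda+1})_1 \le w(q_\lambda,\sigma_\lambda,q_{\lambda+1}) \le 0$ (negative by C1), and then the inductive step uses the $j\ge 2$ branch of $\zeta$: since the inductive hypothesis gives $f_m(q_m)_{m-\lambda} < 0$ (strictly, again by C1's hypothesis on partial sums), $q_m$ is an admissible minimand, so $f_{m+1}(q_{m+1})_{m-\lambda+1} \le f_m(q_m)_{m-\lambda} + w(q_m,\sigma_m,q_{m+1})$, and the clipping by $\max\{-W\lmax,\cdot\}$ only makes it smaller or equal. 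Crucially the clipping lower bound $-W\lmax$ is never binding here because the true window sum over $l \le \lmax$ edges of weight $\ge -W$ is $\ge -W\lmax$. Setting $m=n$ yields $f_n(p)_l \le \sum_{j=\lambda}^{n-1} w(q_j,\sigma_j,q_{j+1}) < 0$.

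For C2 $\Rightarrow$ C3, I would reverse-engineer the witness path by ``following the minimizers'': starting from $p$ at index $n$ with $f_n(p)_l < 0$, the $j\ge 2$ branch of $\zeta$ guarantees a predecessor $q_{n-1}\in\supp(f_{n-1})$ with $(q_{n-1},\sigma_{n-1},p)\in\Delta$, $f_{n-1}(q_{n-1})_{l-1} < 0$, and $f_{n-1}(q_{n-1})_{l-1} + w(q_{n-1},\sigma_{n-1},p) = f_n(p)_l$ (the equality because $f_n(p)_l<0$ means the $\min\{0,\cdot\}$ and $\max\{-W\lmax,\cdot\}$ clips are inactive on the chosen minimand — here again $-W\lmax$ is safe). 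Iterate backwards down to index $\lambda$, obtaining $q_\lambda$ with the edge-by-edge equalities (3b) and the negativity conditions (3a); then extend the prefix $q_0\dots q_\lambda$ arbitrarily backwards to $\supp(\phi)$ using Lemma~\ref{lem:supp-is-reachable}, which applies because $q_\lambda \in \supp(f_\lambda)$. For C3 $\Rightarrow$ C1, simply observe that telescoping the equalities (3b) from $k=\lambda$ (base: $f_{\lambda+1}(q_{\lambda+1})_1 = w(q_\lambda,\sigma_\lambda,q_{\lambda+1})$) up to any $m$ gives $f_{m+1}(q_{m+1})_{m-\lambda+1} = \sum_{j=\lambda}^{m} w(q_j,\sigma_j,q_{j+1})$, and (3a) says each such value is $<0$, which is precisely the partial-sum condition of C1; the path is in $\gamma(\rho)$ and ends at $p=q_n$ by construction.

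The main obstacle I anticipate is handling the two clipping operations ($\min$ with $0$ and $\max$ with $-W\lmax$) cleanly and uniformly. One must argue in C1 $\Rightarrow$ C2 that the clips only help the inequality, and in C2 $\Rightarrow$ C3 and C3 $\Rightarrow$ C1 that whenever a stored value $f_j(q_j)_i$ is strictly negative the $\min\{0,\cdot\}$ clip was inactive, and that the $-W\lmax$ floor is never reached for genuine windows of length $\le \lmax$ (because $|w| \le W$ on each edge, so any sum of at most $\lmax$ weights is $\ge -W\lmax$); this last point is what makes the floor ``transparent'' along witness paths and is the subtle bit of bookkeeping. A secondary care point is the off-by-one indexing at the window boundary $j=1$ versus $j\ge 2$ in $\zeta$, i.e. correctly aligning $f_j(q_j)_{j-\lambda}$ with partial sums $\sum_{i=\lambda}^{j-1}$, which I would pin down once in the base case and then propagate.
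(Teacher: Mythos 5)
Your proposal is correct and follows essentially the same route as the paper: the same cycle C1 $\Rightarrow$ C2 $\Rightarrow$ C3 $\Rightarrow$ C1, with an upper-bound induction along the witness path for C1 $\Rightarrow$ C2, a backward ``follow the minimizers'' construction (completed via Lemma~\ref{lem:supp-is-reachable}) for C2 $\Rightarrow$ C3, and telescoping of the edge-by-edge equalities for C3 $\Rightarrow$ C1. If anything, you are more explicit than the paper about two points it glosses over: that strict negativity of $f_m(q_m)_{m-\lambda}$ is what makes $q_m$ an admissible minimand in the $\zeta$ recursion, and that the $-W\cdot\lmax$ floor is never binding for windows of length at most $\lmax$.
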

\begin{proof}
	(C3 $\Rightarrow$ C1)
	We will apply induction on $m$. From the definition of
	$\sigma$-successor we have that $f_{\lambda + 1}(q_{\lambda + 1})_1 =
	\min\{0, w(q_\lambda, \sigma_\lambda, q_{\lambda +
	1})\}$. From assumption $(a)$ we know that $f_{\lambda + 1}(q_{\lambda
	+ 1})_1 < 0$. Thus, the claim holds for $m = \lambda$. Assume it holds
	for $m$. To conclude the proof, we now show that the claim holds for
	$m+1$ as well.
	\begin{align*}
		\sum_{j = n - l}^{m + 1} w(q_j,\sigma_j,q_{j+1})
			&= f_m(q_m)_{m-\lambda} + w(q_m,\sigma_m,q_{m+1}) &
			\text{ind.  hyp.}\\
		&= f_m(q_m)_{m-\lambda + 1} & \text{from $(b)$}\\
		&< 0 & \text{from $(a)$.}
	\end{align*}

	(C1 $\Rightarrow$ C2)
	We show, by induction on $m$, that for all $\lambda \le m < n$
	\[
		f_{m+1}(q_{m_1})_{m - \lambda + 1} \le \sum_{j = \lambda}^{m}
		w(q_j,\sigma_j,q_{j+1}).
	\]
	The desired result follows. As the base case, consider $m = \lambda$ and
	note that by definition of $\sigma$-successor we have that
	\begin{align*}
		f_{\lambda+1}(q_{\lambda+1})_1 &= \min(\{0\} \cup
			\{w(p,\sigma_{\lambda},q_{\lambda+1}) \st p \in
			\supp(f_\lambda) \land (p,\sigma_{\lambda},q_{\lambda + 1})
			\in \Delta\}) \\
		&\le w(q_\lambda,\sigma_\lambda,q_{\lambda+1}).
	\end{align*}
	Thus the claim holds. Assume that the claim is true for $m$. From the
	definition of $\sigma$-successor we have that
	\[
		f_{m+2}(q_{m+2})_{\lambda - m + 2} 
		\le f_{m+1}(q_{m+1})_{m - \lambda + 1} + w(q_{m+1},
		\sigma_{m+1}, q_{m+2}).
	\]
	From the inductive hypothesis we get have that the right hand side of
	the inequality is equivalent to
	\[
		\sum_{j = \lambda}^{m + 1} w(q_j,\sigma_j,q_{j+1}).
	\]
	Thus the claim holds for $m+1$ as well.

	(C2 $\Rightarrow$ C3)
	We inductively construct a concrete path $q_0 \sigma_0 \dots q_n
	\gamma(\rho)$ with $q_n = p$ and $q_0 \in \supp(\phi)$ such that
	\begin{enumerate}[(1)] 
		\item $f_{n - k}(q_{n - k})_{l - k} < 0$ for all $0 \le k < l$,
			and
		\item $f_{n-k}(q_{n-k})_{l-k} = w(q_{n-k+1},\sigma_{n-k+1},
			q_{n-k}) + f_{n-k+1}(q_{n-k+1})_{l-k+1}$ for all $1 \le
			k < l$.
	\end{enumerate}
	As these conditions are equivalent to $(a)$-$(b)$ from C3, the result
	follows. Note that for $k = 0$ we have that $(1)$ holds trivially since
	$p \in \supp(f_n)$ and $f_n(p)_l < 0$ by hypothesis.  If
	$f_{n-k}(q_{n-k})_{l-k} < 0$ then, by definition of $\sigma$-successor,
	it follows that there is some $q' \in \supp(f_{n-k+1}) \subseteq
	o_{n-k+1}$ such that $f_{n-k+1}(q')_{l-k+1} < 0$ and
	$f_{n-k}(q_{n-k})_{l-k} = w(q_{n-k+1},\sigma_{n-k+1}, q_{n-k}) +
	f_{n-k+1}(q_{n-k+1})_{l-k+1}$. In other words, $q'$ is the source of the
	minimal $\sigma_{n-k+1}$-transition of a state from $\supp(f_{n-k+1})$
	to $q_{n-k}$. Let $q_{n-k+1} = q'$. Continue in this fashion defining
	every $q_i$ up to $q_{n-l}$. Now, from
	Lemma~\ref{lem:supp-is-reachable}, we have that $q_{n-l}$ is reachable
	from some state in $\supp(\phi)$ via a concrete path of the desired
	form.  Any such path is a valid prefix for the sequence $q_{n-l}
	\sigma_{n-l} \dots q_n$ we constructed above.
\qed\end{proof}

\upreisuc*
\begin{proof}
	We have that for all $\sigma$, there is $h_\sigma \in S$ such that $(f,
	\sigma, h_\sigma) \in \Delta'$. By construction of $\Delta'$ we also
	know that there is $i_\sigma$ such that $(g, \sigma, i_\sigma) \in
	\Delta'$, and furthermore, since $\supp(f) \subseteq \supp(g)$, we get that
	\begin{align*}
		\supp(h_\sigma) &= \post_\sigma(\supp(f)) \cap o \\
		&\subseteq \post_\sigma(\supp(g)) \cap o \\
		&= \supp(i_\sigma)
	\end{align*}
	for some $o \in Obs$. Note that:
	\begin{enumerate}[(1)]
		\item since $f,g \not\in \calU$, then $f(p)_\lmax =
			g(p)_\lmax = 0$ for all $p \in \supp(f)$; and
		\item $i_\sigma(q)_1 = h_\sigma(q)_1$ for all $q \in
			\supp(h_\sigma)$.
	\end{enumerate}
	From $(1)$ and since $f \preceq g$, there is a function $\alpha :
	\{1,\dots,\lmax\} \to \{1,\dots,\lmax - 1\}$ such that for all $1 \le
	x < \lmax$ we have that $\alpha(x) \ge x$ and $f(p)_x \ge
	g(p)_{\alpha(x)}$ holds for all $p \in \supp(f)$. Observe that for all
	$q \in \supp(h_\sigma)$ and any $2 \le x \le \lmax$, we have that
	\begin{align*}
		h_\sigma(q)_x &= \min\limits_{p \in \supp(f)}(\{0\} \cup \{
			f(p)_{x-1} + w(p,\sigma,q) \st f(p)_{x-1} < 0\} \\
		&\ge \min\limits_{p \in \supp(f)}(\{0\} \cup \{
			g(p)_{\alpha(x-1)} + w(p,\sigma,q) \st
			g(p)_{\alpha(x-1)} < 0\} \\
		&\ge i_\sigma(q)_{\alpha(x-1)+1}.
	\end{align*}
	It follows that $h_\sigma \preceq i_\sigma$ and that, since $S$ is
	upward-closed, $i_\sigma \in S$. Thus, we have shown that for all
	$\sigma$, there is $i_\sigma \in S$ such that $(g, \sigma, i_\sigma) \in
	\Delta'$, which implies that $g \in \upre(S)$.
\qed\end{proof}

\section{Proof of Theorem~\ref{thm:ac-dirfix}}

\begin{lemma}
	\label{lem:upre-ac}
	Given upward-closed set $S \in \pow(\mathcal{F})$, $\lfloor \upre \rfloor
	(\lfloor S \rfloor) = \lfloor \upre (S) \setminus \calU \rfloor$.
\end{lemma}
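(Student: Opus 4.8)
The lemma asserts that the antichain-based operator $\lfloor \upre \rfloor$ applied to the minimal-elements antichain $\lfloor S \rfloor$ of an upward-closed set $S$ produces exactly the minimal elements of $\upre(S) \setminus \calU$. This is the key correctness step needed to justify that the symbolic fixpoint computation in Theorem~\ref{thm:ac-dirfix} actually computes (a succinct representation of) the least fixpoint $\mu X. (\calU \cup \upre(X))$ from Proposition~\ref{pro:win-lose-safe}, restricted appropriately.

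**The plan.** I would prove the two sets $\lfloor \upre \rfloor(\lfloor S \rfloor)$ and $\lfloor \upre(S) \setminus \calU \rfloor$ are equal by showing their upward-closures coincide and both are antichains (both facts are immediate for the right-hand side; for the left-hand side the $\lfloor \cdot \rfloor$ in its definition makes it an antichain by construction). So it suffices to show
\[
	\lfloor \upre \rfloor(\lfloor S \rfloor) \upclose = (\upre(S) \setminus \calU) \upclose.
\]
Actually, cleaner: I would show directly that the set $T := \{ p' \in \mathcal{F} \setminus \calU \st \forall \sigma, \exists q' \in \lfloor S \rfloor, \exists r' : (p',\sigma,r') \in \Delta' \land q' \preceq r'\}$ (the set before taking $\lfloor \cdot \rfloor$ in the definition of $\lfloor \upre \rfloor$) satisfies $T = \upre(S) \setminus \calU$; then taking $\lfloor \cdot \rfloor$ of both sides gives the claim.

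**Key steps, in order.** First, $(\subseteq)$: take $p' \in T$. For each $\sigma$, there is $q' \in \lfloor S \rfloor$ and $r'$ with $(p',\sigma,r') \in \Delta'$ and $q' \preceq r'$. Since $q' \in \lfloor S \rfloor \subseteq S$ and $S$ is upward-closed and $q' \preceq r'$, we get $r' \in S$. Hence $(p',\sigma,r') \in \Delta'$ with $r' \in S$, so $p' \in \upre(S)$; and $p' \notin \calU$ by definition of $T$. Second, $(\supseteq)$: take $p' \in \upre(S) \setminus \calU$. For each $\sigma$ there is $r'_\sigma \in S$ with $(p',\sigma,r'_\sigma) \in \Delta'$. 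Since $\lfloor S \rfloor$ is the antichain of minimal elements and $r'_\sigma \in S = \lfloor S \rfloor \upclose$, there is $q' \in \lfloor S \rfloor$ with $q' \preceq r'_\sigma$. Taking $r' = r'_\sigma$, this witnesses the membership condition of $T$ for this $\sigma$. As this holds for every $\sigma$, $p' \in T$. Finally, apply $\lfloor \cdot \rfloor$: $\lfloor \upre \rfloor(\lfloor S \rfloor) = \lfloor T \rfloor = \lfloor \upre(S) \setminus \calU \rfloor$.

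**Main obstacle.** The set-equality $T = \upre(S)\setminus\calU$ is essentially bookkeeping with the definitions, so the real content is checking that $\lfloor \cdot \rfloor$ can be pushed through — i.e. that $\lfloor \upre \rfloor(\lfloor S\rfloor)$ was defined precisely as $\lfloor T \rfloor$ — and that the two sides being equal as sets forces their minimal-element antichains to be equal (true because $\lfloor \cdot \rfloor$ depends only on the set, and here we're literally applying it to the same set). One subtlety worth a sentence: in the $(\supseteq)$ direction I should note that $\upre(S)$ itself need not be upward-closed (per the Example), so I genuinely need the intersection with $\mathcal{F} \setminus \calU$ and should invoke Lemma~\ref{lem:upre-is-uc} only if I also want $\upre(S)\setminus\calU$ to be upward-closed — but for this particular lemma I don't even need upward-closedness of the result, just the plain set equality, so the argument stays elementary. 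I would flag that Lemma~\ref{lem:upre-is-uc} is what makes $\lfloor T \rfloor$ a faithful succinct representation in the subsequent fixpoint argument, but it is not needed inside this proof.
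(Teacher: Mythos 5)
Your proof is correct and takes essentially the same approach as the paper: both directions rest on the same two observations (upward-closedness of $S$ lifts a witness $q'\preceq r'$ with $q'\in\lfloor S\rfloor$ to $r'\in S$, and conversely every $r'\in S$ lies above some element of $\lfloor S\rfloor$). Your organization is in fact slightly cleaner than the paper's: by proving the plain set equality $T=\upre(S)\setminus\calU$ before applying $\lfloor\cdot\rfloor$, you make the paper's separate third step (transferring minimality between the two floored sets) unnecessary.
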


\begin{proof}
	We first show that if $f \in \lfloor \upre \rfloor (\lfloor S \rfloor)$
	then $f \in \upre (S) \setminus \calU$. We have that $f \not\in \calU$
	and $\forall \sigma \in \Sigma, \exists q' \in \lfloor S \rfloor,
	\exists r'_\sigma \in \mathcal{F} : (f, \sigma, r'_\sigma) \in \Delta'
	\text{ and } q' \preceq r'_\sigma$. Since $S$ is upward-closed and $q'
	\preceq r'_\sigma$, we know that $r'_\sigma \in S$. Hence, we get that
	$\forall \sigma \in \Sigma, \exists r'_\sigma \in S : (f, \sigma,
	r'_\sigma) \in \Delta'$, which implies that $f \in \upre(S) \setminus
	\calU$.

	Next, we show that if $f \in \lfloor \upre(S) \setminus \calU \rfloor$
	then $f \in \{p' \in \mathcal{F} \setminus \calU \st \forall \sigma \in
	\Sigma, \exists q' \in a, \exists r' \in \mathcal{F} : (p', \sigma, r')
	\in \Delta' \text{ and } q' \preceq r'\}$. We know that $f \not\in
	\calU$ and $\forall \sigma \in \Sigma, \exists r' \in Q : (f, \sigma,
	r') \in \Delta'$. By definition of $\lfloor S \rfloor$, we know there is
	$q_{r'} \in \lfloor S \rfloor$ such that $q_{r'} \preceq r'$. Thus, we
	get that $\forall \sigma \in \Sigma, \exists r'_\sigma \in S, \exists
	q_{r'} \in \lfloor S \rfloor : (f, \sigma, r') \in \Delta'$ and $q_{r'}
	\preceq r'$.

	Finally, we note that if $f \in \lfloor \upre \rfloor (\lfloor S \rfloor)$
	then not only is it true that $f \in \upre (S) \setminus \calU$, but
	furthermore $f \in \lfloor \upre (S) \setminus \calU \rfloor$. Indeed,
	if this were not the case, then there would be $g \in \lfloor \upre (S)
	\setminus \calU \rfloor$ such that $g \preceq f$ and $f \neq g$. Then,
	by the argument explained in the previous paragraph, this would
	contradict minimality of $f$ in $\lfloor \upre \rfloor (\lfloor S
	\rfloor)$.  Similarly, if $f \in \lfloor \upre(S) \setminus \calU
	\rfloor$ then $f \in \lfloor \upre \rfloor (\lfloor S \rfloor)$, as
	otherwise, by the argument from the first paragraph of the proof,
	minimality in the first set would be contradicted. Thus, the claim
	holds.
\qed\end{proof}

\acalgodirfix*

\begin{proof}
	We note that for any upward-closed set $S \subseteq \mathcal{F}$ such
	that $\calU \subseteq S$ we have, from Lemma~\ref{lem:upre-is-uc} that
	$\calU \cup \upre(S)$ is again upward-closed and a superset of $\calU$.
	In fact, it holds that \begin{align*} \calU \cup \upre(S) &= (\lfloor
		\calU \rfloor \sqcup \lfloor \upre(S) \rfloor) \upclose & \\ &=
		(\lfloor \calU \rfloor \sqcup \lfloor \upre \rfloor (\lfloor S
		\rfloor) \upclose & \text{ from Lemma~\ref{lem:upre-ac}}. \\
	\end{align*}

	It is easy to show by induction that $\mu X . (\calU \cup \upre(X)) =
	\big(\mu X. (\lfloor \calU \rfloor \sqcup \lfloor \upre \rfloor (\lfloor
	X \rfloor))\big) \upclose$. Thus, $\{q_I'\} \not\sqsupseteq \mu X.
	(\lfloor \calU \rfloor \sqcup \lfloor \upre \rfloor(\lfloor S \rfloor))$
	if and only if $q_I \not\in \mu X.(\calU \cup \upre(S))$. From
	Proposition~\ref{pro:win-lose-safe} and Lemmas~\ref{lem:dirfix-cor}
	and~\ref{lem:dirfix-comp} we know this is the case if and only if \eve
	has a winning strategy in the safety game in $G'$ if and only if she
	wins the $\dirfix(\lmax)$~objective in $G$.
\qed\end{proof}

\section{Proof of Proposition~\ref{pro:fix-lang}}
\begin{proof}
	($\Rightarrow$) Assume $\mathcal{N}$ accepts $\psi$. Let $r = (q_0, i_0,
	n_0) (\sigma_0,o_1) (q_1, i_1, n_1) (\sigma_1, o_2) \dots$ be one of
	the accepting runs of the automaton on $\psi$. By construction of
	$\mathcal{N}$ we have that $q_0 \sigma_0 q_1 \sigma_1 \dots \in
	\gamma(\psi)$. Let $\pi_r$ denote this concrete play and $J = \{ j_0,
	j_1, j_2, \dots \}$ be an infinite set of indices such that $j_k <
	j_{k+1}$ and  $(q_{j_k}, i_{j_k}, n_{j_k})$ is accepting for all $k \ge
	0$. Such a sequence is guaranteed to exist since $r$ is accepting. One
	can easily verify by induction on the definition of $\Delta''$ that
	for all $k \ge 0$ it holds that $\pi_r \not\in \gw(i_{j_k} - \lmax,
	\lmax)$.
	It follows that $\forall m \ge 0, \exists n \ge m : \pi_r \not\in \gw(n,
	\lmax)$, which concludes our argument.

	($\Leftarrow$) Assume that $\psi = o_0 \sigma_0 o_1 \sigma_1
	\dots \not\in \fix(\lmax)$. Let $\pi = q_0 \sigma_0 q_1 \sigma_1 \in
	\gamma(\psi)$ be the concrete play such that for infinitely many $i$ it
	is the case that $\pi \not\in \gw(i, \lmax)$. We describe the infinite
	run of $\mathcal{N}$ on $\psi$ that accepts. Let $J = \{ j_0, j_1, j_2,
	\dots \}$ be an infinite set of indices such that $j_k + \lmax <
	j_{k+1}$ and $\pi \not\in \gw(j_k, \lmax)$ for $k \ge 0$. The sequence is
	guaranteed to exist because of our choice of $\pi$. Observe that this
	implies there is a run $r = (q_0, i_0, n_0)$ $(\sigma_0,o_1)$ $(q_1, i_1,
	n_1)$ $(\sigma_1, o_2) \dots$ of the automaton where for all $k \ge 0$ we
	have that $n_{j_k + 1} = w(q_{j_k},\sigma_{j_k},q_{j_k + 1})$ and for
	all $1 < l < \lmax$ then 
	\[
		n_{j_k + l} = n_{j_k + l - 1} + w(q_{j_k + l},\sigma_{j_k +
		l},q_{j_k + l + 1}).
	\]
	Furthermore, in this run it holds that for all $k \ge 0$ we have
	$i_{j_k + \lmax} = \lmax$. Hence, said run is such that for all $k \ge
	0$ the state $(q_{j_k + \lmax}, i_{j_k + \lmax}, n_{j_k + \lmax})$ is
	accepting. We conclude that the automaton accepts $\psi$.
\qed\end{proof}

\section{Proof of Proposition~\ref{pro:ufix-lang}}
\begin{proof}
	($\Rightarrow$) Assume $\mathcal{N}'$ accepts $\alpha$. Let $r = (q_0,
	q'_0 i_0, n_0)$ $(\sigma_0,o_1,s_1)$ $(q_1, q'_1 i_1, n_1)$ $(\sigma_1, o_2,
	s_2) \dots$ be one of the accepting runs of the automaton on $\alpha$.
	By construction of $\mathcal{N}'$ we have that $q'_0 \sigma_0 q'_1
	\sigma_1 \dots \in \gamma(\psi)$. Let $\pi_r$ denote this concrete
	play, $J = \{ j_0, j_1, \dots \}$ and $K = \{ k_0, k_1, \dots \}$ be
	two infinite sets of indices such that $j_l < j_{l+1}$ and $j_l + \lmax
	\le k_l$, for all $l \ge 0$, and for which we know that
	\begin{itemize}
		\item $(q_{k_l},q'_{k_l}, i_{k_l}, n_{k_l})$ is accepting for
			all $l \ge 0$, and
		\item $n_{j_l + \lmax} < 0 \land i_{j_l + \lmax} = \lmax$.
	\end{itemize}
	Such sequences are guaranteed to exist since $r$ is accepting. Assuming
	the correctness of $\mathcal{B}$, one can easily verify by induction on
	the definition of $\Delta'''$ that for all $l \ge 0$ we have that
	$\pi_r$, at $k_l$ merges with a path having a violation at $j_l$. It
	follows that $\pi_r$ merges with infinitely many violating paths. From
	Lemma~\ref{lem:ufix-eq-infopenwin} we get that $\psi \not\in
	\ufix(\lmax)$.

	($\Leftarrow$) Assume that $\psi = o_0 \sigma_0 o_1 \sigma_1
	\dots \not\in \ufix(\lmax)$. Let $\pi = q_0 \sigma_0 q_1 \sigma_1 \in
	\gamma(\psi)$ be the concrete play that merges with infinitely many
	violating paths (see Lemma~\ref{lem:ufix-eq-infopenwin}. We describe the
	infinite run of $\mathcal{N}$ on $\alpha = \mathcal{B}(\psi)$ that
	accepts. Let $J = \{ j_0, j_1, \dots \}$ and $K = \{ k_0, k_1, \dots
	\}$ be two infinite sets of indices such that $j_l < j_{l+1}$ and $j_l +
	\lmax + 1 < k_l$, for all $l \ge 0$, and for which we know that there is
	some $\chi_l \in \gamma(\psi[..k_l])$ such that $\pi[k_l] = \chi_l[k_l]$
	and for all $j_k < m \le j_k + \lmax + 1$ we have $w(\chi[j_k..m]) < 0$.
	The sequences are guaranteed to exist because of our choice of $\pi$.
	Observe that this implies there is a run $r = (q_0,q'_0 i_0, n_0)
	(\sigma_0,o_1,s_1) (q_1, q'_1 i_1, n_1) (\sigma_1, o_2,s_2) \dots$ of the
	automaton where for all $l \ge 0$ we have that $n_{j_l + 1} =
	w(q_{j_l},\sigma_{j_l},q_{j_l + 1})$ and for all $1 < b < \lmax$ then 
	\[
		n_{j_l + b} = n_{j_l + b - 1} + w(q_{j_l + b},\sigma_{j_l +
		b},q_{j_l + b + 1}).
	\]
	Furthermore, we have that $n_{j_l + b} = \top$ for all $\lmax \le b \le
	k_l$ and $q_{k_l} = q'_{k_l}$.  Hence, said run is such that for all $l
	\ge 0$ the state $(q_{k_l},q'_{k_l}, i_{k_l}, n_{k_l})$ is accepting. We
	conclude that the automaton accepts $\psi$.
\qed\end{proof}

\end{document}